\newcommand{\rquot}[2]{\raisebox{0.5ex}{$#1$}\!/\!\raisebox{-0.5ex}{$#2$}}
\titleformat{\section}{\large\bfseries\filcenter}{\thesection}{1em}{}
\titleformat{\subsection}{\bfseries}{\thesubsection}{1em}{}
\newtheorem{theorem}{Theorem}[section]
\newtheorem{corollary}[theorem]{Corollary}
\newtheorem{lemma}[theorem]{Lemma}
\newtheorem{proposition}[theorem]{Proposition}
\theoremstyle{remark}
\theoremstyle{remarks}
\theoremstyle{definition}
\newtheorem{remark}[theorem]{Remark}
\newtheorem{remarks}[theorem]{Remarks}
\newtheorem{definition}[theorem]{Definition}
\newtheorem{example}[theorem]{Example}
\newtheorem{setup}[theorem]{Setup}
\numberwithin{equation}{section}
\renewcommand\thanks[1]{%
  \begingroup
  \renewcommand\thefootnote{}\footnote{#1}%
  \addtocounter{footnote}{-1}%
  \endgroup
}
\renewcommand{\tilde}{\widetilde}
\renewcommand{\epsilon}{{\varepsilon}}
\newcommand{\ie}{\textit{i.e. }}
\newcommand{\cf}{\textit{cf. }}
\newcommand{\KK}{{\mathbb{K}}}
\newcommand{\RR}{{\mathbb{R}}}
\newcommand{\CC}{{\mathbb{C}}}
\newcommand{\n}{{\mathtt{n}}}
\renewcommand{\hat}{\widehat}
\newcommand{\Dir}{{\mathsf{D}}}
\newcommand{\G}{\mathsf{G}}
\newcommand{\Id}{{\operatorname{Id}}}
\newcommand{\sol}{\mathsf{Sol}\,}
\newcommand{\E}{\mathsf{E}}
\newcommand{\I}{\mathsf{I}}
\renewcommand{\L}{\mathsf{L}}
\newcommand{\M}{\mathsf{M}}
\renewcommand{\H}{\mathsf{H}}
\renewcommand{\P}{\mathsf{P}}
\newcommand{\R}{\mathsf{R}}
\renewcommand{\S}{\mathbb{S}}
\newcommand{\T}{\mathsf{T}}
\newcommand{\Z}{\mathsf{Z}}
\newcommand{\bM}{{\partial\M}}
\newcommand{\bSigma}{{\partial\Sigma}}
\newcommand{\oS}{\mathsf{S}}
\newcommand{\oB}{\mathsf{B}}
\newcommand{\f}{{\mathfrak{f}}}
\newcommand{\h}{{\mathfrak{h}}}
\def\bkappa{\kappa^f}
\newcommand{\fR}{\mathfrak{R}}
\newcommand{\fA}{{\mathfrak{A}}}
\renewcommand{\aa}{{\mathfrak{a}}}
\newcommand{\vol}{{\textnormal{vol}\,}}
\newcommand{\End}{{\textnormal{End}\,}}
\newcommand{\supp}{{\textnormal{supp\,}}}
\newcommand{\Spin}{\textnormal{Spin}}
\newcommand{\fiber}[2]{\prec  #1\,|\, #2  \succ}
\newcommand{\scalar}[2]{(#1\, |\, #2)}
\definecolor{NiColor}{RGB}{0,128,0}
\begin{document}

\begin{flushright}

\baselineskip=4pt

\end{flushright}

\begin{center}
\vspace{5mm}

{\Large\bf  M\O LLER OPERATORS AND HADAMARD STATES\\[4mm]
FOR DIRAC FIELDS WITH MIT BOUNDARY CONDITIONS}

\vspace{5mm}

{\bf by}

\vspace{5mm}

{  \bf  Nicol\`o Drago}\\[1mm]
\noindent  {\it Dipartimento di Matematica, Universit\`a di Trento, 38050 Povo (TN), Italy}\\[1mm]
email: \ {\tt  nicolo.drago@unitn.it}\\

\vspace{5mm}
{  \bf  Nicolas Ginoux}\\[1mm]
\noindent  {\it Universit\'e de Lorraine, CNRS, IECL, F-57000 Metz, France}\\[1mm]
email: \ {\tt  nicolas.ginoux@univ-lorraine.fr}\\

\vspace{5mm}

{  \bf  Simone Murro}\\[1mm]
\noindent  {\it Laboratoire de Mathématiques d’Orsay, Université Paris-Saclay, 91405 Orsay, France}\\[1mm]
email: \ {\tt  simone.murro@u-psud.fr}
\\[10mm]
\end{center}

\begin{abstract}
The aim of this paper is to prove the existence of Hadamard states for Dirac fields coupled with MIT boundary conditions  on any globally hyperbolic manifold with timelike boundary.
 This is achieved by introducing a geometric
M\o ller operator which implements a unitary isomorphism between the spaces 
of $L^2$-initial data 
of particular symmetric systems we call weakly-hyperbolic and which are coupled 
with admissible boundary 
conditions.
In particular, we show that for Dirac fields with MIT boundary 
conditions, this isomorphism can be lifted to a $*$-isomorphism between the 
algebras of Dirac fields and that any Hadamard state can be pulled back along this 
$*$-isomorphism preserving the singular structure of its two-point 
distribution. 
\end{abstract}

\paragraph*{Keywords:}  M\o ller operators, deformation arguments, Cauchy problem, symmetric weakly-hyperbolic systems, algebraic quantum field theory, Hadamard states, globally hyperbolic  manifolds with timelike boundary.
\paragraph*{MSC 2020: } Primary 35L50, 58J45, 35Q41;  Secondary  53C27, 53C50, 
81T05. 
\\[0.5mm]

\tableofcontents

\renewcommand{\thefootnote}{\arabic{footnote}}
\setcounter{footnote}{0}

\section{Introduction}
The initial value problem for a symmetric hyperbolic system on a
Lorentzian manifold $\M$ is a classical problem which has been exhaustively
studied in many contexts. 
If the underlying background is \textit{globally hyperbolic}, a complete answer 
is known: In~\cite{BaGreen} it has been shown that the Cauchy problem is 
well-posed for any smooth initial data.
Even if there exists a plethora of models in physics where globally hyperbolic
spacetimes have been used as a background, there also exist many applications
which require a manifold with non-empty boundary. Indeed, recent developments in quantum field theory focused their attention on manifolds with timelike boundary~\cite{aqftCAT,Zahn}, e.g. anti-de Sitter spacetime~\cite{Dappia4,DappiaMarta} and BTZ spacetime~\cite{BTZ}. Moreover, experimental setups for studying the Casimir effect enclose
(quantum) fields between walls, which may be mathematically described by
introducing timelike boundaries~\cite{cospi,Casimir}. Also moving walls in a spatial set-up
correspond to a timelike boundary in the Lorentzian manifold. For the class of  
globally hyperbolic manifolds  with timelike boundary, the Cauchy problem was 
investigated by the last two named authors. 
In particular, we showed  
in~\cite{Ginoux-Murro-20} 
that the Cauchy problem for any symmetric hyperbolic system  coupled with an 
admissible boundary condition is well-posed and the unique solution propagates 
with at most the speed of light.
As a byproduct the existence of a causal propagator is guaranteed.
This operator plays a pivotal role in the algebraic approach to quantum field theory since it allows to construct an algebra of 
observables in a covariant manner, see e.g.~\cite{gerard, aqft2} for textbooks, \cite{CQF1,FK} for recent reviews and~\cite{Nic,cap,simo1,simo2,simo4,DHP,DMP3, CSth,cospi,Dappia4,SDH12} for some applications.
In order to complete the quantization of a free field theory, it is necessary to define an 
(algebraic) state, \ie a positive functional on the algebra of the 
observables.
Clearly not any state can be considered physically meaningful and, 
on globally hyperbolic spacetimes with empty boundary, only those satisfying 
the so-called Hadamard condition are regarded as states of physical interest.

Indeed, within this setting, Hadamard states guarantee the possibility of 
constructing Wick polynomials following a local and covariant scheme 
\cite{HW,IgorValter}, granting also the finiteness of the quantum fluctuations 
of such Wick polynomials, see e.g.~\cite{Fewster-Verch-13}.
Let us remark that in globally hyperbolic stationary spacetimes with empty 
boundary, the ground state and any KMS state satisfy the Hadamard condition, see 
e.g.~\cite{HadGroundKMS1,HadGroundKMS2}.
In close analogy, in the presence of a timelike boundary a generalization of the Hadamard condition has been proposed in \cite{Michal}.
Once the Hadamard condition is assumed, several natural questions arise. The 
most important one concerns the existence of such states, a problem which was 
answered positively for free field theories on globally hyperbolic spacetimes 
with empty boundary in~\cite{FNW} (except linearised gravity for which the 
method cannot be applied -- see for 
example \cite{simogravity}) by means of a spacetime deformation 
technique. 
In detail, once a globally hyperbolic manifold $(\M,g)$ is 
assigned, the key point is to find a (ultra-)static globally hyperbolic metric 
$g_0$ on $\M$ has well as a Lorentzian metric $g_\chi$ interpolating between 
$g$ and $g_0$ which is globally hyperbolic. This is not an easy task, since the 
convex combination of two given globally hyperbolic metric is not in general 
globally hyperbolic. 
If the boundary is not empty, the situation gets worse.
This is first due to the need for a 
boundary condition for Cauchy problem.
Furthermore, if the interpolating metric 
is not explicit, then it is not clear how to construct that boundary condition.
Hence the arguments used in~\cite{FNW} cannot be applied in a straightforward 
manner and a new proof has to be thought out. 
\medskip

The aim of this paper is to provide a geometric proof of the existence of 
Hadamard states for Dirac field with a boundary condition dubbed \textit{MIT 
boundary condition}. Let us recall that the {MIT boundary condition} is a local 
boundary condition which was introduced for the first time in \cite{MIT1} in 
order to reproduce the confinement of quark in a finite region of space: 
``Dirac waves'' are indeed reflected on the boundary, see also \cite{MIT2,MIT3} 
for the description of hadronic states, like baryons and mesons. The MIT 
boundary condition has been used more recently for many other applications, like 
the computation of the Casimir energy in a three-dimensional rectangular 
box~\cite{Casimir,FR1,FR2} in order to construct an integral 
representation for the Dirac propagator in Kerr Black Hole Geometry and finally 
also in \cite{IR} to prove the asymptotic completeness for linear massive 
Dirac fields on the Schwarzschild Anti-de Sitter spacetime.
\
A summary of our main result is the following:

\begin{theorem}\label{thm:Hadapl}
Let $(\M,g:=-\beta^2dt^2 + h(t))$ be a globally hyperbolic spin spacetime with 
timelike boundary and let $\Dir$ be the Dirac operator coupled with MIT boundary 
condition ---\textit{cf.} Equation \eqref{eq: MIT bc}.
If 
for any $u\in\sol_{\textsc{mit}}(\Dir)$,  the $b$-wave front set $\operatorname{WF}_b(u)$ is the union of maximally extended generalized broken bicharacteristics, then there exists a  state for the 
algebra of Dirac fields with MIT boundary conditions that satisfies the 
Hadamard condition as per Definition \ref{def: Hadamard state}.
\end{theorem}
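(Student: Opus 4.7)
The plan is to bypass the direct metric-deformation strategy of \cite{FNW}, which is problematic in the presence of a timelike boundary because a convex interpolation between two globally hyperbolic metrics need not remain globally hyperbolic, and because an admissible boundary condition must be consistently carried along the deformation. Instead, I would use the geometric Møller operator machinery introduced earlier in the paper: it yields a unitary isomorphism between the $L^2$-spaces of initial data of the original Dirac system with MIT boundary condition and of an auxiliary, more tractable, Dirac system with a compatible boundary condition on the same underlying manifold. The crucial input is that, in the setup $g=-\beta^2 dt^2+h(t)$, one can deform the $t$-dependent spatial metric $h(t)$ to a $t$-independent one $h_0$ on some Cauchy slice, keeping the same timelike boundary and the MIT condition fixed, and the resulting ultrastatic-like model admits an explicit ground or KMS state that is Hadamard.

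Concretely, I would first invoke the Møller-operator theorem of the paper to produce a $*$-isomorphism $\mathcal{R}\colon \mathfrak{F}(\M,g)\to \mathfrak{F}(\M,g_0)$ between the CAR algebras of Dirac fields with MIT boundary conditions on $(\M,g)$ and on an ultrastatic deformation $(\M,g_0)$. On $(\M,g_0)$ the Dirac operator has time-translation symmetry, so a ground state $\omega_0$ can be built by spectral decomposition of the associated self-adjoint spatial Dirac Hamiltonian (with the MIT boundary condition providing self-adjointness). Standard arguments, analogous to those in \cite{HadGroundKMS1,HadGroundKMS2} adapted to the boundary setting using the $b$-calculus, show that $\omega_0$ satisfies the generalized Hadamard condition of \cite{Michal} as formulated in Definition \ref{def: Hadamard state}. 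I would then define $\omega:=\omega_0\circ \mathcal{R}$ as a candidate state on $\mathfrak{F}(\M,g)$; positivity, normalization, and the CAR relations are automatic from $\mathcal{R}$ being a $*$-isomorphism.

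The remaining, and principal, task is to verify that the pullback $\omega$ preserves the singular structure of the two-point distribution, i.e.\ that its $b$-wave front set lies in the correct positive-frequency subset of the characteristic variety of $\Dir$. This is where the standing assumption of the theorem enters decisively: since for every $u\in\sol_{\textsc{mit}}(\Dir)$ the set $\operatorname{WF}_b(u)$ is a union of maximally extended generalized broken bicharacteristics, the singularities of solutions propagate in a controlled way through reflections at $\partial\M$, and the Møller operator, which acts as the identity on initial data up to a unitary transport along the deformation, can only move $\operatorname{WF}_b$ along these bicharacteristics. Combined with the explicit Hadamard form of $\omega_0$ on the ultrastatic side, this yields the required microlocal condition for $\omega$.

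The hard part will be this last microlocal step: checking that conjugation by $\mathcal{R}$ respects not only the bulk wave front set but also its $b$-refinement at the boundary. The propagation-of-singularities hypothesis is tailored exactly to this purpose, and the argument reduces to showing that the pullback of the Cauchy data defining $\omega$ along the Møller-induced flow stays within the positive-frequency Lagrangian up to a smooth remainder, uniformly up to $\partial\M$. Everything else, $*$-homomorphism property, compatibility with the CAR relations, and reconstruction of $\omega$ from the two-point distribution, follows from the general results already established in the earlier sections of the paper.
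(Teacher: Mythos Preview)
Your outline matches the paper's strategy: pass to an ultrastatic metric $g_u=-dt^2+h$, build the ground state there via the spectral projector $P_+(\H)$ of the spatial Hamiltonian $\H=i\L$ (self-adjoint thanks to the MIT condition), and pull it back through the M{\o}ller $*$-isomorphism, invoking the propagation-of-singularities hypothesis exactly as in Theorem~\ref{thm:main appl} to transport the Hadamard property.

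There is one genuine gap, however. You assume a M{\o}ller operator can be built directly between $(\M,g)$ and $(\M,g_0)$, but the construction of Section~\ref{sec:Moller} requires assumption~(i) of Setup~\ref{setup}, namely an inequality $C^2\beta_1^{-2}h_1(t)\leq\beta_0^{-2}h_0(t)$ relating the two metrics. There is no reason for the original $g$ and an arbitrary ultrastatic $g_0$ to be comparable in this sense. The paper handles this by inserting an auxiliary globally hyperbolic metric $\overline{g}$ with $J^+_{\overline{g}}\subset J^+_{g_u}\cap J^+_g$ (existence is quoted from~\cite[Proposition~2.23]{defargNor}); this $\overline{g}$ satisfies Setup~(i) against \emph{both} $g$ and $g_u$, so one applies the M{\o}ller construction twice and composes. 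Without this intermediate step your argument does not go through, because the interpolating operator $\Dir_{\chi,1}^f$ need not be weakly-hyperbolic. You should make this two-step factorisation explicit.
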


\begin{remark}
\noindent
The requirement in Theorem~\ref{thm:Hadapl} is also known as ``propagation of 
singularity theorem'' and it has been used for scalar wave equation in many 
different settings \textit{e.g.} \cite{DappiaMarta, GannotWrochna, 
Melrose-Sjostrand-78,Melrose-Sjostrand-82,Vasy-08,mosterbook,Taylor-75,
Baskin-Wunsch-20}. 
We expect a similar result to hold since the propagation of 
singularity for Dirac operators reduces to the propagation of singularity for 
scalar wave operator. 
This is because any element in the kernel of the Dirac 
operator is also in the kernel of the spinorial wave operator, whose principal 
symbol is equal to the principal symbol of the scalar wave equation times an 
identity matrix. 
What is left to check is how the boundary condition for 
the Dirac operator and the spinorial wave equation are related. 
In globally hyperbolic asymptotically anti-de Sitter spacetimes
\footnote{A globally hyperbolic manifold $(\M,g)$ is called asymptotically anti-de Sitter spacetime if it holds the following: (1) for any boundary function $x$ the metric $\hat g= x^2 g$ extends smoothly to a Lorentzian metric on $\M$; (2) the pullback $\iota^*_{\bM} \hat g$ is a smooth Lorentzian metric on $\bM$; (3) $\hat g^\sharp(dx,dx)=1$ on $\bM$.}, Dappiaggi and Marta 
in~\cite{DappiaMarta} proved the propagation of singularity for the scalar wave 
equation for a very large class of boundary conditions, which contains all 
self-adjoint boundary conditions. 
For these reasons, we expect that the Dirac 
operator coupled with MIT boundary condition should also enjoy a propagation of 
singularity in this class of spacetimes.
\end{remark}
Our strategy to prove the existence of Hadamard states is as follows. 
In Section~\ref{sec:SWHS} we introduce the class of symmetric weakly-hyperbolic 
operators (\cf Definition~\ref{def:weakly-hyper}) extending that of 
symmetric hyperbolic ones.
We show that the Cauchy problem is well-posed for that generalized category of 
operators (\cf Theorem~\ref{thm:main1}). 
Then in Section~\ref{sec:Moller} we construct a 
M\o ller operator, \ie a geometric map which compares the space of solutions 
of two given symmetric weakly-hyperbolic systems coupled with admissible 
boundary conditions on (possibly different) globally hyperbolic manifolds with 
timelike boundary (\cf Theorem~\ref{thm:Moller}).
In particular 
Section~\ref{sec:conserv}, we show that this geometric map can be constructed to 
preserve the natural scalar product defined on the space of solution (\cf 
Proposition~\ref{prop:conserv scal prod}). In Section~\ref{sec:AQFT} we 
specialize ourself to the case of Dirac operators: after introducing the 
classical Dirac operator and the MIT boundary conditions~\ref{sec:Dirac} 
and~\ref{sec:MIT} respectively, we construct an isomorphism between  spinor 
bundles defined on different Lorentzian manifolds, see Section~\ref{sec:kappa}.
In 
Section~\ref{sec:CAR} we construct the algebras of Dirac fields and we promote 
the unitary map between the spaces of solutions of the Dirac equation to a 
$*$-isomorphism between the algebras of Dirac fields (\cf Theorem~\ref{thm:alg 
iso}). 
Finally in Section~\ref{sec:Hadam} we discuss and prove the existence of 
Hadamard states for Dirac fields with MIT boundary conditions.

\subsection*{Acknowledgements}
We would like to thank Matteo Capoferri, Claudio Dappiaggi, Christian G\'erard, Valter Moretti, Miguel S\'anchez, Daniele Volpe and Micha\l\,  Wrochna for helpful discussions related to the topic of this paper.

\paragraph{Funding} 
S.M. is supported by the DFG research grant MU 4559/1-1 ``Hadamard States in Linearized Quantum Gravity''.

\subsection*{Notation and convention}
\begin{itemize}
\item[-] The symbol $\KK$ denotes one of the elements of the set $\{\RR,\CC\}$.
\item[-] $\M:=(\M,g)$ is a globally hyperbolic manifold  with timelike boundary 
$\bM$ and we adopt the convention that $g$ has the signature $(-,+,\dots,+)$.
If $g$ is a Lorentzian metric such that $(\M,g)$ is globally hyperbolic, then we shall write $g\in\mathcal{GH}_\M$.
\item[-] For two Lorentzian metrics $g,g'$, $g\leq g'$ means that any causal tangent vector for $g$ is causal for $g'$
or equivalently $J_{g'} \subset J_g$.
\item[-] $t:\M\to\RR$ is a Cauchy temporal function and $\M_\T:=t^{-1}(t_0,t_1)$ 
is a time strip.
\item[-] $\n$ is the outward unit normal vector to $\bM$.
\item[-] $\flat:\T\M\to \T^*\M$ and $\sharp:\T^*\M\to \T\M$ are the musical 
isomorphisms.
\item[-] $\E$ is a $\KK$-vector bundle over $\M$ with $N$-dimensional fibers, 
denoted by $\E_p$ for $p\in\M$, and endowed with a Hermitian fiber metric 
$\fiber{\cdot}{\cdot}_p$ .
\item[-]  $\Gamma_{c}(\E), \Gamma_{sc}(\E)$ \textit{resp.} $\Gamma(\E)$ denote the 
spaces of compactly supported, spacelike compactly supported \textit{resp.} smooth 
sections of $\E$.
\item[-] $\oS$ is a symmetric weakly-hyperbolic system of constant 
characteristic coupled with principal symbol denoted by $\sigma_\oS$ and $\oB$ 
is admissible boundary space for $\oS$.
\item[-] When $\M$ is a Lorentzian spin manifold, we denote with $\S\M$ the spinor bundle over $\M$ and with $\Dir$ the classical Dirac operator.
\end{itemize}

\section{M\o ller operators for symmetric weakly-hyperbolic 
systems}\label{sec:preliminaries}

The aim of this section is to construct a geometric map, named M\o ller 
operator, to compare the solution 
spaces of two symmetric weakly-hyperbolic operators coupled with admissible 
boundary conditions on possibly different (though related) globally hyperbolic manifolds with 
timelike boundary.
To this end, we shall first recall the theory of symmetric 
hyperbolic systems on globally hyperbolic manifolds with timelike boundary. 
Then, after showing the well-posedness of the Cauchy problem for weakly hyperbolic systems, we shall construct a family of M\o ller 
operators depending on the choice of an arbitrary smooth function $f$.
By setting suitably such a function, we shall show that the resulting 
M\o ller operator is actually a unitary map between the spaces of initial data 
endowed with a naturally defined positive scalar product.
Our goal is achieved with the help of~\cite{Ginoux-Murro-20, defarg}.\medskip 

\subsection{Globally hyperbolic manifolds}
Let $\M$ be a connected oriented smooth manifold with boundary.
We assume $\M$ to be endowed with a smooth Lorentzian metric $g$ for which $\M$ 
becomes time-oriented. 
Here and 
in the following we shall assume that the boundary is timelike, \ie
the pullback of $g$ with respect to the natural inclusion $\iota\colon \bM \to 
\M$ 
defines a 
Lorentzian metric $\iota^*g$ on the boundary.
In the class of Lorentzian 
manifolds with timelike boundary, those called globally hyperbolic provide a 
suitable background where to analyze the Cauchy problem  for hyperbolic 
operators. 
\begin{definition}{\protect{\cite[Definition 2.14]{Ake-Flores-Sanchez-18}}}\label{def:glob hyper manifold timelike boundary}
A \textit{globally hyperbolic manifold with timelike boundary} is an $(n + 
1)$-dimensional, oriented,
time-oriented, smooth Lorentzian manifold $\M$ with timelike boundary $\bM$ such 
that
\begin{enumerate}[(i)]
\item $\M$ is causal, \ie  there are no closed causal curves;
\item for all points $p,q\in\M$, the subset $J^+(p)\cap J^-(q)$ of $\M$ is 
compact, where 
$J^+(p)$ (\textit{resp.} $J^-(p)$) denotes the causal future (\textit{resp}. 
past) of $p$ (\textit{resp.} q) in $\M$.
\end{enumerate}
\end{definition}
\begin{remark}
 In case of an empty boundary, this definition agrees with the standard one, 
see e.g.~\cite[Section 3.2]{Bee} or \cite[Section 1.3]{wave}. 
\end{remark}
Recently, Ak\'e, Flores and S\'anchez gave a characterization of globally 
hyperbolic 
manifolds with timelike boundary:

\begin{theorem}[\cite{Ake-Flores-Sanchez-18}, Theorem 1.1]\label{thm: Sanchez}
Any globally hyperbolic manifold with timelike boundary admits a Cauchy 
temporal function $t\colon \M\to\RR$ with gradient tangent to $\bM$.
This implies that $\M$ splits into $\RR \times \Sigma$ with metric 
$$ g= - \beta^2 d t^2\oplus h(t)\,,$$
where $\beta : \RR \times \Sigma \to \RR$ is a smooth positive function, $h(t)$ 
is a Riemannian metric on each slice
$\Sigma_t:=\{t\} \times \Sigma$ varying smoothly with $t$, and these slices are 
spacelike Cauchy hypersurfaces with boundary 
$\bSigma_t:=\{t\}\times\partial\Sigma$, namely achronal sets intersected 
exactly 
once by every inextensible timelike curve.
\end{theorem}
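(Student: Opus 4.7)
The plan is to extend the Bernal--S\'anchez splitting theorem to the timelike boundary setting via a doubling construction, then derive the orthogonal splitting from the flow of the resulting Cauchy temporal function. The conceptual point is that, because $\bM$ is timelike and $g$ is globally hyperbolic in the sense of Definition~\ref{def:glob hyper manifold timelike boundary}, one can build a globally hyperbolic boundaryless double $\hat{\M}$ together with an isometric involution $\sigma$ fixing $\bM$ pointwise; symmetrizing a Cauchy temporal function on $\hat{\M}$ under $\sigma$ then yields the required function on $\M$.

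First I would construct $\hat{\M}$. In Gaussian normal coordinates $(s,y)$ near $\bM$ the metric reads $g=ds^2+h_s(y)$, with $s\in[0,\epsilon)$ and $h_s$ a smooth one-parameter family of Lorentzian metrics on $\bM$. Two copies of $\M$ are then glued along $\bM$ via $(s,y)\sim(-s,y)$. The naively reflected metric is only $C^0$ across the seam unless the second fundamental form of $\bM$ vanishes, so I would first replace $h_s$ by a family $\tilde h_s$ that is even in $s$ near $s=0$ and agrees with $h_s$ outside a thin collar; such a $\tilde h_s$ can be built by smoothly interpolating between $h_0$ and $h_s$ using a cutoff in $s$. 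Since global hyperbolicity is stable under small perturbations of the Lorentzian metric, one can arrange that the perturbed metric on $\M$ remains in $\mathcal{GH}_{\M}$. With this modification, $\hat{\M}$ carries a smooth Lorentzian metric, and its global hyperbolicity follows from the compactness of $J^+(p)\cap J^-(q)$ in $\M$ together with the fact that any inextendible causal curve in $\hat{\M}$ can be folded back into $\M$ using $\sigma$.

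Next I would apply the classical Bernal--S\'anchez theorem to the boundaryless $\hat{\M}$ to obtain a Cauchy temporal function $\hat{t}\colon\hat{\M}\to\RR$ with past-directed timelike gradient. The reflection $\sigma$ is an isometry of $\hat{\M}$ and, since $\bM$ is timelike (so the time orientation is tangent to $\bM$), preserves the time orientation. Setting
\[
t:=\tfrac{1}{2}\bigl(\hat{t}+\hat{t}\circ\sigma\bigr),
\]
one obtains a $\sigma$-invariant smooth function. At each $p\in\hat{\M}$ the gradient $\nabla t_p$ is the arithmetic mean of $\nabla\hat{t}_p$ and $d\sigma_p^{-1}\nabla\hat{t}_{\sigma(p)}$, both of which are past-directed timelike; since the past-directed timelike cone is open and convex, $\nabla t_p$ is again past-directed timelike. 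For the Cauchy property, given an inextendible future-directed causal curve $\gamma$ in $\hat{\M}$, the curve $\sigma\circ\gamma$ is also such a curve, and both $\hat{t}\circ\gamma$ and $\hat{t}\circ\sigma\circ\gamma$ exhaust $\RR$, whence so does $t\circ\gamma$. By $\sigma$-invariance, $\nabla t|_{\bM}$ is fixed by $d\sigma$ and therefore tangent to $\bM$; restricting $t$ to one copy of $\M\subset\hat{\M}$ produces the desired Cauchy temporal function with boundary-tangent gradient.

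The splitting then follows from the standard gradient-flow argument. Pick $\Sigma:=t^{-1}(t_0)$ and integrate the normalized negative gradient $V:=-\nabla t/g(\nabla t,\nabla t)$; the resulting flow is complete by the Cauchy property, preserves $\bM$ because $V$ is tangent to $\bM$, and yields a diffeomorphism $\M\cong\RR\times\Sigma$ sending $\bM$ onto $\RR\times\partial\Sigma$. Reading $g$ in the coordinates $(t,x)$ produced by the flow gives the stated form $g=-\beta^2dt^2\oplus h(t)$ with $\beta>0$ and $h(t)$ Riemannian on each slice $\Sigma_t$. The main technical obstacle I foresee lies in the smoothing step at the seam of the double: one needs a modification of $h_s$ that is simultaneously even in $s$, localized in an arbitrarily thin collar of $\bM$, and small enough to preserve global hyperbolicity. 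A secondary point requiring verification is that the averaged $t$ retains the Cauchy property globally and that $\nabla t$ does not degenerate along $\bM$; both follow from the openness and convexity of the timelike cone combined with the Cauchy property of $\hat t$, but have to be checked with some care.
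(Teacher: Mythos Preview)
The paper does not give its own proof of this theorem; it is quoted from \cite{Ake-Flores-Sanchez-18} and used as a black box throughout. There is therefore nothing in the present paper to compare your argument against.

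That said, your doubling strategy has genuine gaps beyond those you flag. The central one is that the modification $g\leadsto\tilde g$ near $\bM$ leaves you with a Cauchy temporal function for $\tilde g$, not for $g$. Tangency of the gradient along $\bM$ does survive (since $g$ and $\tilde g$ agree on $\bM$), but $dt$ being $\tilde g$-timelike does not make it $g$-timelike, nor does the Cauchy property of the level sets transfer, without a light-cone comparison $\tilde g\geq g$ that your interpolation between $h_0$ and $h_s$ does not provide. You would need to engineer $\tilde g$ so that every $g$-causal vector is $\tilde g$-causal, which is an additional non-trivial construction.

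A second gap is the global hyperbolicity of the double. The ``folding back'' sentence is not an argument: a causal curve in $\hat{\M}$ may cross the seam many times, and even if each reflected piece is causal, you still have to show that $J^+(p)\cap J^-(q)$ computed in $\hat{\M}$ is compact, which does not follow from compactness in each half without further work. For what it is worth, the actual proof in \cite{Ake-Flores-Sanchez-18} does not proceed by doubling; it works intrinsically on $\M$ and adapts the Bernal--S\'anchez smoothing and causal-ladder techniques to the boundary setting, precisely to avoid the seam-regularity and double-global-hyperbolicity issues you run into.
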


\subsection{Symmetric hyperbolic systems of constant characteristic}
Let now $\E\to\M$ be a Hermitian vector bundle over a globally hyperbolic manifold 
with timelike boundary $\M$, namely a $\KK$-vector bundle with finite rank $N$ 
endowed with a positive definite Riemannian or Hermitian fiber metric
$\fiber{\cdot}{\cdot}_p:\E_p\times\E_p\to\KK$.

\begin{definition}\label{def:symm syst}
 A linear differential operator $\oS \colon \Gamma(\E) \to \Gamma(\E)$ of first 
order is called a \textit{symmetric hyperbolic system} over $\M$ if 
\begin{enumerate}
\item[(S)]\label{conditionS} The principal symbol $\sigma_\oS (\xi) \colon \E_p \to \E_p$ is 
Hermitian with respect to $\fiber{\cdot}{\cdot}_p$ for every $\xi\in \T^*_p\M$ 
and 
for every $p \in \M$;
\item[(H)]\label{conditionH} For every future-directed timelike covector $\tau 
\in \T_p^*\M$, the 
bilinear form $\fiber{\sigma_\oS (\tau) \cdot}{\cdot}_p$ is positive definite on 
$\E_p$ for every $p\in\M$.
\end{enumerate}
Furthermore, we say that $\oS$ is \textit{of constant characteristic} if 
$\dim\ker \sigma_\oS(\n^\flat)$ is constant on the boundary.
In particular, if $ 
\sigma_\oS(\n^\flat)$ has maximal rank equal to $\mathrm{rk}(\E)=N$ everywhere 
on $\bM$ we say that $\oS$ is \textit{nowhere 
characteristic}.
\end{definition}

 \begin{remark}\label{rmk:H'}
  Notice that, if a system $\oS$ is hyperbolic with respect to a metric $g$ then 
it is also hyperbolic with respect to any metric in the conformal class of $g$. 
Indeed, conformal changes preserve each type of covector.
Furthermore, Condition (H)  implies that for any spacelike covector 
$\xi\in\T_p^*\M$  such that $\tau:=dt+\xi$ is timelike future-directed, 
\begin{align*}
\fiber{\sigma_\oS (dt) \cdot}{\cdot}_p + \fiber{\sigma_\oS (\xi) \cdot}{\cdot}_p   =\fiber{\sigma_\oS (dt+\xi) \cdot}{\cdot}_p >0
\end{align*}
 Therefore, a symmetric system is hyperbolic if and only if:
\begin{itemize}
\item[(H')] For every spacelike covector $ \xi 
\in \T_p^*\M$ such that $dt+\xi$ is a future-directed timelike covector, the 
bilinear form verifies 
\[\fiber{\sigma_\oS (\xi) \cdot}{\cdot}_p\quad >\quad - \fiber{ 
\sigma_\oS(dt)\cdot}{\cdot}_p.\]
\end{itemize}

 \end{remark}

\subsection{Admissible boundary conditions}

In order to discuss the Cauchy problem for a symmetric hyperbolic system, we 
have to impose suitable boundary conditions,
depending of course if we want to solve the forward or the backward Cauchy 
problem. 
We begin by fixing a Cauchy surface $\Sigma_0:=t^{-1}(\{0\})$ where we shall 
assign the initial data.
In this paper  we shall focus on a class introduced by 
Friedrichs and Lax-Phillips respectively in~\cite{Friedrichs,Lax-Phillips}, dubbed 
admissible 
boundary conditions. 

\begin{definition}\label{def:admissible bc}
A smooth linear bundle map $\pi_{\oB_+}\colon\E_{|_\bM 
}\rightarrow\E_{|_\bM}$ is 
said to be a {\it future admissible boundary condition} for a first-order 
Friedrichs system $\oS$ if 
\begin{itemize}
\item[(i-f)] the pointwise kernel $\oB_+$ of $\pi_{\oB_+}$ is a 
smooth subbundle of $\E_{|_\bM}$;
\item[(ii-f)] the quadratic form
$\Psi\mapsto\fiber{\sigma_{\oS}(\n^\flat)\Psi}{\Psi}_p$ is positive 
semi-definite 
on $\oB_+$ ;
\item[(iii-f)] the rank of $\oB_+$ is equal to the 
number of 
pointwise non-negative eigenvalues of 
$\sigma_{\oS}(\n^\flat)$ counting multiplicity.
\end{itemize}
 Similarly we say that $\pi_{\oB_-}\colon\E_{|_\bM}\rightarrow\E_{|_\bM}$ is 
{\it past admissible}
if
\begin{itemize}
\item[(i-p)] the pointwise kernel $\oB_-$ of $\pi_{\oB_-}$ is a 
smooth subbundle of $\E_{|_\bM}$;
\item[(ii-p)] the quadratic form
$\Psi\mapsto\fiber{\sigma_{\oS}(\n^\flat)\Psi}{\Psi}_p$ is negative 
semi-definite 
on $\oB_-$;
\item[(iii-p)] the rank of $\oB_-$ is equal to the 
number of 
pointwise non-positive eigenvalues of 
$\sigma_{\oS}(\n^\flat)$ counting multiplicity.
\end{itemize}
The pair $\oB=(\oB_+, \oB_-)$ is called the {\it admissible boundary space} or 
{\it admissible boundary condition}
for $\oS$.
\end{definition}

\begin{remark}
The role of $\oB_+$ and $\oB_-$ will become apparent when 
looking for energy estimates for symmetric hyperbolic $\oS$.
It turns out that $\oB_+$ (resp. $\oB_-$) is only needed in the future (resp. 
past) of the chosen Cauchy hypersurface $\Sigma_0$.
\end{remark}

Conditions (ii-f) and (ii-p) are equivalent to require that the boundary conditions are
{\it maximal} with respect to properties  (iii-f) and (iii-p) respectively, namely no smooth vector subbundles 
$(\oB')_\pm$ of $\E$ exist that properly contains $\oB_\pm$ and such that for 
all  
$\Phi'\in(\oB')_+$ and $\Phi''\in(\oB')_-$
$$\fiber{\sigma_{\oS}(\n^\flat)\Phi'}{\Phi'}\geq 0 \qquad 
\fiber{\sigma_{\oS}(\n^\flat)\Phi''}{\Phi''}\leq 0$$  
holds.  For further details we refer to~\cite[Section 2.2]{Ginoux-Murro-20}.

 With the next lemma, we shall see that admissible boundary conditions are 
``stable'' under conformal transformations, namely if $\oB$ is a future/past 
 admissible 
boundary space for a system on a globally hyperbolic manifold $(\M,g)$, then it 
is also future/past admissible for the same system on 
$(\M,\Omega^2 g)$, where $\Omega$ is a 
 positive smooth function on $\M$.
\begin{lemma}\label{lem:conf transf}
Let $(\M,g)$ be a globally hyperbolic spacetime with timelike boundary and let 
$\oB_\pm$ be a future/past admissible boundary space for a 
hyperbolic Friedrichs system of constant 
characteristic $\oS$. 
Then $\oB_\pm$ is future/past  admissible w.r.t $g$ if and only if 
it is 
 future/past admissible w.r.t.  $\Omega^2g$, for any positive 
$\Omega\in 
C^\infty(\M).$
\end{lemma}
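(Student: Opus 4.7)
The plan is to track how each ingredient in Definition~\ref{def:admissible bc} transforms under the conformal change $g\mapsto \tilde g:=\Omega^2 g$ and to verify that all three defining properties depend on $\n^\flat$ only through an overall positive factor. Since the subbundle condition (i-f)/(i-p) is a purely topological requirement on $\oB_\pm$ that does not involve $g$ at all, only conditions (ii-f)/(ii-p) and (iii-f)/(iii-p) require work, and both reduce to a single computation of how $\sigma_\oS(\n^\flat)$ rescales.

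First, I would compute the transformation of the outward unit normal. If $\n$ denotes the $g$-unit outward normal to $\bM$, then its $\tilde g$-counterpart is $\tilde\n=\Omega^{-1}\n$, since $\tilde g(\Omega^{-1}\n,\Omega^{-1}\n)=\Omega^2 g(\n,\n)\cdot \Omega^{-2}=1$ and both vectors point outward because $\Omega>0$. The corresponding musical isomorphism $\tilde\flat$ associated with $\tilde g$ then gives, for every $X\in \T\M_{|\bM}$,
\[
\tilde\n^{\tilde\flat}(X)=\tilde g(\tilde\n,X)=\Omega^2 g(\Omega^{-1}\n,X)=\Omega\,\n^\flat(X),
\]
so that $\tilde\n^{\tilde\flat}=\Omega\,\n^\flat$ as covectors along $\bM$.

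Next, since the principal symbol $\sigma_\oS\colon\T^*\M\to\End\,\E$ of the first-order operator $\oS$ is linear in the covector argument and is independent of $g$, one obtains pointwise on $\bM$
\[
\sigma_\oS(\tilde\n^{\tilde\flat})=\Omega\,\sigma_\oS(\n^\flat),
\]
hence for every $\Psi\in\E_p$
\[
\fiber{\sigma_\oS(\tilde\n^{\tilde\flat})\Psi}{\Psi}_p=\Omega(p)\,\fiber{\sigma_\oS(\n^\flat)\Psi}{\Psi}_p.
\]
Because $\Omega>0$, positive semi-definiteness on $\oB_+$ is preserved, and so is negative semi-definiteness on $\oB_-$; this handles conditions (ii-f) and (ii-p).

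Finally, for (iii-f)/(iii-p), one must compare the number of pointwise non-negative (resp. non-positive) eigenvalues of $\sigma_\oS(\n^\flat)$ and $\sigma_\oS(\tilde\n^{\tilde\flat})$ counted with multiplicity. Since $\sigma_\oS(\n^\flat)$ is Hermitian with respect to $\fiber{\cdot}{\cdot}_p$ by condition (S), it is diagonalizable with real spectrum, and multiplication by the positive scalar $\Omega(p)$ only rescales its eigenvalues without changing their signs or multiplicities. This yields the equality of ranks required by (iii-f)/(iii-p). Combining the three points gives the equivalence for $\oB_+$ and $\oB_-$ independently, and hence the lemma. The whole argument is in fact straightforward; the only mild subtlety worth highlighting in the writeup is the distinction between the two musical isomorphisms and the resulting scaling factor $\Omega$ (rather than $\Omega^2$) in $\tilde\n^{\tilde\flat}$.
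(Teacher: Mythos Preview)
Your proof is correct and follows essentially the same approach as the paper: both arguments rest on the observation that the outward unit conormal rescales by the positive factor $\Omega$ under the conformal change, so that $\sigma_\oS(\tilde\n^{\tilde\flat})=\Omega\,\sigma_\oS(\n^\flat)$, from which all three conditions in Definition~\ref{def:admissible bc} are preserved. Your write-up is in fact more careful than the paper's terse version, since you explicitly distinguish the two musical isomorphisms and separately address condition (iii-f)/(iii-p) via the sign-preserving rescaling of eigenvalues.
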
 
\begin{proof}
We only prove the case of a future admissible boundary condition, since the 
other case is analogous.
Let denote with $\n$ and $\tilde{\n}$ the normal vector w.r.t $g$ and $\Omega^2 
g$. Since $\tilde{\n}=\Omega^{-1}\n$, we get $\sigma_\oS(\n)=\Omega 
\sigma_\oS(\tilde{\n})$.  
This guarantees conditions(i-f)--(ii-f) in 
Definition~\ref{def:admissible bc} to be satisfied.
\end{proof}
Once a future/past admissible boundary condition $\pi_\oB$ is 
fixed, the \textit{adjoint boundary condition} $\pi_\oB^\dagger$ is defined as 
the 
pointwise orthogonal projection (with respect to $\fiber{\cdot}{\cdot}$) onto $\sigma_{\oS}
(\n^\flat)(\oB)$, namely
\begin{align}\label{Eq: adjoint bc}
\oB_+^\dagger
:=\left(\sigma_{\oS}(\n^\flat)(\oB_+)\right)^\perp\quad\qquad \oB_-^\dagger 
:=\left(\sigma_{\oS}(\n^\flat)(\oB_-)\right)^\perp\,.
\end{align}

\begin{definition}\label{def:selfadjointbc}
We say that an admissible boundary condition $\oB=(\oB_+,\oB_-)$ is 
\textit{self-adjoint} if and only if $\oB_+=\oB_-$.
\end{definition}

\begin{remark}
Our definition of self-adjoint boundary condition is actually stronger than the one used in the literature, where only $\oB_\pm=\oB^\dagger_\pm$ are required.
It immediately follows from the definition of a self-adjoint boundary condition 
that for any  $(\Psi,\Phi)\in\oB_+\oplus\oB_-$, it holds
$$\fiber{\sigma_{\oS}(\n^\flat)\Psi}{\Phi}= 0\,.$$
Actually, the vanishing of $(\Psi,\Phi)\mapsto\fiber{\sigma_{\oS}(\n^\flat)\Psi}{\Phi}$ on $\oB_+\oplus\oB_-$ is equivalent to $\oB_-\subset\oB_+^\dagger$ and hence $\oB_-=\oB_+^\dagger$ by identity of space dimensions.
As a consequence, if $\oB_+=\oB_-$, then $\oB_+^\dagger=\oB_+$ and $\oB_-^\dagger=\oB_-$.
Note however that both $\oB_+^\dagger=\oB_+$ and $\oB_-^\dagger=\oB_-$ do not imply $\oB_+=\oB_-$.
\end{remark}
\subsection{Well-posedness of the Cauchy problem}
Let
 $t\colon\M\to\RR$ be a Cauchy temporal function with 
gradient tangent to the boundary, as in Theorem~\ref{thm: Sanchez}, and write 
a symmetric system as
$$\oS=\sigma_{\oS}(dt)\nabla_{\partial_t} - \H \,$$
where $\H$ is a first-order linear differential operator which differentiates 
only in the directions that are tangent to $\Sigma$ and where $\nabla$ 
is 
 any fixed metric connection for
$\fiber{\cdot}{\cdot}$. 
Let  $\pi_{\oB_+},\pi_{\oB_-}\colon \E_{|_{\bM}}\longrightarrow\E_{|_{\bM}}$ be future and past 
admissible boundary conditions respectively for $\oS$, in 
particular their kernels define the 
future and past admissible boundary spaces $\oB_+,\oB_-$ respectively.
\begin{definition}\label{def:compcond}
{Let $\oS$ be a symmetric system $\oS$ with positive definite $\fiber{\sigma_\oS(dt)\cdot}{\cdot}$ and let $t_0\in\mathbb{R}$.
We say that $\h\in\Gamma(\E_{|_{\Sigma_{t_0}}})$ and $\f\in\Gamma(\E)$ fulfils 
the \textit{compatibility condition} of order $k\geq0$ at time 
$t_0\in\mathbb{R}$ if the following condition is satisfied:}
\begin{equation}\label{eq:comp cond data}
 \sum_{j=0}^k 
{k \choose j}
\Big(\nabla_{\partial_t}^j\pi_{\oB} \Big)\h_{k-j}\Big|_{\partial\Sigma_{t_0}} =0
\end{equation}
for both $\oB=\oB_+$ and $\oB=\oB_-$, where the sequence $(\h_k)_k$ of sections of $\E_{|_{\partial\Sigma_0}}$ is 
defined inductively by $\h_0:=\h$ and
$$ \h_k:= 
\sum_{j=0}^{k-1} 
{k-1\choose j}
 \H_j \, \h_{k-1-j}\Big|_{\partial\Sigma_{t_0}} + 
\nabla_{\partial_t}^{k-1} \big(\sigma_{\oS}^{-1}(dt)\f)\Big|_{\partial\Sigma_{t_0}}
\qquad 
\text{for all }k\geq1,
$$  
where $\H_j:=[\nabla_{\partial_t},\H_{j-1}]$ and $\H_0:=\sigma_{\oS}(dt)^{-1}\H$.
\end{definition}
Roughly speaking, Equation~\eqref{eq:comp cond data} provides a sufficient and necessary condition to ensure $C^k$-regularity for the solution of the Cauchy problem \eqref{Cauchy} once Cauchy data are given on $\Sigma_{t_0}$.
We recall one of the main results of \cite{Ginoux-Murro-20}, see \cite[Theorem 1.2]{Ginoux-Murro-20}:

\begin{theorem}[Smooth solutions for symmetric hyperbolic systems]\label{thm:SWHS}
Let $\M$ be a globally hyperbolic 
manifold with timelike boundary and let $\oS$ be a symmetric hyperbolic system of constant characteristic.
Assume $\oB=(\oB_+,\oB_-)$ to be an admissible boundary space for $\oS$ and let $\Sigma_{t_0}$ be any smooth spacelike Cauchy hypersurface in $\M$.
Then, for every $\f\in\Gamma_{c}(\E)$ and $h\in\Gamma_{c}(\E_{|_{\Sigma_{t_0}}})$ 
satisfying the compatibility conditions~\eqref{eq:comp cond data} up to any 
order,  there exists a unique $\Psi\in\Gamma(\E)$ satisfying the Cauchy problem
\begin{equation}\label{Cauchy}
\begin{cases}{}
{\oS }\Psi=\f  \\
\Psi_{|_{\Sigma_{t_0}}} = \h   \\
\Psi_{|_\bM\cap J^+(\Sigma_{t_0})}\in\oB_+\\
\Psi_{|_\bM\cap J^-(\Sigma_{t_0})}\in\oB_-\\
\end{cases} 
\end{equation}
 and the map $(\f,\h)\mapsto\Psi$ 
sending a pair $(\f,\h)\in\Gamma_{c}(\E)\times\Gamma_{c}(\E_{|_{\Sigma_{t_0}}})$ to the 
solution $\Psi\in\Gamma_{sc}(\E)$ of~\eqref{Cauchy} is continuous. 
 \end{theorem}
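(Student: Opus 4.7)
The strategy is to combine the standard energy method for symmetric hyperbolic systems with a careful treatment of the corner $\bSigma_{t_0}$ where the initial slice meets the timelike boundary. The plan is to establish first an $L^2$-type energy inequality, then existence of an $L^2$-weak solution by a maximally nonnegative Friedrichs--Lax--Phillips approximation adapted to $\oB=(\oB_+,\oB_-)$, and finally to upgrade to $C^\infty$ solutions by applying the energy identity to the time-derivatives, which is exactly where the compatibility conditions~\eqref{eq:comp cond data} enter.

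For the energy inequality I would take a smooth candidate $\Psi$ solving \eqref{Cauchy}, form $2\,\mathrm{Re}\,\fiber{\oS\Psi}{\Psi}$, and integrate over a slab $t_0\leq t\leq t_1$. Stokes' theorem together with the symmetry (S) of $\sigma_\oS$ yields an identity whose boundary terms are: a slice integral $\int_{\Sigma_t}\fiber{\sigma_\oS(dt)\Psi}{\Psi}$, which is coercive by Condition (H), and a lateral integral $\int_{\bM\cap t^{-1}[t_0,t_1]}\fiber{\sigma_\oS(\n^\flat)\Psi}{\Psi}$, which is nonnegative on $\oB_+$ by admissibility condition (ii-f). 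Absorbing the zeroth-order bulk remainder coming from the commutator of $\nabla$ with $\sigma_\oS(dt)$ and applying Gronwall gives the forward estimate $\|\Psi(t)\|_{L^2(\Sigma_t)}\lesssim\|\h\|_{L^2(\Sigma_{t_0})}+\int_{t_0}^t\|\f\|_{L^2(\Sigma_s)}ds$; the past estimate is symmetric and uses $\oB_-$ with (ii-p). Uniqueness and continuous dependence are immediate, and existence of an $L^2$-weak solution follows by duality from the same estimate applied to the adjoint system with the adjoint boundary condition~\eqref{Eq: adjoint bc}, or equivalently by a viscous/Galerkin approximation. The constant-characteristic hypothesis is used here to ensure that $\oB_\pm$ is a genuine smooth subbundle of $\E_{|_\bM}$, so that projecting onto it is a bounded operation of constant rank.

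The main difficulty is promoting this weak solution to an element of $\Gamma(\E)$. The obstruction lives at the corner $\bSigma_{t_0}$: for any $C^k$-solution the iterated time-derivatives $\nabla_{\partial_t}^k\Psi|_{\bSigma_{t_0}}$ are forced by $\oS\Psi=\f$ to coincide with the sections $\h_k$ of Definition~\ref{def:compcond}, while differentiating the boundary relation $\pi_{\oB_\pm}\Psi|_\bM=0$ a total of $k$ times along $\partial_t$ and using Leibniz shows that the trace of $\nabla_{\partial_t}^k\Psi$ on $\bSigma_{t_0}$ must satisfy precisely~\eqref{eq:comp cond data}. Assuming~\eqref{eq:comp cond data} to all orders, each commuted system $\oS(\nabla_{\partial_t}^k\Psi)=\nabla_{\partial_t}^k\f+[\oS,\nabla_{\partial_t}^k]\Psi$ is again a symmetric hyperbolic system of constant characteristic with Cauchy datum $\h_k$, lower-order perturbation, and the \emph{same} admissible boundary space $\oB$; the energy estimate of the previous step applies and yields $\nabla_{\partial_t}^k\Psi\in L^\infty_t L^2_{\mathrm{loc}}(\Sigma_t)$ for every $k$. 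Spatial regularity is then recovered by using the equation $\sigma_\oS(dt)\nabla_{\partial_t}\Psi=\H\Psi+\f$ and its iterates to trade spatial derivatives for temporal ones, followed by Sobolev embedding.

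Finally, finite propagation speed and spacelike compact support are obtained by replacing the global slabs in the energy identity with truncated causal diamonds $J^+(K)\cap t^{-1}[t_0,t_1]$: Condition (H) makes the integrand over the lateral null portion nonnegative, so $\Psi$ vanishes outside $J(K\cup\supp\f)$ whenever $\h$ is supported in $K$. Assembling the estimates at every order gives continuity of $(\f,\h)\mapsto\Psi$ in the natural Fréchet topologies. The hardest and conceptually most delicate point is the corner analysis of the previous paragraph: verifying that \eqref{eq:comp cond data} is both necessary and sufficient for the time-differentiated problems to remain well-posed with admissible boundary data is what distinguishes the present situation from the empty-boundary case treated in \cite{BaGreen}.
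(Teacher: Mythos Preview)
The paper does not actually prove this theorem: it is quoted verbatim as \cite[Theorem~1.2]{Ginoux-Murro-20} and serves as an input for the rest of the paper, so there is no ``paper's own proof'' to compare against. Your outline is a faithful sketch of the standard Friedrichs--Lax--Phillips energy method that underlies that reference: the $L^2$ energy estimate from symmetry~(S), hyperbolicity~(H) and the sign conditions~(ii-f)/(ii-p); weak existence by duality with the adjoint boundary condition~\eqref{Eq: adjoint bc}; higher regularity by commuting with $\nabla_{\partial_t}$, which is exactly where the compatibility conditions~\eqref{eq:comp cond data} appear; and finite propagation speed from a localized energy argument.

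One point deserves a more careful wording. When you say the commuted problem for $\nabla_{\partial_t}^k\Psi$ has ``the \emph{same} admissible boundary space $\oB$'', note that $\pi_{\oB_\pm}$ is allowed to depend on $t$ (this is why the compatibility condition~\eqref{eq:comp cond data} contains the terms $\nabla_{\partial_t}^j\pi_{\oB}$). Differentiating $\pi_{\oB_\pm}\Psi_{|_{\bM}}=0$ therefore does not give $\pi_{\oB_\pm}(\nabla_{\partial_t}^k\Psi)_{|_{\bM}}=0$ on the nose, but rather a relation of the form $\pi_{\oB_\pm}(\nabla_{\partial_t}^k\Psi)_{|_{\bM}}=$ lower-order boundary data already controlled. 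The energy argument still closes because only the quadratic form $\fiber{\sigma_\oS(\n^\flat)\cdot}{\cdot}$ on $\oB_\pm$ matters, but the bookkeeping is a bit more involved than your sentence suggests. This is handled in detail in~\cite{Ginoux-Murro-20}.
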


The assignment $U_{\oS,t}\colon D(U_{\oS,t})\subset\Gamma_{c}(\E|_{\Sigma_t})\ni\h\to U_{\oS,t}\h:=\Psi\in\Gamma_{sc}(\E)$ of a (unique) solution $\Psi$ to any smooth initial data $\h\in D(U_{\oS,t})$ is known as a \textit{Cauchy evolution operator} ---here $D(U_{\oS,t})$ is made by sections $\h\in\Gamma_{c}(\E|_{\Sigma_t})$ fulfilling the compatibility conditions \eqref{eq:comp cond data} with $\f=0$.
For later convenience we shall denote by $\rho_t\colon\Gamma(\E)\to\Gamma(\E|_{\Sigma_t})$ the restriction map for smooth sections: Notice that, $\rho_t$ is a right-inverse for $U_{\oS,t}$.
As shown in~\cite{cap,CapProp1,CapProp2,CapProp3}, on globally hyperbolic manifolds with empty boundary and compact Cauchy hypersurfaces, the evolution operator can be realized as a Fourier integral operator. As a matter of fact, the Fourier integral representation of the propagator contains the information on how singularities propagates in the manifold. As we shall see in Section~\ref{sec:Hadam}, this is of fundamental importance in proving the existence of Hadamard states for a free quantum field theory on a curved spacetime.

We conclude this section with the following result:
\begin{corollary}
Let $\M$ be a globally hyperbolic 
manifold with timelike boundary and let $\oB$ be an admissible boundary space for a 
symmetric hyperbolic system of constant characteristic $\oS$.
Then the Cauchy problem for $\oS$ on $(\M,g)$
is well-posed if and only if it is well-posed on 
$(\M,\Omega^2g)$ for any positive $\Omega\in C^\infty(\M)$.
\end{corollary}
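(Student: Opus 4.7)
The plan is to deduce both directions simultaneously by invoking Theorem~\ref{thm:SWHS} on both sides: I would verify that every hypothesis of that theorem is preserved under the conformal change $g\mapsto\Omega^2g$, so that well-posedness automatically holds for $(\M,g)$ exactly when it holds for $(\M,\Omega^2g)$.

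First I would check the geometric assumptions. Global hyperbolicity depends only on the conformal class of the Lorentzian metric, since the sets of timelike and causal tangent vectors at each point agree for $g$ and $\Omega^2g$; thus the absence of closed causal curves and the compactness of causal diamonds $J^+(p)\cap J^-(q)$ carry over from $g$ to $\Omega^2g$ and vice-versa. The timelike nature of $\bM$ is likewise conformal-invariant, so $(\M,\Omega^2g)$ is again a globally hyperbolic manifold with timelike boundary. Any Cauchy temporal function $t$ with gradient tangent to $\bM$ for $g$ remains Cauchy temporal for $\Omega^2g$, so the splitting provided by Theorem~\ref{thm: Sanchez} is available in both settings.

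Next I would handle the properties of $\oS$. Condition (S) in Definition~\ref{def:symm syst} involves only the fiber metric $\fiber{\cdot}{\cdot}$, not $g$, so it is unchanged. Condition (H) is exactly the content of Remark~\ref{rmk:H'}, which already records that hyperbolicity is preserved within a conformal class. For the constant-characteristic property I would observe that if $\tilde\n$ denotes the outward unit normal to $\bM$ with respect to $\Omega^2g$, then $\tilde\n=\Omega^{-1}\n$ and hence $\tilde\n^\flat_{\Omega^2g}=\Omega\,\n^\flat_g$, so that
\[
\sigma_\oS(\tilde\n^\flat_{\Omega^2g})=\Omega\,\sigma_\oS(\n^\flat_g),
\]
which has the same pointwise kernel and rank at every boundary point. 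Finally Lemma~\ref{lem:conf transf} yields that the future/past admissibility of $\oB=(\oB_+,\oB_-)$ is preserved under the conformal change.

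Having gathered that each hypothesis of Theorem~\ref{thm:SWHS} for $(\M,g)$ is equivalent to the corresponding hypothesis for $(\M,\Omega^2g)$, the conclusion — existence, uniqueness and continuous dependence of the Cauchy solution for data satisfying the compatibility conditions \eqref{eq:comp cond data} — transfers in both directions. The only subtle point I anticipate is that the compatibility conditions involve $\sigma_\oS(dt)^{-1}$, $\H$ and $\pi_\oB$; using the same Cauchy temporal function $t$ for both metrics (which is legitimate by the first step) and noting that $\sigma_\oS(dt)$ stays positive definite because $dt$ remains timelike future-directed for $\Omega^2g$, these conditions coincide verbatim, so the classes of admissible initial data match. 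This reduces the corollary to a direct consequence of Theorem~\ref{thm:SWHS} applied to each metric.
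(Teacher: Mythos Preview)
Your proposal is correct and follows the same approach as the paper, which simply cites Remark~\ref{rmk:H'} and Lemma~\ref{lem:conf transf} in a single sentence. You have supplied the surrounding details (conformal invariance of global hyperbolicity, of the timelike-boundary condition, of constant characteristic, and the matching of compatibility data) that the paper leaves implicit, but the core mechanism---verifying that every hypothesis of Theorem~\ref{thm:SWHS} is conformally stable---is identical.
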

\begin{proof}
Our claim follows immediately by Remark~\ref{rmk:H'} and Lemma~\ref{lem:conf transf}.
\end{proof}

\subsection{Symmetric weakly-hyperbolic systems}\label{sec:SWHS}
We conclude this section by showing that the Cauchy problem for a symmetric 
system $\oS$ is well-posed also if we assume that the principal symbol 
$\sigma_\oS(\xi)$ acts pointwise in a positive definite way only for a subset 
of future-directed timelike covectors $\xi$.
We begin with the following 
definition.
\begin{definition}\label{def:weakly-hyper}
A symmetric system of constant characteristic $\oS$ over $\M$ is 
\textit{weakly-hyperbolic} if there exists a positive smooth function 
$C \in C^\infty(\M)$
\begin{itemize}
\item[(gh)] The metric $g_C:=-\beta^2dt^2 \oplus C^2h(t)$ is globally hyperbolic, where $t$ is a Cauchy temporal function for $g$;
\item[(wH)] For any future-directed timelike covector 
$\tau$ of the form $\tau=dt+\xi \in \T_p^*\M$ it holds
$$\fiber{\sigma_\oS (dt+C\xi) \cdot}{\cdot}_p\quad>\quad0\,.$$
\end{itemize}
\end{definition}

\begin{remarks}\label{rem:wHandgalpha}
\noindent
\begin{enumerate}
\item The idea behind the Definition~\ref{def:weakly-hyper} is to `shrink' the light cone of the dual metric $g^\sharp$ in the cotangent bundle, so that the condition (H) in Definition~\ref{def:symm syst} has to be checked for a smaller class of future-directed timelike covectors (\cf Figure~\ref{fig:J alpha}).
Mind that, in the cotangent bundle, the causal future/past of $g_C$ is not allowed to shrink too much along any $\Sigma_t$.
\begin{figure}[h!]
\centering
\includegraphics[scale=0.35]{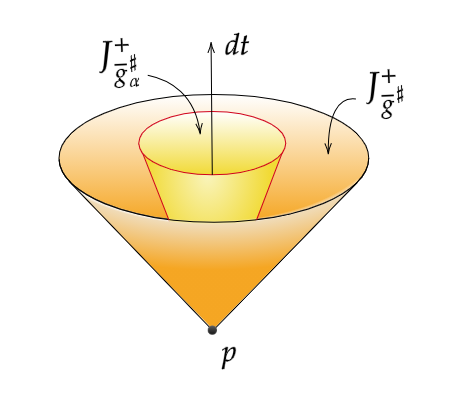}
\caption{The future light cones of $g^\sharp$ and  $g_C^\sharp$ in the cotangent bundle $\T^*\M$.}
\label{fig:J alpha}
\end{figure}
Clearly any symmetric hyperbolic system is a symmetric weakly-hyperbolic system, 
just take $C = 1$ in Definition \ref{def:weakly-hyper} (cf. 
Remark~\ref{rmk:H'}).  
\item 
Assuming the quadratic form $\fiber{\sigma_{\oS}(dt)\cdot}{\cdot}$ to be 
pointwise positive definite, there exists at each $p\in\M$ an 
$C(p)>0$ such that $\fiber{\sigma_{\oS}(\xi)\cdot}{\cdot}$ is 
positive definite for every $\xi\in J_{g_C}^+(p)$, where 
$J_{g_C}^+(p)\subset 
T_p^*\M$ is the causal future in $T_p^*\M$ w.r.t. the metric
$g_C:=-\beta^2dt^2\oplus C(p)^2h(t)$ defined on $T_p\M$, which is 
hence only defined 
at $p$.

\end{enumerate}
\end{remarks}

The following lemma shows that symmetric weakly-hyperbolic systems are not so far from symmetric hyperbolic systems.

\begin{lemma}\label{l:wHequivH}
Let $\oS$ be any symmetric weakly-hyperbolic system on a globally hyperbolic 
manifold $(\M,g)=(\mathbb{R}\times\Sigma,-\beta^2 
dt^2\oplus h(t))$ with or without any timelike boundary.
Let $C\in C^\infty(\mathbb{R},(0,\infty))$ be a function  depending only on time which satisfies
 {\rm (wH)} in Definition~\ref{def:weakly-hyper}.
Then $(\M,g_C):=(\mathbb{R}\times\Sigma,-\beta^2dt^2\oplus C^2 h(t))$ is globally hyperbolic and $\oS$ is symmetric hyperbolic on 
$(\M,g_C)$.
\end{lemma}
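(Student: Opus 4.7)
The plan is to exploit the $t$-dependent rescaling $\xi' \leftrightarrow C(t)\xi$ between spatial covectors that is implicit in condition (wH). The symmetric hyperbolicity of $\oS$ with respect to $g_C$ will follow essentially tautologically from (wH), while the global hyperbolicity of $(\M, g_C)$ is the more substantial step.

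\textbf{Symmetric hyperbolicity of $\oS$ on $(\M, g_C)$.} For condition (H) on $(\M, g_C)$: given any $g_C$-future-directed timelike $\tilde\tau \in \T_p^*\M$, I normalize (by a positive rescaling, which preserves positive-definiteness of $\fiber{\sigma_\oS(\cdot)\cdot}{\cdot}_p$ thanks to linearity of $\sigma_\oS$ in the covector argument) to $\tilde\tau = dt + \xi'$ with $\xi'$ purely spatial, so that
\[
g_C^\sharp(\tilde\tau, \tilde\tau) = -\beta^{-2} + C^{-2} h^\sharp(\xi', \xi') < 0 .
\]
Setting $\xi := C^{-1}\xi'$ gives $dt + C\xi = \tilde\tau$ and
\[
g^\sharp(dt + \xi, dt + \xi) = -\beta^{-2} + h^\sharp(\xi, \xi) = -\beta^{-2} + C^{-2} h^\sharp(\xi', \xi') < 0 ,
\]
so $dt + \xi$ is $g$-future-directed timelike. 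Hypothesis (wH) then gives $\fiber{\sigma_\oS(\tilde\tau)\cdot}{\cdot}_p = \fiber{\sigma_\oS(dt+C\xi)\cdot}{\cdot}_p > 0$, i.e.\ condition (H) for $\oS$ on $(\M, g_C)$. Since condition (S) is metric-independent, $\oS$ is thereby symmetric hyperbolic on $(\M, g_C)$.

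\textbf{Global hyperbolicity of $(\M, g_C)$, and main obstacle.} Since $C$ depends only on $t$, the metric $g_C = -\beta^2 dt^2 \oplus \tilde h(t)$ with $\tilde h(t) := C(t)^2 h(t)$ retains the orthogonal-split form of Theorem~\ref{thm: Sanchez}. Its gradient $\nabla^{g_C} t = -\beta^{-2}\partial_t$ coincides up to a positive factor with $\nabla^g t$, hence is $g_C$-timelike, tangent to $\bM$, and has the same time orientation; each slice $\Sigma_t$ remains $g_C$-spacelike since $\tilde h(t)$ is Riemannian. The delicate point is the Cauchy character of $\Sigma_t$: one would reparametrize any $g_C$-inextensible causal curve by $t$ as $\gamma(t) = (t, \sigma(t))$ on a maximal interval $(t_-, t_+)$, use the causality bound $h(t)(\dot\sigma, \dot\sigma) < \beta^2/C(t)^2$, and conclude that $(t_-, t_+) = \mathbb{R}$. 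The hard part will be precisely this last step: when $C$ takes values both above and below $1$, the $g$- and $g_C$-causal cones are not pointwise comparable, so one cannot simply transfer $g$-Cauchy curves to $g_C$-Cauchy curves. The key simplification afforded by the assumption $C = C(t)$ is that $C$ and $C^{-1}$ are uniformly bounded on compact $t$-intervals, reducing the problem to a strip-by-strip estimate that should be closed by invoking the global hyperbolicity of $(\M, g)$ directly.
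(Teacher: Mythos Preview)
Your argument for symmetric hyperbolicity is correct and is exactly the paper's: write a $g_C$-timelike future covector as $\lambda(dt+C\check\xi)$, observe that $dt+\check\xi$ is $g$-timelike, and invoke (wH).

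For global hyperbolicity you have identified the same strategy as the paper --- work strip by strip and exploit that $C$ is bounded below on compact $t$-intervals --- but you stop at the point where the argument has to be closed. The paper's execution is as follows. Fix $a<b$ and let $C_0:=\min_{[a,b]}C>0$. Since $C\geq C_0$ on $[a,b]$, one has $\beta^{-2}g_C\leq\beta^{-2}g_{C_0}$ on $(a,b)\times\Sigma$, where $g_{C_0}:=-\beta^2dt^2\oplus C_0^2 h(t)$; hence it suffices to show that $\beta^{-2}g_{C_0}=-dt^2\oplus C_0^2\beta^{-2}h(t)$ is globally hyperbolic there. For this the paper uses a time-rescaling trick: if $\gamma=(\gamma_0,\hat\gamma)$ is an inextensible $\beta^{-2}g_{C_0}$-timelike curve in $(a,b)\times\Sigma$, then $\tilde\gamma:=(C_0^{-1}\gamma_0,\hat\gamma)$ is $\beta^{-2}g$-timelike and inextensible, so it meets any $\{t_0\}\times\Sigma$ exactly once by global hyperbolicity of $g$; pulling back, $\gamma$ meets $\{C_0 t_0\}\times\Sigma$ exactly once. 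This is the concrete step missing from your outline; once you insert it, your proof coincides with the paper's.
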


\begin{proof}
Let $p\in \M$ and let $\xi\in\T^*_p\M$ be $g_C$-timelike, that is, $g_C^\sharp(\xi,\xi)<0$ where $g_C^\sharp=-\beta^{-2}\partial_t^{\otimes 2}\oplus C^{-2}h(t)^\sharp$.
Then there exists unique $\lambda>0$ and $\check{\xi}\in T^*_{\pi_\Sigma(p)}\Sigma$ such that $\xi=\lambda\cdot(dt+C\check{\xi})$ ---here $\pi_\Sigma\colon\M=\mathbb{R}\times\Sigma\to\Sigma$ is the standard projection.
Condition $g_C^\sharp(\xi,\xi)<0$ is then equivalent to $g^\sharp(dt+\check{\xi},dt+\check{\xi})=g_C^\sharp(dt+C\check{\xi},dt+C\check{\xi})<0$, that is, to $dt+\check{\xi}$ being $g$-timelike.
Then condition (wH) implies that $\sigma_{\oS}(dt+C\check{\xi})=\lambda^{-1}\sigma_{\oS}(\xi)$ is positive definite.
This shows that $\oS$ is symmetric hyperbolic on $(\M,g_C)$.
Therefore, only the global hyperbolicity of $g_C$ on $\M$ remains to be proven.
For this, it suffices to show that $\beta^{-2}g_C=-dt^2\oplus\beta^{-2}C^2h(t)$ is globally hyperbolic when 
restricted to any subset of the form $(a,b)\times\Sigma$, with real $a<b$.
But since for all such $a,b$ there exists a positive constant $C_0$ such 
that $C(t)\geq C_0>0$ for all $t\in[a,b]$, we have $\beta^{-2}g_C\leq \beta^{-2}g_{C_0}$ on $[a,b]\times\Sigma$, where $g_{C_0}:=-\beta^2dt^2\oplus C_0^2h(t)$.
Therefore, it suffices to show that $\beta^{-2}g_{C_0}=-dt^2\oplus\beta^{-2}C_0^2h(t)$ is globally hyperbolic on 
$(a,b)\times\Sigma$.
But fixing any $t_0\in(a,b)$ and any inextensible $\beta^{-2}g_{C_0}$-timelike curve (which is 
$C^0$ and piecewise $C^1$) $\gamma=(\gamma_0,\hat{\gamma})$ in 
$(a,b)\times\Sigma$, the curve 
$\tilde{\gamma}:=(C_0^{-1}\gamma_0,\hat{\gamma})$ is $\beta^{-2}g$-timelike and still inextensible, therefore it meets $\{t_0\}\times\Sigma$ exactly once.
This shows $\beta^{-2}g_{C_0}$ and therefore $\beta^{-2}g_C$ (hence $g_{C}$) to be globally hyperbolic on $(a,b)\times\Sigma$.
This finishes the proof.
\end{proof}

\begin{example}
Let $\M=\mathbb{R}\times\Sigma$ be a globally hyperbolic manifolds and let $\underline{\mathbb{R}}$ be trivial line bundle over $\M$.
Then any future directed timelike vector field $X\in\Gamma(\T\M)$ defines an operator $\oS:=\nabla_X$ which is a symmetric weakly-hyperbolic system if and only if the projection $g(X,\partial_t)^{-1}X_{\Sigma_t}$ is bounded along $\Sigma_t$, where $X_{\Sigma_t}$ denotes the projection of $X$ on $\Sigma_t$ .
This applies in particular for $X=\partial_t+v$, $v\in\Gamma(\Sigma)$, the resulting transport equation being known with the name of Vlasov equation once applied in kinetic theory.
\end{example}

The definition \ref{def:admissible bc} of admissible boundary condition can be straightforwardly generalized for a symmetric weakly-hyperbolic system $\oS$.
The resulting connection with standard hyperbolic systems is described by the following lemma.

\begin{lemma}\label{lem:2}
Let $\oB$ be a future/past admissible boundary space for a  symmetric 
weakly-hyperbolic system $\oS$ over a globally hyperbolic manifold $(\M,g)$.
Then 
$\oB$ is future/past admissible for $\oS$ over $(\M,g_C)$. Furthermore, if $\oS$ 
is of constant characteristic on $(\M,g)$ then it is also of constant 
characteristic of $(\M,g_C)$.
\end{lemma}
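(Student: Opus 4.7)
The plan is to reduce everything to a simple comparison of the outward unit normals to $\bM$ with respect to the two metrics $g$ and $g_C$, since the admissibility conditions in Definition~\ref{def:admissible bc} are entirely formulated in terms of the quadratic form $\fiber{\sigma_{\oS}(\n^\flat)\cdot}{\cdot}$ on fibres of $\E|_{\bM}$ and the principal symbol $\sigma_\oS$ is a metric-independent object.

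First I would observe that, since $g$ and $g_C$ share the same Cauchy temporal function $t$ (whose gradient is tangent to $\bM$) and only differ by a positive conformal rescaling along the spatial slices $\Sigma_t$, the boundary $\bM$ is the same set and its outward unit normal at each point is purely spatial. Writing $\n$ for the outward $g$-unit normal and $\tilde{\n}$ for the outward $g_C$-unit normal, the relation $g_C|_{T\Sigma_t}=C^2 h(t)$ immediately gives $\tilde{\n}=C^{-1}\n$. The corresponding musical isomorphisms then yield
\[
\tilde{\n}^{\flat_{g_C}}(v)=g_C(\tilde{\n},v)=C^2\,h(t)(C^{-1}\n,v)=C\cdot g(\n,v)=C\,\n^{\flat_g}(v),
\]
so that $\tilde{\n}^\flat=C\,\n^\flat$ as covectors on $\bM$. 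By linearity of the principal symbol, $\sigma_\oS(\tilde{\n}^\flat)=C\,\sigma_\oS(\n^\flat)$ as Hermitian endomorphisms of $\E|_{\bM}$.

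Now I would transfer the three admissibility conditions one by one, using that $C>0$. The pointwise kernel $\oB_\pm$ is still a smooth subbundle of $\E|_{\bM}$, which settles (i-f)/(i-p) since this condition does not involve any metric. For (ii-f)/(ii-p), the quadratic form $\fiber{\sigma_\oS(\tilde{\n}^\flat)\Psi}{\Psi}=C\,\fiber{\sigma_\oS(\n^\flat)\Psi}{\Psi}$ is positive (resp.\ negative) semi-definite on $\oB_\pm$ since $C$ is a positive smooth function. For (iii-f)/(iii-p), the endomorphisms $\sigma_\oS(\tilde{\n}^\flat)$ and $\sigma_\oS(\n^\flat)$ have the same signs of eigenvalues counted with multiplicity, so the rank condition is unchanged. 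This proves the first part of the lemma.

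For the second assertion, the constant characteristic assumption reads $\dim\ker\sigma_\oS(\n^\flat)=\mathrm{const}$ on $\bM$. Since $C>0$, we have $\ker\sigma_\oS(\tilde{\n}^\flat)=\ker\sigma_\oS(\n^\flat)$ pointwise, so the dimension function is unchanged and $\oS$ remains of constant characteristic with respect to $g_C$. The only minor obstacle is the verification that $\tilde{\n}=C^{-1}\n$ and the resulting rescaling of $\n^\flat$; everything else then follows automatically, essentially mimicking the argument used in Lemma~\ref{lem:conf transf}, which handles the full conformal case.
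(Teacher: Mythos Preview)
Your proposal is correct and follows essentially the same approach as the paper: both reduce the claim to the observation that the outward unit normals to $\bM$ for $g$ and $g_C$ differ by a strictly positive smooth function (because the Cauchy temporal function has gradient tangent to $\bM$, forcing the normal to be purely spatial). The paper's proof is a two-sentence sketch stating exactly this, whereas you spell out the computation $\tilde{\n}^\flat=C\,\n^\flat$ and verify each admissibility condition explicitly, which is a welcome expansion but not a different argument.
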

\begin{proof}
To verify our claim is it enough to notice that the unit normal vectors to $\bM$ are the same up to a strictly positive smooth function.
This is due to the choice of a 
Cauchy temporal function whose gradient is tangent to the boundary.
\end{proof}

Combining Lemmas~\ref{l:wHequivH} and~\ref{lem:2}, we can conclude that the 
Cauchy problem for a symmetric weakly-hyperbolic system $\oS$ is well-posed. 
Indeed, these two lemmas guarantee that any smooth solution propagates with at 
most speed of light (w.r.t. $g_C$). 
Therefore, the Cauchy problem can be 
equivalently reformulated in terms of a Cauchy problem for a symmetric positive 
system with $\sigma_{\oS}(dt)>0$ in a globally hyperbolic manifold with compact 
Cauchy surfaces. 
We summarize our results in the following theorem and we leave 
the details to the reader.

\begin{theorem}[Smooth solutions for symmetric 
weakly-hyperbolic systems]\label{thm:main1}
Let $\M$ be a globally hyperbolic spacetime with timelike boundary and let 
$\oS$ be a symmetric weakly-hyperbolic system 
of constant characteristic.
Assume $\pi_{\oB_+},\pi_{\oB_-}$ to be future and past admissible boundary conditions for $\oS$.
Let $\Sigma_{t_0}$ be any smooth spacelike Cauchy hypersurface in $\M$.
Then, for every $\f\in\Gamma_{c}(\E)$ and $h\in\Gamma_{c}(\E_{|_{\Sigma_{t_0}}})$ 
satisfying the compatibility conditions~\eqref{eq:comp cond data} up to any 
order, there exists a unique $\Psi\in\Gamma(\E)$ satisfying the Cauchy problem
\begin{equation}\label{Cauchysmooth}
\begin{cases}{}
{\oS }\Psi=\f  \\
\Psi_{|_{\Sigma_{t_0}}} = \h   \\
\Psi_{|_{\bM\cap J^+(\Sigma_{t_0})}}\in\oB_+\\
\Psi_{|_{\bM\cap J^-(\Sigma_{t_0})}}\in\oB_-
\end{cases} 
\end{equation}
 and the map $(\f,\h)\mapsto\Psi$ 
sending a pair $(\f,\h)\in\Gamma_{c}(\E)\times\Gamma_{c}(\E_{|_{\bM}})$ to the 
solution $\Psi\in\Gamma_{sc}(\E)$ of \eqref{Cauchysmooth}, is continuous. 
 \end{theorem}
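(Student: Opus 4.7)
The plan is to reduce the claim to the well-posedness of the Cauchy problem for \emph{symmetric hyperbolic} systems on globally hyperbolic manifolds with timelike boundary, which has already been established in Theorem~\ref{thm:SWHS}. The key mechanism for this reduction is packaged into the two preparatory lemmas~\ref{l:wHequivH} and~\ref{lem:2}.

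First, by Lemma~\ref{l:wHequivH} I would pick a smooth positive function $C$ (depending only on the Cauchy temporal variable $t$) so that the auxiliary metric $g_C=-\beta^2 dt^2\oplus C^2 h(t)$ is globally hyperbolic on $\M$ and $\oS$ is symmetric hyperbolic on $(\M,g_C)$. Next, Lemma~\ref{lem:2} guarantees that the future/past admissible boundary spaces $\oB_\pm$ remain admissible for $\oS$ on $(\M,g_C)$ and that the constant-characteristic property is preserved; this rests on the fact that $t$ has $g$-gradient tangent to $\bM$, so the outward unit normal to $\bM$ changes only by a positive scalar factor when passing from $g$ to $g_C$. Since $t$ remains a Cauchy temporal function on $(\M,g_C)$, the hypersurface $\Sigma_{t_0}$ is still spacelike Cauchy there.

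I would then observe that the decomposition $\oS=\sigma_\oS(dt)\nabla_{\partial_t}-\H$ and the compatibility conditions of Definition~\ref{def:compcond} depend only on $\oS$, on the fiber metric $\fiber{\cdot}{\cdot}$, on a fixed metric connection $\nabla$ on $\E$, on the function $t$, and on the boundary projectors $\pi_{\oB_\pm}$---none of which involves $g$. Hence the given data $(\f,\h)$ satisfy the compatibility conditions on $(\M,g_C)$ verbatim, and Theorem~\ref{thm:SWHS} applied on $(\M,g_C)$ yields the unique smooth solution $\Psi$ of~\eqref{Cauchysmooth} together with continuous dependence on $(\f,\h)$. Finally, to check that $\Psi\in\Gamma_{sc}(\E)$ in the sense of $g$, I would observe that the Cauchy foliation $\{\Sigma_t\}$ is shared by $g$ and $g_C$, so compactness of $\supp(\Psi)\cap\Sigma_t$ for every $t$ ---which is precisely $g_C$-spacelike compactness--- is equivalent to $g$-spacelike compactness.

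The main (and arguably only) obstacle in this plan lies upstream in Lemma~\ref{l:wHequivH}: producing a time-only $C$ for which $g_C$ is globally hyperbolic on the whole of $\M$ and condition (wH) holds requires controlling the pointwise constants $C(p)$ of Remark~\ref{rem:wHandgalpha}(2) uniformly on each time slab $[a,b]\times\Sigma$. Once this step is granted ---as it already is by the definition of weakly-hyperbolic together with the quoted lemma--- the present theorem follows with no further obstruction, the argument being essentially a bookkeeping exercise that matches the hyperbolic Cauchy problem on $(\M,g_C)$ with the weakly-hyperbolic one on $(\M,g)$.
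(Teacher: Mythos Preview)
Your proposal is correct and follows precisely the route sketched in the paper: the authors state just before the theorem that the result follows by combining Lemmas~\ref{l:wHequivH} and~\ref{lem:2} to view $\oS$ as a genuine symmetric hyperbolic system on $(\M,g_C)$ and then invoking Theorem~\ref{thm:SWHS}, leaving the details to the reader. Your write-up supplies exactly those details (invariance of the compatibility conditions, persistence of the Cauchy foliation, equivalence of spacelike compactness), and your remark about the passage to a time-only $C$ pinpoints the one place where the paper is also silent.
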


As usual, as a byproduct of the well-posedness of the Cauchy problem, we get 
the existence of Green operators.
 
\begin{proposition}\label{prop:Green}
A symmetric weakly-hyperbolic system $\oS$ of constant 
characteristic on a globally hyperbolic manifold with timelike boundary coupled 
with an admissible boundary 
condition $\oB=(\oB_+,\oB_-)$ is Green-hyperbolic, \ie, there exist linear maps, called 
advanced/retarded Green operator respectively,
$\G^\pm\colon \Gamma_{c}(\E)  \to \Gamma_{sc,\oB_\pm}(\E)$ satisfying
\begin{enumerate}[(i)]
\item\label{prop:Green1}
$\oS\circ \G^\pm \f=\f$ for all $ \f \in 
\Gamma_{c}(\E)$ and $\G^\pm\circ\oS\f=\f$ for all $ \f \in 
\Gamma_{c,\oB_\pm}(\E)$;
\item\label{prop:Green2}  $\supp(\G^\pm \f ) \subset J_{g_C}^\pm (\supp \f )$ 
for all $\f\in\Gamma_{c}(\E)\,,$
\end{enumerate}
where $J_{g_C}^\pm$ denote the causal future (+) and past ($-$) w.r.t. 
$g_C$ and $\Gamma_{\sharp,\oB_\pm}(\E)\subset\Gamma_{\sharp}(\E)$, 
$\sharp\in\{sc,c\}$ denotes the space of smooth  sections on $\E$ (with $\sharp$ 
support property) which fulfil the $\oB_\pm$-boundary condition.\\
Moreover, let $\mathsf{G}:=\mathsf{G}^+-\mathsf{G}^-\colon\Gamma_{c}(\E)\to\Gamma_{sc,\oB_++\oB_-}(\E)$ be the causal propagator associated with $\oS$ and $\oB$.
Then the following sequence is a complex
\begin{align*}
0\to
\Gamma_{c,\oB_+\cap\oB_-}(\E)
\stackrel{\oS}{\to}\Gamma_{c}(\E)
\stackrel{\mathsf{G}}{\to}\Gamma_{sc,\oB_++\oB_-}(\E)
\stackrel{\oS}{\to}\Gamma_{sc}(\E)\to 0
\end{align*}
which satisfies $\ker(\oS_{|_{\Gamma_{c,\oB_+\cap\oB_-}(\E)}})=\{0\}$, 
$\ker(\mathsf{G})=\oS\Gamma_{c,\oB_+\cap\oB_-}(\E)$ and 
$\oS\Gamma_{sc,\oB_++\oB_-}(\E)=\Gamma_{sc}(\E)$.
Moreover, if $\oB$ is self-adjoint, \ie $\oB_+=\oB_-$, then $\ker(\oS_{|_{\Gamma_{sc,\oB_+}(\E)}})=\mathsf{G}\Gamma_c(\E)$ and $\oS\colon\Gamma_{sc,\oB_+}(\E)\to\Gamma_{sc}(\E)$ is surjective, so that the complex is exact everywhere.
In that case, the solution space $\sol_{sc,\oB}(\oS):=\Gamma_{sc,\oB_+}(\E)\cap\ker(\oS)$ is characterized as
\begin{align}\label{Eq: characterization of solution space}
\sol_{sc,\oB}(\oS)
=\mathsf{G}\Gamma_{c}(\E)
\simeq\rquot{\Gamma_{c}(\E)}{\oS\Gamma_{c,\oB_+}(\E)}\,.
\end{align}
\end{proposition}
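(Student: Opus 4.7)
My plan is to first build the advanced and retarded Green operators by solving auxiliary Cauchy problems with vanishing initial data, then deduce the properties of the complex from uniqueness, finite speed of propagation (with respect to $g_C$), and the self-adjointness of the boundary condition when available. For $\f\in\Gamma_c(\E)$, pick Cauchy surfaces $\Sigma_{t_\pm}$ with $\supp\f\subset J^\mp_{g_C}(\Sigma_{t_\pm})$ (this is possible since $\supp\f$ is compact and $g_C$ is globally hyperbolic). Apply Theorem~\ref{thm:main1} with source $\f$, zero initial data on $\Sigma_{t_\mp}$ and boundary condition $\oB_\pm$; the zero initial data trivially fulfil the compatibility conditions~\eqref{eq:comp cond data} (since the iterative sequence $(\h_k)_k$ reduces to time derivatives of $\sigma_{\oS}^{-1}(dt)\f$ which vanish near $\partial\Sigma_{t_\mp}$ by the support condition). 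Define $\G^\pm\f$ to be that unique solution. Property (i) follows from uniqueness in Theorem~\ref{thm:main1}: for $\f\in\Gamma_{c,\oB_\pm}(\E)$, both $\f$ and $\oS\G^\pm\f=\f$ enjoy the same Cauchy data on $\Sigma_{t_\mp}$, hence coincide; the support estimate (ii) is the finite-propagation-speed statement of Theorem~\ref{thm:main1} applied to $g_C$ (\emph{cf.} Lemma~\ref{l:wHequivH}).

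Next I would verify that the sequence is a complex. The relation $\mathsf{G}\oS=0$ on $\Gamma_{c,\oB_+\cap\oB_-}(\E)$ is immediate from (i), and $\oS\mathsf{G}=0$ on $\Gamma_c(\E)$ follows again from (i). Injectivity of $\oS$ on $\Gamma_{c,\oB_+\cap\oB_-}(\E)$ is obtained by noting that a compactly supported solution of $\oS u=0$ must vanish identically: outside the compact support of $u$, one has zero Cauchy data on some $\Sigma_{t_0}$ in the past of $\supp u$, so by uniqueness $u\equiv 0$ in the future of $\Sigma_{t_0}$, hence everywhere. For the kernel of $\mathsf{G}$: if $\mathsf{G}\f=0$ then $\psi:=\G^+\f=\G^-\f$ has, by (ii), support contained in $J^+_{g_C}(\supp\f)\cap J^-_{g_C}(\supp\f)$, hence is compactly supported; moreover $\psi$ satisfies both the $\oB_+$ and the $\oB_-$ boundary condition, so $\psi\in\Gamma_{c,\oB_+\cap\oB_-}(\E)$ and $\oS\psi=\f$.

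To show $\oS\Gamma_{sc,\oB_++\oB_-}(\E)=\Gamma_{sc}(\E)$, given $\f\in\Gamma_{sc}(\E)$ I would pick a partition of unity $1=\chi_+ + \chi_-$ subordinate to a cover by a past and a future of two Cauchy surfaces, write $\f=\f_++\f_-$ with $\f_\pm:=\chi_\pm\f$ having support in $J^\pm_{g_C}(\Sigma_{t_0})$ respectively, and set $u:=\G^+\f_+ + \G^-\f_-$. Spacelike compactness of $u$ follows from (ii); its boundary values lie in $\oB_++\oB_-$ by construction; and $\oS u=\f_++\f_-=\f$. Finally, in the self-adjoint case $\oB_+=\oB_-=:\oB_0$, any $u\in\sol_{sc,\oB}(\oS)$ can be written as $u=\G\f$ for some $\f\in\Gamma_c(\E)$: decompose $u=\chi_+ u+\chi_- u$ where $\chi_\pm$ are time cutoffs as above and set $\f:=\oS(\chi_+ u)=-\oS(\chi_- u)$, which is compactly supported; then $\chi_+ u$ and $\G^+\f$ both solve the forward problem with the same data and the self-adjointness $\oB_+^\dagger=\oB_+$ ensures compatibility of the boundary values with the future admissible space, so $\chi_+ u=\G^+\f$, and symmetrically $-\chi_- u=\G^-\f$, giving $u=\G\f$. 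Identification \eqref{Eq: characterization of solution space} then follows from the kernel description of $\mathsf{G}$.

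\textbf{Main obstacle.} I expect the subtlest point to be the surjectivity and the characterization of the solution space in the self-adjoint case: the partition of unity argument must be checked to preserve the boundary conditions, which is where the hypothesis $\oB_+=\oB_-$ (and hence the fact that $\chi_\pm u$ still lies in $\oB_0$) is genuinely used. The verification of the compatibility conditions for the Cauchy data arising from $\oS(\chi_\pm u)$ must also be carried out, but since $u$ itself satisfies both boundary conditions and $\chi_\pm$ depends only on $t$, this reduces to already-known compatibility for $u$.
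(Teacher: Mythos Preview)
Your overall strategy is sound and tracks the paper's proof closely, but there is one genuine gap and one interesting divergence worth noting.

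\textbf{The gap: surjectivity.} In your argument for $\oS\Gamma_{sc,\oB_++\oB_-}(\E)=\Gamma_{sc}(\E)$ you write $\f=\f_++\f_-$ with $\f_\pm=\chi_\pm\f$ and then set $u:=\G^+\f_++\G^-\f_-$. But $\f_\pm$ are only spacelike compact, not compactly supported, so $\G^\pm\f_\pm$ is not defined: the Green operators were constructed (and Theorem~\ref{thm:main1} was stated) only for $\f\in\Gamma_c(\E)$. The paper closes this gap by an exhaustion argument: one covers $\M$ by time strips $\M_{(-n,n)}=(-n,n)\times\Sigma$, multiplies $\f_+$ by a timelike-compact cutoff $\chi_n$ so that $\chi_n\f_+\in\Gamma_c(\E)$, sets $u_n^+:=\G^+(\chi_n\f_+)$, and checks via uniqueness of the forward Cauchy problem that $u_m^+=u_n^+$ on $\M_{(-n,n)}$ for $m>n$, yielding a well-defined limit $u^+\in\Gamma_{sc,\oB_+}(\E)$ with $\oS u^+=\f_+$. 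Your proposal needs exactly this step (or an equivalent extension of $\G^\pm$ to past/future-compact sources) to be complete.

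\textbf{A simpler route you found.} For the characterization $\ker(\oS|_{\Gamma_{sc,\oB_+}(\E)})=\mathsf{G}\Gamma_c(\E)$ in the self-adjoint case, your direct argument---set $v:=\oS(\chi_+u)=-\oS(\chi_-u)$ and identify $\chi_+u=\G^+v$, $-\chi_-u=\G^-v$ by uniqueness of the forward/backward Cauchy problem, using that $(\chi_\pm u)|_{\bM}=\chi_\pm(u|_\bM)\in\oB_+=\oB_-$---is correct and more elementary than the paper's route. The paper instead establishes the duality $(\G^\pm)^*=\G_\dagger^\mp$ (Green operators for $\oS^\dagger$ with boundary condition $\oB^\dagger$) and then verifies $\G v=u$ weakly via two integrations by parts, where the boundary terms vanish because $u_\pm|_\bM\in\oB_+=\oB_+^\dagger$. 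Your approach avoids the adjoint operator entirely; the paper's approach has the advantage of making the role of $\oB_+^\dagger=\oB_+$ explicit and yields the identity $(\G^\pm)^*=\G_\dagger^\mp$ as a byproduct, which is of independent interest.
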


\begin{proof}
Properties \eqref{prop:Green1} and \eqref{prop:Green2} are satisfied by 
definition of $\mathsf{G}^\pm$.
As a straightforward consequence, $\mathsf{G}\oS=0$ on $\Gamma_{c,\oB_+\cap\oB_-}(\E)$ 
and $\oS\mathsf{G}=0$ on $\Gamma_{c}(\E)$, therefore the sequence is a 
complex.
Note that, since $(\mathsf{G}^\pm u)_{|_{\bM}}\in\oB_\pm$, we have $(\mathsf{G}u)_{|_{\bM}}\in\Gamma_{sc,\oB_++\oB_-}(\E)$ but beware that there is no reason why $\oB_++\oB_-=\E$ in general.\\
The injectivity of $\oS_{|_{\Gamma_{c,\oB_\pm}(\E)}}$ immediately follows from 
property \eqref{prop:Green1} since $\oS u=0$ for a $u\in \Gamma_{c,\oB_\pm}(\E)$ 
yields $u=\mathsf{G}^\pm\oS u=0$.
As a consequence, $\oS_{|_{\Gamma_{c,\oB_+\cap\oB_-}(\E)}}$ is injective.\\
To show that $\ker(\mathsf{G})\subset\oS\left(\Gamma_{c,\oB_+\cap\oB_-}(\E)\right)$, let 
$u\in \Gamma_{c}(\E)$ with $\mathsf{G}u=0$.
Then $\mathsf{G}^+u=-\mathsf{G}^-u$, so that $\supp\mathsf{G}^+u\subset 
J_{g_C}^+(\supp u)\cap J_{g_C}^-(\supp u)$ must be compact by property 
\eqref{prop:Green2}.
Moreover, because $(\mathsf{G}^\pm u)_{|_{\bM}}\in\oB_\pm$, we have $\mathsf{G}^+u\in\Gamma_{c,\oB_+\cap\oB_-}(\E)$.
Therefore $\mathsf{G}^+u\in\Gamma_{c,\oB_+\cap\oB_-}(\E)$ and 
satisfies $\oS\mathsf{G}^+u=u$ by property \eqref{prop:Green1}, from which 
$u\in\oS\left(\Gamma_{c,\oB_+\cap\oB_-}(\E)\right)$ follows.\\
From now on let us assume $\oB_+=\oB_-$ and prove that $\ker\left(\oS_{|_{\Gamma_{sc,\oB_+}(\E)}}\right)\subset\mathsf{G}\left(\Gamma_{c}
(\E)\right)$.
Let $u\in\Gamma_{sc,\oB_+}(\E)$ be such that $\oS u=0$.
By definition, there exists a compact subset $K$ of $\M$ such that $\supp 
u\subset J_{g_C}^+(K)\cup J_{g_C}^-(K)$.
Up to possibly enlarging $K$, we may assume that $\supp u\subset 
I_{g_C}^+(K)\cup I_{g_C}^-(K)$, where $I_{g_C}^+$ and $I_{g_C}^-$ denote 
the chronological future and past w.r.t. $g_C$ respectively.
Let $\{\chi_+,\chi_-\}$ be a partition of unity subordinated to the open 
covering $\{I_{g_C}^+(K), I_{g_C}^-(K)\}$ of $I_{g_C}^+(K)\cup 
I_{g_C}^-(K)$.
Let $u_\pm:=\chi_\pm u$.
Then $u=u_++u_-$, where each $u_\pm$ is smooth with $\supp u_\pm\subset 
I_{g_C}^\pm(K)$.
Furthermore, because $u_\pm$ is obtained by pointwise multiplication of $u$ by  
a real number, we have $u_\pm{}_{|_{\bM}}\in\oB_+$.
Let $v:=\oS u_+(=-\oS u_-)$.
Then $v$ is smooth with support contained in $J_{g_C}^+(K)\cap J_{g_C}^-(K)$, 
therefore $\supp v$ is compact.
We would like to check that $\mathsf{G}v=u$ in the weak -- and therefore also in the strong 
-- sense.
For that, we need the following fact: if 
$(\mathsf{G}^\pm)^*$ denotes the 
formal adjoint of $\mathsf{G}^\pm$, then actually
\[(\mathsf{G}^\pm)^*=\mathsf{G}_\dagger^\mp\]
holds, where $\mathsf{G}_\dagger^+$ and $\mathsf{G}_\dagger^-$ are the Green operators for $\oS^\dagger$ with boundary condition $\oB^\dagger:=(\oB_+^\dagger,\oB_-^\dagger)$.
Recall that, if $\oB$ is a future/past admissible boundary condition for $\oS$, then  
$\oB^\dagger$ is a future/past admissible boundary condition for $\oS^\dagger$.
Moreover, $\oS^\dagger$ becomes a symmetric weakly hyperbolic system on $\M$ 
with reversed time 
orientation, in particular $\oS^\dagger$ has unique advanced and retarded Green 
operators as well.\\
To check that $(\mathsf{G}^\pm)^*=\mathsf{G}_\dagger^\mp$, let $\varphi,\psi$ 
be arbitrary in $\Gamma_{c}(\E)$.
Because of 
$\supp\mathsf{G}^\pm\varphi\cap\supp\mathsf{G}_\dagger^\mp\psi$ being compact, 
we may perform the following partial integration on $\M$:
\begin{eqnarray*}
\int_{\M}\fiber{\mathsf{G}^\pm\varphi}{\psi}\vol_{\M}&=&\int_{\M}\fiber{\mathsf{
G}^\pm\varphi}{\oS^\dagger\mathsf{G}_\dagger^\mp\psi}\vol_{\M}\\
&=&\int_{\M}\fiber{\oS\mathsf{G}^\pm\varphi}{\mathsf{G}_\dagger^\mp\psi}\vol_{\M
}-\int_{\bM}\fiber{\sigma_{\oS}(\n^\flat)\mathsf{G}^\pm\varphi}{\mathsf{G}
_\dagger^\mp\psi}\vol_{\bM}\\
&=&\int_{\M}\fiber{\varphi}{\mathsf{G}_\dagger^\mp\psi}\vol_{\M
},
\end{eqnarray*}
where the boundary term vanishes by $\mathsf{G}^\pm\varphi_{|_{\bM}}\in\oB_\pm$ and 
$\mathsf{G}_\dagger^\mp\psi_{|_{\bM}}\in\oB_\pm^\dagger$.
This shows $(\mathsf{G}^\pm)^*=\mathsf{G}_\dagger^\mp$.
Now given any $\psi\in\Gamma_{c}(\E)$, we have 
\begin{eqnarray*}
\int_{\M}\fiber{\mathsf{G}^\pm 
v}{\psi}\vol_{\M}&=&\int_{\M}\fiber{v}{(\mathsf{G}^\pm)^* \psi}\vol_{\M}\\
&=&\int_{\M}\fiber{v}{\mathsf{G}_\dagger^\mp \psi}\vol_{\M}\\
&=&\pm\int_{\M}\fiber{\oS u_\pm}{\mathsf{G}_\dagger^\mp \psi}\vol_{\M}\\
&=&\pm\int_{\M}\fiber{u_\pm}{\oS^\dagger\mathsf{G}_\dagger^\mp \psi}\vol_{\M}\qquad\textrm{because }u_\pm{}_{|_{\bM}}\in\oB_+=\oB_+^\dagger\\
&=&\pm\int_{\M}\fiber{u_\pm}{\psi}\vol_{\M},
\end{eqnarray*}
where we have used in a crucial way that $\mathsf{G}_\dagger^\mp 
\psi_{|_{\bM}}\in\oB_\mp^\dagger$ and that $u_\pm{}_{|_{\bM}}\in\oB_+$ as well as $\oB_+^\dagger=\oB_+$.
Therefore, $\mathsf{G}^\pm v=\pm u_\pm$ and
$\mathsf{G}v=u_++u_-=u$, as we claimed.\\
It remains to look at the surjectivity of 
$\oS\colon\Gamma_{sc,\oB_+}(\E)\to\Gamma_{sc}(\E)$, still with the assumption that $\oB_+=\oB_-$.
Let $\f\in\Gamma_{sc}(\E)$ be given and $K\subset\M$ be compact such that 
$\supp\f\subset J_{g_C}^+(K)\cup J_{g_C}^-(K)$.
As above, up to enlarging $K$ we may assume that $\f=\f_++\f_-$, where 
$\f_\pm\in\Gamma_{sc}(\E)$ with $\supp\f_\pm\subset J_{g_C}^\pm(K)$.
By Theorem \ref{thm: Sanchez} the spacetime $\M$ can be identified with $\RR\times\Sigma$, where $\Sigma$ a smooth spacelike Cauchy hypersurface of $\M$.
For each $n\in\mathbb{N}$ we
let $\M_{(-n,n)}:=(-n,n)\times\Sigma$, where 
$\Sigma\simeq\{0\}\times\Sigma$.
Note that $\M_{(-n,n)}$ is still a globally hyperbolic spacetime with timelike boundary.
Let $\chi_n$ be a smooth function with timelike compact support such that 
$\chi_n{}_{|_{\M_{(-n,n)}}}=1$.
Then $\chi_n\f_+$ lies in $\Gamma_c(\E)$ and we may consider $u_n:=\mathsf{G}^+\chi_n \f_+\in\Gamma_{sc,\oB_+}(\E)$.
Now letting $u^+(x):=u_n^+(x)$ for every $x\in\M_{(-n,n)}$ defines a smooth 
section of $\E$ on $\M$ with $\oS u^+=\f_+$, for if e.g. $m>n$ then
$v:=u_m^+-u_n^+$ is a smooth spacelike compact section of $\E$ satisfying $\oS 
v=0$ on $\M_{(-n,n)}$ as well as $v_{|_{\M\setminus J^+(\supp\f_+)}}=0$ and 
$v_{|_{\bM_{(-n,n)}}}\in\oB_+$, so that $v=0$ on $\M_{(-n,n)}$ by uniqueness of 
the solution of the forward Cauchy problem.
The support of $u^+$ must be contained in 
$J_{g_C}^+(K)$ since this is the case for the support of each $u_n^+$.
Analogously, there exists a $u^-\in\Gamma_{sc,\oB_-}(\E)=\Gamma_{sc,\oB_+}(\E)$ with $\oS u^-=\f^-$ and 
therefore $\oS(u^++u^-)=\f$.
This proves the surjectivity of 
$\oS\colon\Gamma_{sc,\oB_+}(\E)\to\Gamma_{sc}(\E)$ and concludes the proof of 
Proposition \ref{prop:Green}.
\end{proof}

For further details we refer to~\cite[Proposition 5.1]{Ginoux-Murro-20}, \cite[Proposition 20]{Dappiaggi-Drago-Longhi-20} and
\cite[Proposition 36]{Dappiaggi-Drago-Ferreira-19}.

\subsection{M\o ller operators on manifolds with timelike boundary}\label{sec:Moller}

In~\cite{defarg} a geometric process was realized to compare solutions of 
symmetric hyperbolic systems
on different globally hyperbolic manifolds $\M_0:=(\M,g_0)$ and $\M:=(\M,g_1)$ 
with empty boundary, provided that $\M_0$ and $\M_1$ admit the same Cauchy 
temporal function and $g_1\leq g_0$, namely the set of timelike vectors for 
$g_1$ is contained in the one for $g_0$. 
This was achieved via the construction of a family of so-called M\o ller operators \cite{Moller,Drago-Hack-Pinamonti-17,HW-05}.
The aim of this paper is to generalize that construction on manifolds with 
timelike boundary, where the assumption $g_1\leq g_0$ is adapted 
to the situation.
\medskip

\noindent Let us introduce the following setup:

\begin{setup}\label{setup}
\noindent\begin{enumerate}[(i)]
\item 
{$\M_0=(\M,g_0)$ and $\M_1=(\M,g_1)$ are globally hyperbolic manifolds 
with timelike boundary and with the same Cauchy temporal function 
$t\colon\M\to\RR$.
Moreover, by realizing $(\M,g_i)=(\mathbb{R}\times\Sigma,-\beta_i^2dt^2\oplus 
h_i(t))$ for $i=0,1$ ---\textit{cf.} Theorem \ref{thm: Sanchez}--- we shall 
assume that there exists a smooth positive function $C> 0$ such that} 
\[C^2 \beta_1^{-2} {h_1}(t) \leq \beta_0^{-2} 
{h_0}(t)\]
holds for every $p\in\M$ and $g_C:=-\beta_1^2 dt^2 \oplus C^2 h_1(t)$ is globally hyperbolic;
\item $\E_1$ (resp. $\E_0$) is a $\KK$-vector bundle over $\M_1$ (resp. $\M_0$)  
with finite rank and endowed with a nondegenerate bilinear or sesquilinear 
fiber metric $\fiber{\cdot}{\cdot}_1$ (resp. $\fiber{\cdot}{\cdot}_0$);
\item 
{$\kappa_{1,0}\colon \E_0\to \E_{1}$ is a fiberwise linear isometry of vector bundles with inverse $\kappa_{0,1}\colon\E_1\to\E_0$.
With a slight abuse of notation we shall also denote by $\kappa_{1,0}\colon \Gamma(\E_0)\to \Gamma(\E_{1})$ the corresponding linear map between sections defined by $[\kappa_{1,0}u](x)=\kappa_{1,0}u(x)$ for all $u\in\Gamma(\E_0)$ and $x\in\M$.
Finally for any positive $f\in C^\infty(\M)$, we set 
$\bkappa_{1,0}:=f\; \kappa_{1,0}\colon\Gamma(\E_0)\to\Gamma(\E_1)$ with inverse $\bkappa_{0,1}:=f^{-1}\; \kappa_{0,1}$;}

\item\label{assumption:S0andS1} $\oS_1$ (\textit{resp.} $\oS_0$) is a symmetric 
weakly-hyperbolic system with self-adjoint 
admissible boundary space we shall denote by $\oB_1$ (\textit{resp.} $\oB_0$).
Moreover we shall assume that $\dim\ker\sigma_{\oS_1}(\xi)$ is 
constant for any nonzero spacelike covector $\xi\in\T^*\M_1$;
\item\label{assumption:S01}  Let $\oS_{0,1}^f\colon\Gamma(\E_1)\to\Gamma(\E_1)$ be 
the operator defined by 
$\oS_{0,1}^f:=\bkappa_{1,0}\oS_0\bkappa_{0,1}$.
We assume there exists a linear isometry 
 $\wp_{1,0}\colon\T^*\M_0\to\T^*\M_1$ 
such that 
$ \sigma_{\oS_{0,1}^f}(\xi) = \sigma_{\oS_1}(\wp_{1,0}\xi)$ for every 
$\xi\in\T^*\M_1$.
\end{enumerate}
\end{setup}

\begin{remarks}\label{rmk:JC}
\noindent\begin{enumerate}
\item          
The assumption (i) in the Setup~\ref{setup} can be equivalently rephrased  as 
follows. 
Consider the metric $g_C:= -\beta_1^{2}dt^2 \oplus C^2  h_1(t)$ 
which is globally hyperbolic on account of Definition~\ref{def:weakly-hyper}. Then we get 
the following two situations: for any vector $v\in\T\M$, we get
\begin{itemize}
\item[($ C\leq 1$)]   $ g_C(v,v) \leq g_0(v,v)$ and $ g_C(v,v) \leq g_1(v,v)$, which implies that $J^\pm_{g_0}\cup J^\pm_{g_1}\subset J^\pm_{g_C}$.
\item[($C\geq 1$)] 
 $g_1(v,v)\leq g_C(v,v) \leq  g_0(v,v) $, which implies that $J^\pm_{g_0}\subset J^\pm_{g_C} \subseteq J^\pm_{g_1}$.
\begin{figure}[h!]
\vspace{-8mm}
\begin{subfigure}[t]{8.5cm}
\centering
\includegraphics[scale=0.35]{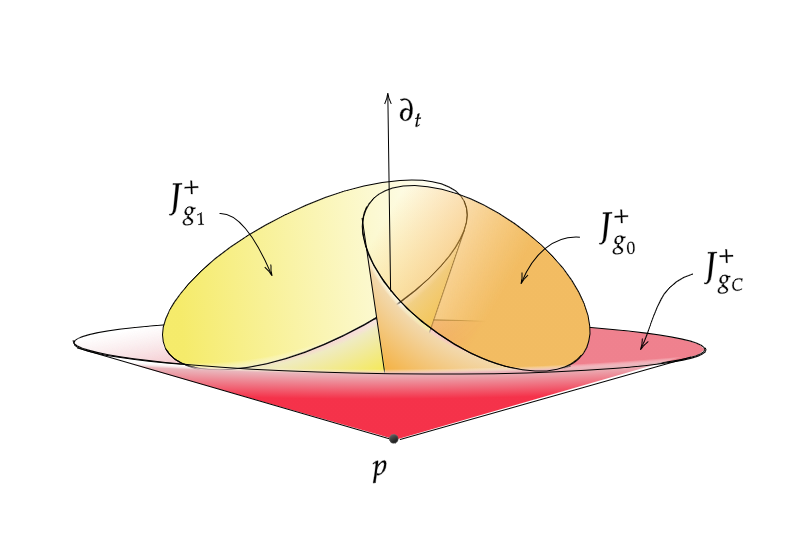}
\caption{$0<C\leq 1$}\label{fig:2a}
\end{subfigure}
\begin{subfigure}[t]{9cm}
\centering
\includegraphics[scale=0.34]{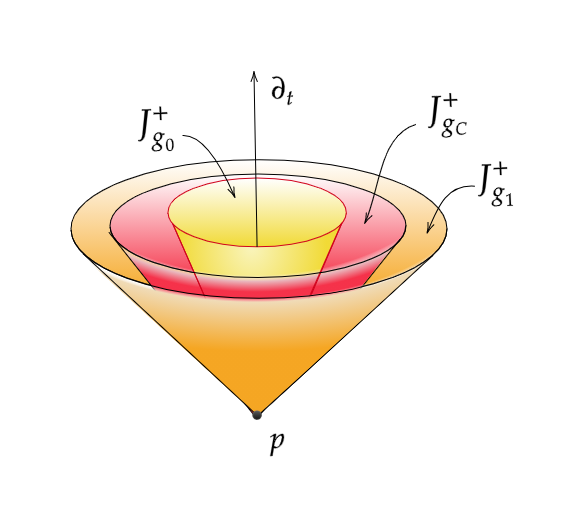}
\caption{$C>1$}\label{fig:2b}
\end{subfigure}
\caption{Future light cones of $g_0$ and $g_1$ satisfying $\beta^{-2}_1h_1(t)\leq C^2 \beta_0^{-2} h_0(t)$.}
\end{figure}
\end{itemize}
\item Using assumption \eqref{assumption:S0andS1}, assumption 
\eqref{assumption:S01} implies both that $\dim\ker\sigma_{\oS_0}(\xi)$ is 
constant for any nonzero spacelike covector $\xi$ and that $\wp_{1,0}$ is 
time-orientation preserving.
\end{enumerate}
\end{remarks}

\begin{remark}[The principal symbol of $\oS_{0,1}:=\oS_{0,1}^1$]\label{rem:symb}
For later convenience we shall compute the principal symbol $\sigma_{\oS_{0,1}}$ of $\oS_{0,1}$ in a slighly more general framework that the one depicted above.
(Or maybe this is not at all necessary and we set $\zeta=\operatorname{id}_\M$ from the beginning.)
Let $\E_0\to \M_0$ and $\E_1\to \M_1$ be two vector bundles such that there 
exists a diffeomophism $\zeta_{1,0}\colon \M_0\to \M_1$ with (inverse $\zeta_{0,1}$) which is lifted to a vector bundle isomorphism
$\kappa_{1,0}\colon \E_0\to \E_1$.
With a slight abuse of notation we shall denote with
$\kappa_{1,0}\colon\Gamma(\E_0)\to\Gamma(\E_1)$
the associated map of vector bundles
defined by
\begin{align*}
(\kappa_{1,0} u_0)(x_1):=\kappa_{1,0} (u_0(\zeta_{0,1}x_1))\,.
\end{align*}
for all $u_0\in\Gamma(\E_0)$ and $x_1\in \M_1$.
Notice that  $\kappa_{1,0}(fu_0)=(\zeta_{0,1}^*f)\kappa_{1,0}u_0$ for all $u_0\in\Gamma(\E_0)$ and $f\in C^\infty(\M_0)$ where $\zeta_{0,1}^*\colon C^\infty(\M_0)\to C^\infty(\M_1)$.
Moreover, $\kappa_{1,0}\colon\Gamma(\E_0)\to\Gamma(\E_1)$ is invertible with inverse $\kappa_{0,1}$.

The principal symbol of $\oS_{0,1}:=\kappa_{1,0}\oS_0\kappa_{0,1}$ is obtained as follow. For all $u_1\in \E_1|_{x_1}$ and $\xi_1\in \T^*_{x_1}\M_1$, let $\tilde{u}_1\in\Gamma(\E_1)$ and $f\in C^\infty(\M_1)$ be such that $\tilde{u}_1(x_1)=u_1$ and $\mathrm{d}f(x_1)=\xi_1$.
Then we have
\begin{align*}
\sigma_{\oS_{0,1}}(\xi_1)u_1=
[\kappa_{1,0}\oS_0\kappa_{0,1},f_1]\tilde{u}_1|_{x_1}
&=\kappa_{1,0}\oS_0\kappa_{0,1} f_1\tilde{u}_1
-f_1\kappa_{1,0}\oS_0\kappa_{0,1} \tilde{u}_1
\\&=\kappa_{1,0} \oS_0 (\zeta_{1,0}^*f_1\kappa_{0,1}\tilde{u}_1)
-f_1\kappa_{1,0}\oS_0\kappa_{0,1}\tilde{u}_1
\\&=\kappa_{1,0}[\oS_0,\zeta_{1,0}^*f_1]\kappa_{0,1}\tilde{u}_1
\\&=\kappa_{0,1}\sigma_{\oS_0}(\mathrm{d} (\zeta_{1,0}^*f_1))\kappa_{0,1}\tilde{u}_1|_{x_1}
\\&=\kappa_{0,1}\sigma_{\oS_0}((\mathrm{d}\zeta_{1,0}^*\mathrm{d}f_1)\kappa_{0,1}\tilde{u}_1|_{x_1}
\\&=\kappa_{0,1}\sigma_{\oS_0}((\mathrm{d}\zeta_{1,0}^*\xi_1)\kappa_{0,1}\tilde{u}_1|_{x_1}\,,
\end{align*}
where $(\mathrm{d}\zeta_{1,0})^*\colon \T^*\M_1\to \T^*\M_0$.
Overall we have
\begin{align*}
\sigma_{\oS_{0,1}}(\xi_1)
=\kappa_{1,0}\sigma_{\oS_0}[(\mathrm{d}\zeta_{1,0})^*\xi_1]\kappa_{0,1}\,.
\end{align*}
\end{remark}

Similarly to the case of an empty boundary, the construction of a family of M\o ller operators 
{requires to control the Cauchy problem for the operator $\oS_{0,1}^f$}. 
With the next proposition, we shall show that the $\oS_{0,1}^f$ is actually symmetric weakly-hyperbolic over $\M_1$.

\begin{proposition}\label{prop: S01 is wSHS}
Assume the Setup~\ref{setup}. 
Then the operator $\oS_{0,1}^f$ is a symmetric weakly-hyperbolic system of 
constant characteristic on $\M_1$ and 
$\bkappa_{1,0}(\oB_0)=\kappa_{1,0}(\oB_0)$ is a self-adjoint   admissible boundary space for 
$\oS_{0,1}^f$.
\end{proposition}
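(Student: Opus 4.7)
The plan is to verify separately the four assertions: symmetry, weak hyperbolicity, constant characteristic, and admissibility/self-adjointness of $\kappa_{1,0}(\oB_0)$. The starting point is the computation of the principal symbol of $\oS_{0,1}^f$: since $f$ and $f^{-1}$ act as multiplication by positive smooth scalars (zero-order differential operators) and $\kappa_{1,0}$ covers the identity of $\M$, one obtains
\begin{equation*}
\sigma_{\oS_{0,1}^f}(\xi) \;=\; \kappa_{1,0}\,\sigma_{\oS_0}(\xi)\,\kappa_{0,1}\qquad\text{for every }\xi\in\T^*\M,
\end{equation*}
so the $f$ drops out at the level of principal symbols. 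Combined with assumption (v), this also identifies this with $\sigma_{\oS_1}(\wp_{1,0}\xi)$. From here, symmetry is the quickest check: since $\kappa_{1,0}$ is a fiberwise isometry, $\kappa_{0,1}$ is the adjoint of $\kappa_{1,0}$ with respect to the two fiber metrics, so Hermiticity of $\sigma_{\oS_0}(\xi)$ with respect to $\fiber{\cdot}{\cdot}_0$ transfers to Hermiticity of $\sigma_{\oS_{0,1}^f}(\xi)$ with respect to $\fiber{\cdot}{\cdot}_1$.

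For weak hyperbolicity, I would use the identification $\sigma_{\oS_{0,1}^f}(\xi)=\sigma_{\oS_1}(\wp_{1,0}\xi)$ and the fact that $\oS_1$ is symmetric weakly-hyperbolic on $(\M,g_1)$. Since $\wp_{1,0}$ is a linear isometry that is time-orientation preserving (\textit{cf.} Remark \ref{rmk:JC}), it maps $g_0$-future-directed timelike covectors to $g_1$-future-directed timelike covectors, and the positive smooth function $C_1$ witnessing condition (wH) for $\oS_1$ can be transported back through $\wp_{1,0}$ to produce a positive smooth function $C'$ witnessing (wH) for $\oS_{0,1}^f$; the corresponding auxiliary metric on $\M_1$ is globally hyperbolic by the same argument as in Lemma \ref{l:wHequivH}. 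For the constant characteristic property, I would note that since $\bM$ is timelike, the $g_1$-normal $\n_1^\flat$ is a nonzero spacelike covector; by assumption (iv) applied to $\oS_1$, the dimension of $\ker\sigma_{\oS_1}(\wp_{1,0}\n_1^\flat)=\ker\sigma_{\oS_{0,1}^f}(\n_1^\flat)$ is locally constant on $\bM$.

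Next I would turn to admissibility of $\kappa_{1,0}(\oB_0)$ for $\oS_{0,1}^f$. First, $\bkappa_{1,0}(\oB_0)=\kappa_{1,0}(\oB_0)$ because $f>0$ is a scalar and cannot change a linear subspace; smoothness of the subbundle is inherited from $\kappa_{1,0}$ being a smooth bundle isomorphism, giving (i-f) and (i-p). For (ii-f) and (ii-p), given $\Psi=\kappa_{1,0}\Phi$ with $\Phi\in\oB_0^{\pm}$, writing
\begin{equation*}
\fiber{\sigma_{\oS_{0,1}^f}(\n_1^\flat)\Psi}{\Psi}_1
\;=\;\fiber{\sigma_{\oS_0}(\eta_0)\Phi}{\Phi}_0,
\end{equation*}
where $\eta_0$ is the covector obtained from $\n_1^\flat$ by conjugation through $\kappa$ (equivalently, through $\wp_{1,0}$), I would use that the normal vectors to $\bM$ for $g_0$ and $g_1$ are parallel (both point in the same purely-spatial direction since $\nabla t$ is tangent to $\bM$ for both metrics), so that $\eta_0$ is a positive multiple of $\n_0^\flat$; the sign of the quadratic form is then inherited from the admissibility of $\oB_0^{\pm}$ for $\oS_0$. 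Condition (iii-f)/(iii-p) on ranks follows similarly: on the one hand $\dim\kappa_{1,0}(\oB_0^{\pm})=\dim\oB_0^{\pm}$, on the other hand the number of nonnegative (resp.\ nonpositive) eigenvalues of $\sigma_{\oS_{0,1}^f}(\n_1^\flat)$ equals that of $\sigma_{\oS_0}(\n_0^\flat)$ by the positivity of the above proportionality constant together with conjugation invariance of spectra under $\kappa_{1,0}$. Self-adjointness is immediate: $\oB_0^+=\oB_0^-$ gives $\kappa_{1,0}(\oB_0^+)=\kappa_{1,0}(\oB_0^-)$.

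The main obstacle I anticipate is the careful comparison of the $g_0$- and $g_1$-normal covectors along $\bM$ and the effect of $\wp_{1,0}$ on them: one must ensure that when transporting the quadratic form $\fiber{\sigma_{\oS_1}(\cdot)\Psi}{\Psi}_1$ through the identifications, the covector that arises is still a positive multiple of the appropriate normal, so that both the sign of the quadratic form and the nonnegative/nonpositive eigenvalue counts are preserved. This is a geometric bookkeeping exercise that rests on the common Cauchy temporal function $t$ having gradient tangent to $\bM$ for both metrics (Theorem \ref{thm: Sanchez}), but it deserves a careful explicit check.
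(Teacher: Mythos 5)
Your handling of symmetry, constant characteristic, and admissibility/self-adjointness is essentially in line with the paper's proof (for constant characteristic the paper does not even invoke assumption (v): it just observes that $\n_1^\flat$ is a positive scalar multiple of $\n_0^\flat$, which follows from both conormals annihilating the same $T\bM$). The substantive divergence --- and a genuine gap --- is your argument for the weak-hyperbolicity condition (wH).

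You propose to use $\sigma_{\oS_{0,1}^f}(\xi)=\sigma_{\oS_1}(\wp_{1,0}\xi)$ and ``transport'' the witness $C_1$ for $\oS_1$ back through $\wp_{1,0}$. This does not go through in the generality of Setup~\ref{setup}. The issue is that $\wp_{1,0}$ is only assumed to be a linear isometry (hence it carries $g_0^\sharp$-causal covectors to $g_1^\sharp$-causal covectors), but condition (wH) requires controlling covectors of the very specific form $dt+C'\xi$, and $\wp_{1,0}$ has no reason to preserve $dt$ or the decomposition into temporal and spatial parts. Even in the Dirac model case, $\wp_{1,0}(dt)=(\beta_1/\beta_0)\,dt$ rather than $dt$, and the spatial part is distorted as well; so the claim that $\wp_{1,0}(dt+C'\xi)$ lands in the $g_{1,C_1}$-future cone for a suitable $C'$ requires a computation you have not supplied, and for an arbitrary $\wp_{1,0}$ it can simply fail. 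In fact the paper explicitly notes, right after the proposition's proof, that the existence of $\wp_{1,0}$ from assumption (v) is \emph{not} used in the proof.

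The paper's route instead exploits the metric comparison from Setup~\ref{setup}(i), namely $C^2\beta_1^{-2}h_1(t)\leq \beta_0^{-2}h_0(t)$, together with the weak hyperbolicity of $\oS_0$ on $(\M,g_0)$: since $\fiber{\sigma_{\oS_{0,1}^f}(\xi)\,\cdot}{\cdot}_1=\fiber{\sigma_{\oS_0}(\xi)\,\kappa_{0,1}\cdot}{\kappa_{0,1}\cdot}_0$, positivity for $\oS_{0,1}^f$ reduces to positivity for $\oS_0$. One picks $C_0$ so that $\oS_0$ is hyperbolic with respect to $g_{0,C_0}$ (Lemma~\ref{l:wHequivH}), and then the explicit inequality on the metrics shows that every $\overline g_1$-timelike future-directed covector $dt+\xi$ (with $\overline g_1:=-dt^2\oplus C_0^2C^{-2}\beta_1^{-2}h_1(t)$) is $\overline g_0$-timelike future-directed; this directly identifies the witness $C_1^2=C_0^2C(t)^{-2}$ for $\oS_{0,1}^f$ over $(\M,g_1)$. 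Replacing your $\wp_{1,0}$-based argument with this direct computation --- based on $\oS_0$'s weak hyperbolicity and the metric inequality rather than $\oS_1$'s weak hyperbolicity and $\wp_{1,0}$ --- closes the gap.
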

\begin{proof}
On account of Remark \ref{rem:symb} --- with 
$\zeta_{0,1}=\operatorname{id}_\M$ --- we have that, for every $\xi\in T^*\M$,
$\sigma_{\oS_{0,1}^f}(\xi)=\bkappa_{1,0}\sigma_{\oS_0}
(\xi)\bkappa_{0 , 1}=\kappa_{1,0}\sigma_{\oS_0}
(\xi)\kappa_{0 , 1}$.
Since $\oS_0$ is symmetric and $\kappa_{1,0}$ is a fiberwise linear isometry by 
assumption, $\oS_{0,1}^f$ clearly satisfies property (S) in 
Definition~\ref{def:symm syst}.
Moreover, because $\oS_0$ has constant characteristic and $\n_1^\flat$ is a 
pointwise positive scalar multiple of $\n_0^\flat$, the operator $\oS_{0,1}^f$ 
has constant characteristic.
Because $\oB_0$ is an admissible boundary condition for $\oS_0$, the subbundle 
$\kappa_{1,0}(\oB_0)=\bkappa_{1,0}(\oB_0)$ must be an admissible boundary space 
for $\oS_{0,1}^f$,  and it remains self-adjoint.
We shall next verify property (wH) in Definition~\ref{def:weakly-hyper}. 
To this  end let 
$g_{0,C_0}=-\beta_0^2dt^2\oplus C_0^2h_0(t)$ be the globally hyperbolic metric chosen 
accordingly with 
Lemma~\ref{l:wHequivH}. Then $\oS_0$ is a symmetric hyperbolic system and 
$\sigma_{\oS_0}(\tau)$ is fiberwise positive definite for any future-directed 
(w.r.t. $g_{0,C_0}$) covector $\tau$. 
Since any conformal transformation does not 
change the set of future-directed covectors, the operator $\oS_0$ is hyperbolic 
w.r.t. $\overline{g}_0:=\beta_0^{-2} 
g_{0,C_0} = -dt^2 \oplus 
C_0^2 \beta_0^{-2}{h_0}(t)$. 
We shall now prove that $\oS_{0,1}^f$ is symmetric hyperbolic with respect to the metric $\overline{g}_1:= -dt^2 \oplus C_0^2 C(t)^{-2}\beta^{-2}_1{h_1}(t)$.
For that, let $\tau=dt+\xi$ be $\overline{g}_1$-timelike future directed: On account of the assumption $\beta_1^{-2}{h_1}(t) \leq C^2(t) 
\beta_0^{-2} {h_0}(t)$ we find
\begin{align*}
\overline{g}_0^\sharp(dt+\xi,dt+\xi)
\leq\overline{g}_1^\sharp(dt+\xi,dt+\xi)
<0\,,
\end{align*}
so that $dt+\xi$ is $\overline{g}_0$-timelike future directed ---notice that $\overline{g}_0^\sharp=\partial_t^{\otimes 2}\oplus C_0^{-2}C(t)^2\beta_0^2h_0(t)^\sharp$ and similarly for $\overline{g}_1$ (\cf Figure~\ref{fig:Jsharp}).
\begin{figure}[h!]
\centering
\includegraphics[scale=0.35]{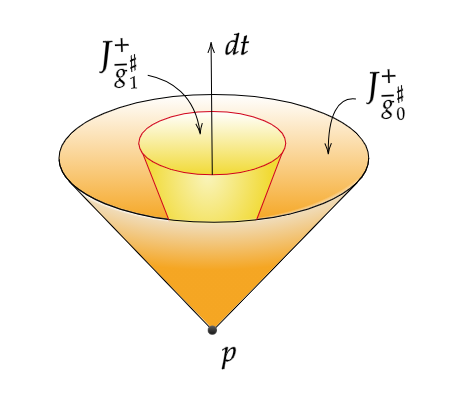}
\caption{The future light cones of $g_1^\sharp$ and  $g_0^\sharp$ in the cotangent bundle $\T^*\M$.}
\label{fig:Jsharp}
\end{figure}

\noindent It follows that $\sigma_{\oS_0}(dt+\xi)>0$ and therefore 
$\sigma_{\oS_{0,1}^f}(dt+\xi)>0$ as well. 
This shows that $\oS_{0,1}^f$ is symmetric hyperbolic with respect to $\overline{g}_1$ and therefore the same holds true for $g_{1,C_1}:=\beta_1^2\overline{g}_1=-\beta_1^2dt^2\oplus C_1^2 h_1(t)$, where $C_1^2:=C_0^2 C(t)^{-2}>0$ on account of the hypothesis on $C$ and $C_0$.
This proves that $\oS_{0,1}^f$ is weakly-hyperbolic with respect to $g_1$ as per Definition \ref{def:weakly-hyper}.
\end{proof}

Note that the existence of a linear isometry $\wp_{1,0}$ from assumption 
(\ref{assumption:S01}) is not required in the proof of Proposition 
\ref{prop: S01 is wSHS}.\\

So far, we considered a setting where the operators $\oS_0,\oS_1$, though being 
defined on different bundles, can be compared through $\kappa_{1,0}$.
As a matter of fact the next step in the construction of a M\o ller operator 
intertwining $\oS_0$ and $\oS_1$ is to build an intertwining operator between 
$\oS_{0,1}^f$ and $\oS_1$.
To this avail, we shall first consider an interpolating operator $\oS_\chi^f$ 
defined by 
$\oS_{\chi,1}^f:=(1- \chi)\oS_{0,1}^f+\chi\oS_1$,
where $\chi\in C^\infty(\M)$ is an arbitrary smooth function with 
$0\leq\chi\leq 1$.

{The following proposition ensures that} $\oS_{\chi,1}^f$ is a symmetric 
weakly-hyperbolic system of constant characteristic as soon as 
$\wp_{1,0}(\n_1^\flat)$ is not any pointwise negative scalar multiple of 
$\n_1^\flat$.

\begin{proposition}\label{prop:1}
Assume the Setup~\ref{setup} and that $\wp_{1,0}(\n_1^\flat)\neq \mu \n^\flat_1$ for any $\mu <0$. Then for any  $\chi\in C^\infty(\M,[0,1])$, the operator defined by 
\begin{equation}\label{eq:def S intert}
\oS_{\chi,1}^f := (1- \chi) \,\oS_{0,1}^f + \chi\oS_{1}+ \frac{1}{2}\left(\sigma_{\oS_1}-\sigma_{\oS_{0,1}^f}\right)(d\chi) \colon\Gamma(\E_1)  \to 
\Gamma(\E_1) 
\end{equation}
is a symmetric weakly-hyperbolic system of constant characteristic over $\M_1$. 
\end{proposition}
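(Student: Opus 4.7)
The plan is to verify that $\oS_{\chi,1}^f$ meets all conditions of Definitions \ref{def:symm syst} and \ref{def:weakly-hyper}. First, since the added term $\tfrac{1}{2}(\sigma_{\oS_1} - \sigma_{\oS_{0,1}^f})(d\chi)$ is of order zero --- its role is to rebalance the ordering in the convex combination and, as a byproduct, to render $\oS_{\chi,1}^f$ formally symmetric whenever $\oS_{0,1}^f$ and $\oS_1$ are --- one reads
\begin{equation*}
\sigma_{\oS_{\chi,1}^f}(\xi) \;=\; (1-\chi)\,\sigma_{\oS_{0,1}^f}(\xi) \;+\; \chi\,\sigma_{\oS_1}(\xi), \qquad \xi \in \T^*\M.
\end{equation*}
Hermiticity of $\sigma_{\oS_{\chi,1}^f}(\xi)$ with respect to $\fiber{\cdot}{\cdot}_1$ is then immediate because $\sigma_{\oS_{0,1}^f}(\xi)$ is Hermitian by Proposition \ref{prop: S01 is wSHS}, $\sigma_{\oS_1}(\xi)$ is Hermitian by assumption, and $(1-\chi),\chi \in \RR$; this yields condition (S).

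For the weakly-hyperbolic conditions, combine Lemma \ref{l:wHequivH} applied to $\oS_1$ with the proof of Proposition \ref{prop: S01 is wSHS} for $\oS_{0,1}^f$: there exist smooth positive functions $C_1$ and $C_{01}$ on $\M$ with $g_{1,C_1}$ and $g_{1,C_{01}}$ globally hyperbolic, such that $\oS_1$ is symmetric hyperbolic w.r.t.\ $g_{1,C_1}$ and $\oS_{0,1}^f$ is symmetric hyperbolic w.r.t.\ $g_{1,C_{01}}$. Choose a smooth positive $\tilde C$ with $\tilde C \leq \min(C_1, C_{01})$ (obtained by a standard smoothing of the pointwise minimum). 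The metric $g_{1,\tilde C}$ stays globally hyperbolic, and every $g_{1,\tilde C}$-future-directed timelike covector in $\T^*\M$ is also $g_{1,C_1}$- and $g_{1,C_{01}}$-future-directed timelike; on such covectors both $\sigma_{\oS_1}$ and $\sigma_{\oS_{0,1}^f}$ are pointwise positive definite, and hence so is their convex combination $\sigma_{\oS_{\chi,1}^f}$. This establishes conditions (gh) and (wH).

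It remains to verify constant characteristic. Using assumption~(v) of Setup~\ref{setup} and linearity of $\sigma_{\oS_1}$ one has
\begin{equation*}
\sigma_{\oS_{\chi,1}^f}(\n_1^\flat) \;=\; \sigma_{\oS_1}(\eta_\chi), \qquad \eta_\chi \;:=\; (1-\chi)\,\wp_{1,0}\n_1^\flat + \chi\,\n_1^\flat.
\end{equation*}
The hypothesis $\wp_{1,0}(\n_1^\flat) \neq \mu\,\n_1^\flat$ for every $\mu < 0$ guarantees $\eta_\chi \neq 0$ for all $\chi \in [0,1]$, since $\eta_\chi = 0$ would force $\wp_{1,0}(\n_1^\flat) = -\tfrac{\chi}{1-\chi}\n_1^\flat$ with $-\tfrac{\chi}{1-\chi} < 0$ on $(0,1)$. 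Because $\wp_{1,0}$ is a linear isometry and $\n_1^\flat$ is $g_1$-spacelike, $\wp_{1,0}\n_1^\flat$ is non-null of the same type. One then concludes constant characteristic by applying assumption~(iv) of Setup~\ref{setup} to $\eta_\chi$ along $\chi \in [0,1]$.

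The main obstacle lies in this last step: in Lorentzian signature a convex combination of two spacelike covectors need not stay spacelike, so one must check that $\eta_\chi$ remains in the region where assumption~(iv) provides constancy of $\dim \ker \sigma_{\oS_1}$. This is expected to follow from additional compatibility built into the setup --- typically a nonnegative $g_1^\sharp$-pairing $g_1^\sharp(\wp_{1,0}\n_1^\flat,\n_1^\flat)\geq 0$, which together with $\|\wp_{1,0}\n_1^\flat\|$ and $\|\n_1^\flat\|$ both being spacelike of bounded ratio would force $\eta_\chi$ to be spacelike for every $\chi \in [0,1]$. Once this homotopy step is handled, the remaining verifications are formal, and Proposition~\ref{prop:1} follows.
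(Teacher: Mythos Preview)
Your argument follows the paper's proof closely: both compute the principal symbol as the convex combination, deduce (S), take a common $C\leq\min(C_{0,1},C_1)$ for (wH) and (gh), and rewrite $\sigma_{\oS_{\chi,1}^f}(\n_1^\flat)=\sigma_{\oS_1}(\eta_\chi)$ with $\eta_\chi=(1-\chi)\wp_{1,0}\n_1^\flat+\chi\n_1^\flat$ for the constant-characteristic step. Your extra care in smoothing the minimum is harmless and slightly more precise than the paper.

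The point you flag in your last paragraph is real, and it is worth noting that the paper does not supply more than you do here. The paper simply writes that $\eta_\chi$ is ``non-vanishing along $\bM$, which implies that $\eta_\chi$ is a nonzero spacelike covector'' and then invokes assumption~(iv). As you correctly observe, in Lorentzian signature a convex combination of two spacelike covectors need not be spacelike, so the implication ``nonzero $\Rightarrow$ spacelike'' requires that $\eta_\chi$ stay in a subspace on which $g_1^\sharp$ is positive definite. This is indeed the case in the paper's intended application: there $\n_1^\flat\in T^*\Sigma$ (because the Cauchy temporal function has gradient tangent to $\bM$), and the concrete $\wp_{1,0}$ built from parallel transport in $\Z$ preserves $T^*\Sigma$ (see the beginning of the proof of Lemma~\ref{lem:wpn0}), so $\eta_\chi\in T^*\Sigma$ and is spacelike whenever nonzero. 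In the abstract Setup~\ref{setup}, however, this preservation property is not listed among the hypotheses, so your hesitation is justified; the paper is tacitly assuming it. Your proposed fix via $g_1^\sharp(\wp_{1,0}\n_1^\flat,\n_1^\flat)\geq 0$ would also suffice, and is exactly what Lemma~\ref{lem:wpn0} establishes in the Dirac case.
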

\begin{proof}
By Remark \ref{rem:symb} it follows that
\begin{align*}
\sigma_{\oS_{\chi,1}^f}(\xi)
=(1-\chi)\sigma_{\oS_{0,1}^f}(\xi)
+\chi\sigma_{\oS_{1}}(\xi)\,.
\end{align*}
Therefore, $\oS_{\chi,1}^f$ is a symmetric system.
We shall next  notice that a convex combination of weakly-hyperbolic system is still weakly-hyperbolic.
{As a matter of fact if $C_{0,1},C_1\in(0,1]$ denotes the positive 
 functions of Definition \ref{def:weakly-hyper} associated with $\oS_{0,1}^f$ 
and $\oS_1$, it follows that}
$$\fiber{\sigma_{\oS_{\chi,1}^f} (dt+C\xi) \cdot}{\cdot}_p >0\,,$$
where $C=\min\{C_{0,1},C_1\}$ for every future-directed 
$g_1$-timelike covector $\tau=\mathrm{d}t+\xi$.

To conclude our proof, we shall show that $\oS_{0,1}^f$ is of constant 
characteristic.
To this end, we consider
\begin{align*}
 \sigma_{\oS_{\chi,1}^f}(\n_1^\flat)= (1-\chi)\sigma_{\oS_{0,1}}(\n_1^\flat) + \chi \sigma_{\oS_1}(\n_1^\flat) = \sigma_{\oS_1}((1-\chi)\wp_{1,0}\n_1^\flat + \chi\n_1^\flat )\,.
\end{align*}
Since we have by assumption that $\wp_{1,0}\n_1^\flat\neq \mu \n_1^\flat$ for 
any $\mu<0$, the covector $(1-\chi)\wp_{1,0}\n_1^\flat + \chi\n_1^\flat$ is non-vanishing along $\bM$, which implies that $(1-\chi)\wp_{1,0}\n_1^\flat + 
\chi\n_1^\flat $ is a nonzero spacelike covector.
In particular, by assumption 
(\ref{assumption:S0andS1}) in Setup~\ref{setup}, $\oS_{\chi,1}^f$ is of constant 
characteristic.
\end{proof}

\begin{remarks}
\noindent
\begin{enumerate}
\item Note that the zero-order operator 
$V:=\frac{1}{2}\left(\sigma_{\oS_1}-\sigma_{\oS_{0,1}^f}\right)(d\chi)$ is a Hermitian operator 
which vanishes on every open subset where $\chi$ is constant -- in particular on 
both the chronological past of $\Sigma_-$ and the chronological future of 
$\Sigma_+$.
The zero-order operator $V$ does not play any role in the proof of Theorem \ref{thm:Moller}.
However, whenever $\oS_1$, $\oS_0$ are formally skew-adjoint, the presence of $V$ ensures that $\oS_{\chi,1}^f$ is formally skew-adjoint provided a suitable choice of $f$ is made ---\textit{cf.} Proposition \ref{prop:Schi skewad} for the precise statement.
\item The Assumption (\ref{assumption:S0andS1}) in Setup~\ref{setup} is needed in order to ensure that $\oS_{\chi,1}^f$ is of constant characteristic.
It can be dropped if $\wp_{1,0}\n^\flat_1=\n^\flat_1=\n^\flat_0$.
\end{enumerate}
\end{remarks}

Building on Proposition~\ref{prop:1}, we now prove the main result of this paper.
\begin{theorem}\label{thm:Moller}
Assume the Setup~\ref{setup} and that $\wp_{1,0}(\n_1^\flat)\neq \mu \n^\flat_1$ 
for any $\mu <0$. 
Consider two Cauchy hypersurfaces $\Sigma^\pm\subset \M_1$ such that
$\Sigma_+ \subset J_{g_1}^+(\Sigma_-)$ --- where $J_{g_1}^\pm$ denote the causal 
cones w.r.t. $g_1$ --- and let $\chi \in C^\infty(\M_1,[0,1])$ be 
non-decreasing along any future-oriented timelike curve such that 
$$\chi_{|_{J_{g_1}^+(\Sigma_+)}}=1 \,,\qquad \text{and } \qquad 
\chi_{|_{J_{g_1}^-(\Sigma_-)}}=0 \,.$$
Finally let $\oB_\chi$ be a self-adjoint admissible boundary space for $\oS_{\chi,1}^f$ such 
that
$$\oB_\chi=\begin{cases}
\oB_{0,1}:=\bkappa_{1,0}(\oB_0) & \text{ where $\chi=0$}\\
\oB_1 & \text{ where $\chi=1$}\\
\end{cases}\,.
$$
Then the Cauchy problem for $\oS_{\chi,1}^f$ with $\oB_\chi$-boundary conditions is well-posed.
Moreover, let $U_{\oS_{\chi,1}^f,\pm}\colon D(U_{\oS_{\chi,1}^f,\pm})\subset\Gamma_{c}(\E_1|_{\Sigma_\pm})\to\Gamma_{sc}(\E_1)$ be the Cauchy evolution operator associated with $\oS_{\chi,1}^f$ and initial data on $\Sigma_\pm$ and let $\rho_\pm\colon\Gamma_{sc}(\E_1)\to\Gamma_{c}(\E_1|_{\Sigma_\pm})$ be the standard restriction maps.

Then the M\o ller operator 
{$\R_{0,1}= U_{\oS_1,+} \circ \rho_+ \circ U_{\oS_{\chi,1}^f,-} \circ \rho_- \circ\kappa^f_{1,0}\,,$}
implements an 
isomorphism between the spaces of solutions 
 of $\oS_0$ 
and $\oS_1$
 given by 
$$ \sol_{\oB_C}(\oS_C):=\{\Psi_C\in \Gamma(\E_C) \,| \,   \oS_C\Psi_C = 0 \;\text{ and }\; \Psi_C|_\bM\in \oB_C\;  \} \qquad \text{for $C=0,1$}\,.$$
\end{theorem}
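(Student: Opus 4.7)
The plan has two main parts: first establish the well-posedness statement, then verify that the composition $\R_{0,1}$ is a bijection between the two solution spaces, exhibiting an explicit inverse by running the same construction time-reversed.

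For the well-posedness claim, I would combine Proposition~\ref{prop:1} with Theorem~\ref{thm:main1}. Proposition~\ref{prop:1} ensures that $\oS_{\chi,1}^f$ is a symmetric weakly-hyperbolic system of constant characteristic on $\M_1$, since the hypothesis $\wp_{1,0}(\n_1^\flat) \neq \mu \n_1^\flat$ for $\mu < 0$ is explicitly assumed. By hypothesis $\oB_\chi$ is a self-adjoint admissible boundary space for $\oS_{\chi,1}^f$. Theorem~\ref{thm:main1} then yields existence and uniqueness of the smooth solution for any compatible Cauchy data on $\Sigma_\pm$, which defines the evolution operators $U_{\oS_{\chi,1}^f,\pm}$ appearing in the definition of $\R_{0,1}$.

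For the isomorphism, I would trace $\R_{0,1}$ on an arbitrary $\Psi_0 \in \sol_{\oB_0}(\oS_0)$. The identity $\oS_{0,1}^f = \bkappa_{1,0}\oS_0\bkappa_{0,1}$ together with the linearity of $\oB_0$ under the fiberwise action of $\bkappa_{1,0}=f\,\kappa_{1,0}$ shows that $\bkappa_{1,0}\Psi_0$ solves $\oS_{0,1}^f(\bkappa_{1,0}\Psi_0) = 0$ with boundary values in $\oB_{0,1}$. In the region $J_{g_1}^-(\Sigma_-)$ the function $\chi$ vanishes, so $\oS_{\chi,1}^f = \oS_{0,1}^f$ and $\oB_\chi = \oB_{0,1}$ there; hence the compatibility conditions of Definition~\ref{def:compcond} for $(\oS_{\chi,1}^f,\oB_\chi)$ at $\Sigma_-$ coincide with those for $(\oS_{0,1}^f,\oB_{0,1})$ and are automatically satisfied by $\h_- := \rho_-\bkappa_{1,0}\Psi_0$. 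Uniqueness of the Cauchy problem then gives $U_{\oS_{\chi,1}^f,-}(\h_-) = \bkappa_{1,0}\Psi_0$ on $J_{g_1}^-(\Sigma_-)$. Let $\Psi_\chi := U_{\oS_{\chi,1}^f,-}(\h_-)$; in $J_{g_1}^+(\Sigma_+)$ we have $\chi = 1$, so $\oS_{\chi,1}^f = \oS_1$ and $\oB_\chi = \oB_1$, meaning $\Psi_\chi$ restricted to this region is an $\oS_1$-solution with $\oB_1$ boundary values. The final evolution $U_{\oS_1,+}(\rho_+\Psi_\chi)$ produces an element of $\sol_{\oB_1}(\oS_1)$ that agrees with $\Psi_\chi$ on $J_{g_1}^+(\Sigma_+)$.

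For bijectivity, I would construct the inverse $\R_{1,0} := U_{\oS_0,-}\circ\rho_-\circ U_{\oS_{\chi,1}^f,+}\circ\rho_+\circ\bkappa_{0,1}$, running the interpolation in the opposite direction (with the roles of $\Sigma_\pm$ exchanged and using the backward Cauchy evolution for $\oS_{\chi,1}^f$, which is equally well-posed by Theorem~\ref{thm:main1}). The compositions $\R_{1,0}\circ\R_{0,1}$ and $\R_{0,1}\circ\R_{1,0}$ reduce to the identity by repeatedly invoking uniqueness of the Cauchy problem: each outer evolution reproduces, on the relevant causal half, the section it started from, and the transition through the interpolating operator on the strip $J_{g_1}^+(\Sigma_-)\cap J_{g_1}^-(\Sigma_+)$ is reversible because the interpolating Cauchy problem has unique solutions in both time directions.

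The main technical subtlety I expect is the verification that the Cauchy data produced at each step indeed lie in the domain of the next evolution operator. The potentially delicate point is compatibility of boundary conditions, but this is resolved by the local equality of $(\oS_{\chi,1}^f,\oB_\chi)$ with $(\oS_{0,1}^f,\oB_{0,1})$ on $J_{g_1}^-(\Sigma_-)$ and with $(\oS_1,\oB_1)$ on $J_{g_1}^+(\Sigma_+)$, so that the intrinsic sequence $(\h_k)_k$ of Definition~\ref{def:compcond} computed for the interpolating system matches the one computed from $\bkappa_{1,0}\Psi_0$ (respectively from $\Psi_1$) at every order. A secondary point is that $\sol_{\oB_C}(\oS_C)$ as defined in the statement is the space of all smooth solutions; since finite propagation speed with respect to $g_C$ together with a partition of unity argument extends the Cauchy evolution of Theorem~\ref{thm:main1} from compactly supported data to arbitrary smooth data satisfying the compatibility conditions, the construction still applies without a spacelike compactness hypothesis.
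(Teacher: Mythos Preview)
Your proposal is correct and follows essentially the same route as the paper's proof, which is very terse: the paper simply observes that the Cauchy evolution operators are well-defined by Theorem~\ref{thm:main1} (once Proposition~\ref{prop:1} gives weak hyperbolicity and constant characteristic), checks that the data produced at each stage lie in the domain of the next evolution by the local coincidence of $(\oS_{\chi,1}^f,\oB_\chi)$ with $(\oS_{0,1}^f,\oB_{0,1})$ near $\Sigma_-$ and with $(\oS_1,\oB_1)$ near $\Sigma_+$, and then notes that $\R_{0,1}$ is a composition of isomorphisms. Your write-up spells out these steps in more detail, including the compatibility-condition check via Definition~\ref{def:compcond} and the remark on extending beyond spacelike-compact data, both of which the paper leaves implicit.

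One small slip: in your explicit formula for the inverse, the isometry $\bkappa_{0,1}$ is placed on the wrong side. As written, $U_{\oS_{\chi,1}^f,+}\circ\rho_+\circ\bkappa_{0,1}$ feeds a section of $\E_0$ into the evolution operator for $\oS_{\chi,1}^f$, which acts on $\E_1$. The paper's inverse is $\bkappa_{0,1}\circ U_{\oS_{0,1}^f,-}\circ\rho_-\circ U_{\oS_{\chi,1}^f,+}\circ\rho_+$; equivalently one may write $U_{\oS_0,-}\circ\rho_-\circ\bkappa_{0,1}\circ U_{\oS_{\chi,1}^f,+}\circ\rho_+$, using that $\bkappa_{0,1}$ intertwines $\oS_{0,1}^f$ and $\oS_0$. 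Your verbal description (``running the interpolation in the opposite direction'') is correct; only the displayed formula needs this reordering.
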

\begin{proof}
Since $\oB_\chi$ is a self-adjoint admissible boundary space, the Cauchy evolution operators and the Cauchy data map are well-defined on account of Theorem~\ref{thm:main1}.
Furthermore, for any $\Psi_0\in\sol_{\oB_0}(\oS_0)$ we have $\rho_-\bkappa_{1,0}\Psi_0\in D(U_{\oS_{\chi,1}^f,-})$ because of $\oB_\chi$ coincide with $\bkappa_{1,0}(\oB_0)$ on $\Sigma_-$.
Therefore $U_{\oS_{\chi,1}^f,-}\rho_-\kappa^f_{1,0}\Psi_0$ is well defined.
For a similar reason, for any $\Psi\in\sol(\oS_{\chi,1}^f)$ we have $\rho_+\Psi\in D(U_{\oS_1,+})$.
It follows that $\R$ is well-defined.

To conclude our proof, it is enough to notice that the M\o ller operator is a composition of isomorphisms.
As such the inverse $\R_{1,0}^{-1}$ of $\R_{1,0}$ can be computed explicitly as $\R_{1,0}^{-1}=\bkappa_{0,1}\circ U_{\oS_{0,1}^f,-}\circ\rho_-\circ U_{\oS_{\chi,1}^f+}\circ\rho_+$.
\end{proof}

\begin{example}
Let $(\M,g)$ be a globally hyperbolic spacetime with timelike boundary and let 
$\oS$ and $\overline{\oS}$ be   symmetric weakly-hyperbolic systems of constant 
characteristic which differ by a zero order term, \ie $\oS-\overline{\oS}=V$, 
for $V\in\Gamma(\End(\E))$. It follows that any  self-adjoint  admissible boundary condition 
$\oB$ for $\oS$ is also a  self-adjoint  admissible boundary condition for $\overline{\oS}$. 
Therefore we can chose $\oB_\chi=\oB$.
\end{example}

{We conclude this section by showing that for any pair of admissible boundary 
conditions $\oB,\oB'$ for a given symmetric weakly-hyperbolic system there 
exists an interpolating admissible boundary condition $\oB_\chi$.
In case $\oB$ and $\oB'$ are self-adjoint and the interpolating admissible boundary condition can be constructed to be self-adjoint, then this applies in particular to Theorem \ref{thm:Moller} for the choices $V=\E_1$, $W_0=\bkappa_{1,0}(\oB_0)$, $W_1=\oB_1$.}

\begin{lemma}\label{l:deformadmissbc}
Let $V\to M$ be any smooth vector bundle over any smooth manifold $M$.
Let $q$ be any smooth quadratic form on $V$.
Assume $k:=\dim\ker(q)$, the numbers $n_+$ of positive and $n_-$ of negative 
pointwise eigenvalues of $q$ to be constant on $M$.
Let $W_0,W_1\to M$ be any 
$n_++k$-dimensional subbundles of $V$ such that $q_{|_{W_i}}\geq0$ holds 
pointwise for both 
$i=0,1$.\\
Then there exists a smooth map  
$\phi\colon[0,1]\times W_0\to 
 V$ such that, for every $t\in[0,1]$, 
$\phi_t:=\phi(t,\cdot)$ is a linear and injective vector-bundle-map, 
$q_{|_{\phi_t(W_0)}}\geq0$ holds pointwise, and $\phi_0=\mathrm{Id}_{W_0}$ as 
well as 
$\phi_1(W_0)=W_1$ are satisfied.
\end{lemma}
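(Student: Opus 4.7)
My plan is to reduce the statement to a convex interpolation of linear maps, once a useful structural consequence of the hypotheses has been established. First I would fix an auxiliary Riemannian metric $g$ on $V\to M$; since $k=\dim\ker(q)$ is constant, $N:=\ker(q)$ is a smooth rank-$k$ subbundle of $V$, and $q$ is non-degenerate on its $g$-orthogonal complement. Smooth diagonalization of $q$ relative to $g$ on that complement, combined with the constancy of $n_+$ and $n_-$, then yields a smooth $q$-orthogonal decomposition $V=V_+\oplus V_-\oplus N$ with $V_\pm$ smooth subbundles of respective rank $n_\pm$ on which $q$ is positive (resp.\ negative) definite.

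Next I would establish the following pointwise structural claim: any subspace $W\subset V_x$ of dimension $n_++k$ with $q|_W\geq 0$ satisfies $W\cap V_{-,x}=\{0\}$ and $N_x\subset W$. The first assertion is immediate from the negative definiteness of $q|_{V_-}$. By a dimension count this forces $W$ to be the graph of a unique linear map $T\colon V_{+,x}\oplus N_x\to V_{-,x}$, and a direct computation using $q$-orthogonality of the splitting yields, for every $(v,u)\in V_{+,x}\oplus N_x$,
\begin{equation*}
q\bigl((v+u)+T(v+u)\bigr)=q(v)+q\bigl(T(v+u)\bigr).
\end{equation*}
Testing this with $v=0$ forces $q(T(u))=0$, hence $T(u)=0$ since $q|_{V_-}$ is negative definite; this gives $T|_{N_x}=0$ and so $N_x\subset W$. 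Moreover, $q|_W\geq 0$ now translates into the operator-norm bound $\|T|_{V_{+,x}}\|\leq 1$ with respect to the fiber norms induced by $q|_{V_+}$ and $-q|_{V_-}$.

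Applying this to $W_0$ and $W_1$ I obtain smooth bundle maps $T_0,T_1\colon V_+\oplus N\to V_-$ vanishing on $N$ with $\|T_i|_{V_+}\|\leq 1$ and having $W_i$ as graphs. Setting $T_t:=(1-t)T_0+tT_1$, I would define
\begin{equation*}
\phi_t\bigl((v+u)+T_0(v+u)\bigr):=(v+u)+T_t(v+u),\qquad (v,u)\in V_+\oplus N.
\end{equation*}
This yields a smooth map $\phi\colon[0,1]\times W_0\to V$ that is linear and injective in each fiber (because $V=(V_+\oplus N)\oplus V_-$), with $\phi_0=\mathrm{Id}_{W_0}$ and $\phi_1(W_0)=W_1$; convexity of the operator unit ball gives $\|T_t\|\leq 1$ for every $t\in[0,1]$, so $q|_{\phi_t(W_0)}\geq 0$ as required. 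The main obstacle is the structural claim $N_x\subset W$ in the second step; once this is in hand, the whole problem linearizes into convex interpolation on a ball of operators, with smoothness on $M$ following at once from smoothness of the splitting $V=V_+\oplus V_-\oplus N$, itself guaranteed by the constancy of the signature $(n_+,n_-,k)$.
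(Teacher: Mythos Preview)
Your argument is correct. Both your proof and the paper's rest on the same smooth splitting $V=N\oplus V_+\oplus V_-$ with $N=\ker q$ and $q$ definite of the appropriate sign on $V_\pm$ (the paper obtains it from \cite[Theorem~C.1.4]{Nardmannthesis2004}; your ``smooth diagonalization'' should be read the same way, and you should say explicitly that the splitting can be taken $q$-orthogonal, since you use this in the identity $q((v+u)+T(v+u))=q(v)+q(T(v+u))$). From there the two proofs diverge in detail. The paper only uses $W_i\cap V_-=\{0\}$, projects $W_1$ onto $W_0$ along $V_-$ to obtain $F\colon W_0\to V_-$, sets $\phi_t(v)=v+tF(v)$, and checks $q(\phi_t(v))\geq 0$ via the elementary fact that a real quadratic in $t$ with nonnegative values at $t=0,1$ and nonpositive leading coefficient stays nonnegative on $[0,1]$. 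You instead prove the sharper structural fact that $N\subset W_i$, so each $W_i$ is the graph over $V_+\oplus N$ of a map $T_i$ vanishing on $N$ and contracting $V_+\to V_-$ for the norms induced by $\pm q$; then $q_{|\phi_t(W_0)}\geq 0$ follows from convexity of the operator unit ball. Your observation that every maximal $q$-nonnegative subspace contains $\ker q$ is not in the paper and yields a cleaner, more conceptual nonnegativity argument; the paper's route is more bare-hands but has the minor advantage of not requiring $V_+$ and $V_-$ to be $q$-orthogonal.
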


\begin{proof}
If we can find a smooth subbundle $W_0'$ of $V$
such that $W_0\oplus W_0'=W_1\oplus W_0'=V$ and $q_{|_{W_0'}}\leq0$ pointwise, 
then the map $\phi$ can be 
constructed as follows.
Let $\pi_{W_0}$ (\textit{resp.} $\pi_{W_0'}$) be the pointwise linear projection onto 
$W_0$ with 
kernel $W_0'$ (\textit{resp.} onto $W_0'$ with kernel $W_0$).
Then $\pi_{W_0}{}_{|_{W_1}}\colon W_1\to W_0$ is an isomorphism by $W_1\cap 
W_0'=\{0\}$ and equality of space dimensions.
Let $F:=\pi_{W_0'}\circ\left(\pi_{W_0}{}_{|_{W_1}}\right)^{-1}\colon W_0\to 
W_0'$.
Observe that, for every $v\in W_0$, we can write 
\[v+F(v)=\pi_{W_0}\left((\pi_{W_0}{}_{|_{W_1}})^{-1}(v)\right)+\pi_{W_0'}\left
((\pi_{W_0}{}_{|_{W_1}})^{-1}(v)\right)=\left(\pi_{W_0}{}_{|_{W_1}}\right)^
{-1}(v),\]
so that $v+F(v)\in W_1$.
Now define $\phi\colon[0,1]\times W_0\to V$ by $\phi(t,v):=v+tF(v)$ for all 
$(t,v)\in [0,1]\times W_0$.
Clearly $\phi$ is smooth, $\phi_t=\phi(t,\cdot)$ is a linear injective 
vector-bundle-map for every $t\in[0,1]$ because of $W_0\cap W_0'=\{0\}$ and 
obviously $\phi_0=\mathrm{Id}_{W_0}$ and $\phi_1(W_0)=W_1$ hold by the above 
observation.
Moreover, for any $(t,v)\in[0,1]\times W_0$,
\[q(v+tF(v),v+tF(v))=q(v,v)+2q(v,F(v))t+q(F(v),F(v))t^2.\]
Since the r.h.s of the last identity is a degree-$2$-polynomial in $t$, it is 
non-negative on $[0,1]$ as soon as it is for $t=0$ and $t=1$ and 
$q(F(v),F(v))\leq0$.
Therefore, if $q_{|_{W_0'}}\leq0$, then $q_{|_{\phi_t(W_0)}}\geq0$.\\
To construct $W_0'$, we make use of the following fact:\\
{\bf Lemma:} {\sl Let $A$ be any smooth section of $\mathrm{End}(V)$.
If $x\mapsto \dim\ker(A(x))$ is constant on $M$, then $\ker(A)\to M$ defines a 
smooth vector subbundle of $V$.}\\
{\it Proof:} Fix any Euclidean \textit{resp.} Hermitian inner product on $V$ 
and let $k:=\dim\ker(A(x))$ for all $x\in U$.
For any $x\in U$, we have 
$\ker(A(x))=\mathrm{ran}(A(x)^*)^\perp$, where $A(x)^*$ is the adjoint of $A(x)$ 
w.r.t. the chosen inner product on $V$.
Now $\mathrm{ran}(A^*)\to M$ defines a smooth subbundle of $V$.
Namely it defines an $(n-k)$-dimensional vector subspace of 
$V$ at each point of $M$; moreover, for any $x_0\in M$, there 
exists an open neighbourhood $U$ of $x_0$ in $M$ and a family of smooth sections
$v_1,\ldots,v_{n-k}$ of $V_{|_U}$ such that 
$\{A(x)^*v_1(x),\ldots,A(x)^*v_{n-k}(x)\}$ is a family of linearly independent 
vectors and therefore a basis of $\mathrm{ran}(A(x)^*)$ for any $x\in U$.
This shows $\mathrm{ran}(A^*)\to M$ to be a smooth subbundle of $V$.
As a straightforward consequence, its pointwise orthogonal complement must be a 
smooth subbundle as well.
This proves our claim.
\hfill\checkmark\\
It can be deduced from the claim that $\ker(q)\to M$ defines a smooth 
subbundle 
of $V$.
Therefore there exists a smooth supplementary subbundle $W$ to $\ker(q)$.
The restriction of $q$ to $W$ defines a smooth nondegenerate quadratic form.
Its signature is also constant, namely it is $(n_+,n_-)$.
By e.g. \cite[Theorem C.1.4]{Nardmannthesis2004}, the bundle $W$ can therefore 
be split as $W=W_+\oplus W_-$, where $W_\pm$ are smooth subbundles of $W$ of 
rank $n_\pm$ and on which $q$ restricts pointwise as a positive- \textit{resp.} 
negative-definite quadratic form.
On the whole, we obtain the smooth splitting $V=\ker(q)\oplus 
W_+\oplus W_-$.
Now $W_0':=W_-$ does the job since automatically $W_0\cap W_-=W_1\cap 
W_-=\{0\}$ by the fact that $q_{|_{W_-}}$ is pointwise negative definite.
This concludes the proof of Lemma \ref{l:deformadmissbc}.
\end{proof}
To apply Lemma \ref{l:deformadmissbc}, consider 
$q:=\fiber{\sigma_{\oS_1}(\n^\flat)\cdot}{\cdot}$ on $V:=\E_1{}_{|_{\bM}}$ as 
well as $W_0:=\oB_{0,1}$ and $W_1:=\oB_1$.
Then the map $\hat{\phi}$ realizing the interpolation of the boundary 
conditions is defined by 
\[\hat{\phi}\colon\oB_{0,1}\to\E_1{}_{|_{\bM}},\qquad v\mapsto 
\phi(\chi(\pi(v)),v),\]
where $\pi\colon \E_1{}_{|_{\bM}}\to\bM$ is the footpoint map.\\

\begin{remarks}
\noindent 
\begin{itemize}
\item[1.]
In case $W_0$ and $W_1$ are null spaces for $q$, then unless $q$ vanishes identically on $V$ and thus $W_0=W_1=V$ the space $\phi_t(W_0)$ as constructed in the proof of Lemma \ref{l:deformadmissbc} is not null for almost every $t\in[0,1]$.
This does not prevent the existence of a path of null subspaces connecting $W_0$ and $W_1$: namely the question is only whether the Grassmannian of $n_++n_0$-dimensional $q$-nonnegative subspaces in an $n_++n_0+n_-$-dimensional one is connected or not, where $n_0=\dim\ker\sigma_\oS(\n^\flat)$.
\item[2.] \label{r:Mollerandnonconstantop}
Note also that Lemma \ref{l:deformadmissbc} can be applied to the situation 
where a 
stronger condition as condition \eqref{assumption:S0andS1} on the operator 
$\oS_1$ is assumed, namely that the numbers 
$n_0,n_+,n_-$ of vanishing, positive resp. negative eigenvalues of 
$\sigma_{\oS_1}(\xi)$ are constant along $\bM$ whenever $\xi$ is a nonvanishing 
covector in $T^*\Sigma_{|_{\partial\Sigma}}$; then we need not assume any 
longer that $\oS_{0,1}=\oS_1$.
This applies for instance to the Dirac operators associated to two different 
globally hyperbolic metrics $g_0,g_1$ and where the boundary condition is the 
MIT one, see Section \ref{sec:MIT} below.
\end{itemize}
\end{remarks}

\subsection{Conservation of positive definite Hermitian scalar products}\label{sec:conserv}

Consider now the pre-Hilbert space given by
$$\sol_{sc,\oB}(\oS)=\{ \Psi \in \Gamma_{sc}(\E) \,|\,\oS \Psi = 0 \,,\,\,  {\Psi}_{|_\bM} \in \oB \}$$ where $\scalar{\cdot}{\cdot}$ is the positive definite Hermitian form defined by
\begin{align}\label{eq:Herm prod}
 \scalar{\cdot}{\cdot}=\int_{\Sigma}\fiber{\cdot}{\sigma_{\oS}(\n^\flat)\cdot} \vol_{\Sigma}\,, \end{align}
 where $\n=-\frac{1}{\beta}\partial_t$ is the past-directed unit normal vector to $\Sigma$ while $\n^\flat=g(\n,\cdot)=\beta dt$.
In the next lemma, we shall show that if $\oS$ is skew-adjoint then the scalar product~\eqref{eq:Herm prod} does not depend on the choice of the Cauchy hypersurface $\Sigma\subset \M$.

\begin{lemma}\label{lem:indip Sigma}
Let $\Sigma \subset \M$ be a smooth
spacelike Cauchy hypersurface with its past-oriented unit normal vector field 
$\n$ and its induced volume element $\vol_{\Sigma}$. 
Furthermore, let $\oS$ be a 
formally skew-adjoint, symmetric weakly-hyperbolic system of constant 
characteristic with self-adjoint admissible boundary condition, \ie 
$\oB_+=\oB_-$, see Definition \ref{def:selfadjointbc}.
Then
$$\scalar{\cdot}{\cdot}\colon \sol_{sc,\oB}(\oS)\times \sol_{sc,\oB}(\oS) \to \CC \qquad \scalar{\Psi}{\Phi}=\int_{\Sigma}\fiber{\Psi}{\sigma_\oS(\n^\flat) \Phi} \vol_{\Sigma}  \,,
$$
where $\n^\flat$ denotes here the future-directed unit conormal, yields a positive definite Hermitian scalar product  which does 
not depend on the choice of $\Sigma$.
\end{lemma}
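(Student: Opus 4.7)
The plan is to first verify the pointwise algebraic properties of $\scalar{\cdot}{\cdot}$ and then prove its independence from $\Sigma$ via a Green-type identity applied to the region bounded by two Cauchy surfaces.

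The pointwise properties are essentially immediate. Well-definedness of the integral follows from spacelike compactness of $\supp\Psi\cap\Sigma$, which makes the integrand supported on a compact set. Hermiticity is a consequence of condition (S) in Definition \ref{def:symm syst}, since $\sigma_\oS(\n^\flat)$ is fiberwise Hermitian with respect to $\fiber{\cdot}{\cdot}$. Positive-definiteness follows by specializing condition (wH) in Definition \ref{def:weakly-hyper} to $\xi=0$: indeed $dt$ is $g$-timelike future-directed, so $\fiber{\sigma_\oS(dt)\cdot}{\cdot}>0$ pointwise, and since the future-directed conormal satisfies $\n^\flat=\beta\,dt$ with $\beta>0$, the form $\fiber{\sigma_\oS(\n^\flat)\cdot}{\cdot}$ is pointwise positive definite on $\Sigma$.

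The heart of the argument is independence from $\Sigma$. Given two spacelike Cauchy hypersurfaces $\Sigma_0,\Sigma_1$, by splitting at an auxiliary third Cauchy surface if necessary I may reduce to the case $\Sigma_1\subset J^+_{g}(\Sigma_0)$. Let $\Omega$ denote the strip of $\M$ bounded by $\Sigma_0$, $\Sigma_1$ and the lateral face $\bM\cap\Omega$; spacelike compactness of $\Psi,\Phi\in\sol_{sc,\oB}(\oS)$ guarantees that the relevant integrands are compactly supported in $\Omega$. I would then apply the standard Green identity for a first-order differential operator,
\[
\int_{\Omega}\!\bigl(\fiber{\oS\Psi}{\Phi}-\fiber{\Psi}{\oS^\dagger\Phi}\bigr)\vol_{\M}
=\int_{\partial\Omega}\!\fiber{\sigma_\oS(\nu^\flat)\Psi}{\Phi}\vol_{\partial\Omega},
\]
where $\nu$ is the outward unit normal vector field on $\partial\Omega$. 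By the hypothesis of formal skew-adjointness, $\oS^\dagger=-\oS$, and since $\oS\Psi=\oS\Phi=0$ the left-hand side vanishes identically.

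It remains to identify the three contributions on the right. The piecewise smooth decomposition $\partial\Omega=\Sigma_1\cup(-\Sigma_0)\cup(\bM\cap\Omega)$ is guaranteed by the fact that the Cauchy temporal function of Theorem \ref{thm: Sanchez} has gradient tangent to $\bM$, so that the three faces meet transversally along smooth corners. Tracking signs from the Lorentzian musical isomorphism on the two Cauchy-surface pieces yields, up to an overall sign, the combination $\scalar{\Psi}{\Phi}_{\Sigma_1}-\scalar{\Psi}{\Phi}_{\Sigma_0}$, while the lateral contribution $\int_{\bM\cap\Omega}\fiber{\sigma_\oS(\n^\flat)\Psi}{\Phi}\vol_{\bM}$ vanishes pointwise as a direct consequence of the self-adjoint admissible boundary condition $\oB_+=\oB_-$: by the remark following Definition \ref{def:selfadjointbc}, any $\Psi,\Phi$ with boundary traces in $\oB_+=\oB_-$ satisfy $\fiber{\sigma_\oS(\n^\flat)\Psi}{\Phi}\equiv 0$ on $\bM$. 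Equating to zero then gives the desired identity $\scalar{\Psi}{\Phi}_{\Sigma_1}=\scalar{\Psi}{\Phi}_{\Sigma_0}$. The main obstacle is purely bookkeeping, namely careful handling of orientations and sign conventions at the corner $\Sigma_i\cap\bM$; conceptually, the only non-trivial ingredient is the vanishing of the lateral integrand, which is precisely what the self-adjointness of the boundary condition provides.
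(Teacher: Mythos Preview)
Your proof is correct and follows essentially the same strategy as the paper: apply the Green identity on the region between two Cauchy hypersurfaces, use formal skew-adjointness of $\oS$ to kill the bulk term, and use the self-adjointness of the boundary condition $\oB_+=\oB_-$ (equivalently $\oB=\oB^\dagger$) to kill the lateral contribution on $\bM$. You supply more detail than the paper on the well-definedness, Hermiticity, and positive-definiteness of the form, but the argument for $\Sigma$-independence is the same.
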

\begin{proof}
The proof virtually coincides with the one of~\cite[Lemma  3.17]{CQF1}.
First note that $\text{supp}( \Psi) \cap \Sigma$ is compact since $\text{supp}( \Psi)$ is spacelike compact,
so that the integral is well-defined.
Let $\Sigma'$ be any other smooth spacelike Cauchy
hypersurface. Without loss of generality we may assume that $\Sigma\cap 
\Sigma'=\emptyset$, otherwise a third Cauchy hypersurface lying in the common 
pasts of $\Sigma$ and $\Sigma'$ has to be chosen, see proof of 
\cite[Lemma 3.17]{CQF1}. Let 
$\M_\T=t^{-1}(\tau,\tau')$ be the time strip such that $t^{-1}(\tau)=\Sigma$ and 
$t^{-1}(\tau')=\Sigma'$. Its boundary is $\bM_\T=(\bM\cap \M_\T) \cup \Sigma 
\cup \Sigma'$. By the Green identity~\cite[Lemma 2.11]{Ginoux-Murro-20} we have
$$ \int_{\M_T}(\fiber{\oS \Psi}{ \Phi} - \fiber{ \Psi}{\oS^\dagger \Phi} )\vol_{\M_T} = \int_{\bM_T}\fiber{\Psi}{\sigma_\oS(\n^\flat) \Phi} \vol_{\bM_T} $$
for any $\Psi,\Phi\in\sol_{sc,\oB}(\oS)$. Since $\oS$ is assumed to be 
skew-adjoint, the left-hand side of the latter equality vanishes identically. 
Moreover, since $\oB=\oB^\dagger$  also $\fiber{\Psi}{\sigma_\oS(\n^\flat) 
\Phi}$ vanishes identically at $\bM\cap \M_\T$. Therefore we can conclude
$$ 0 = \int_{\Sigma'}\fiber{\Psi}{\sigma_\oS(\n^\flat) \Phi} \vol_{\Sigma'} - \int_{\Sigma}\fiber{\Psi}{\sigma_\oS(\n^\flat) \Phi} \vol_{\Sigma}\,. $$
This concludes our proof.
\end{proof}

With the next lemma, we shall show that there exists a choice of $f$ which makes the operator $\oS_{\chi,1}^f\colon\Gamma_{sc,\oB_\chi}(\E_1)\to\Gamma_{sc}(\E_1)$ formally skew-adjoint on $\Gamma_{sc,\oB_\chi}(\E_1)$, provided 
that $\oB_\chi$ is a self-adjoint boundary condition and $\oS_0$ (\textit{resp.} $\oS_1$) are formally skew-adjoint with respect to the pairing $\scalar{\cdot}{\cdot}_0$ (\textit{resp}. $\scalar{\cdot}{\cdot}_1$).

\begin{proposition}\label{prop:Schi skewad}
Assume the setup of Theorem~\ref{thm:Moller}, that $\oS_0$ and 
$\oS_1$ are formally skew-adjoint with respect to the pairings $\scalar{\cdot}{\cdot}_0$ and $\scalar{\cdot}{\cdot}_1$ respectively.
Finally assume that $\oB_\chi$ is
self-adjoint boundary condition  for 
$\oS^f_{\chi,1}$. 
If
 $f \in C^\infty(\M)$ is the positive smooth function such that
 \[\vol_{\M_0}=f^2 
\vol_{\M_1}\]
on $\M$, 
where $\vol_{\M_0}$  (\textit{resp.} $\vol_{\M_1}$) is the 
volume form of the metric $g_0$ (\textit{resp.} $g_1$) on $\M$, then $\oS_{\chi,1}^f$ is a skew-adjoint operator in the Hilbert space $\mathscr{H}_1$.
\end{proposition}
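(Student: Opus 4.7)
My plan is to prove formal skew-adjointness of $\oS_{\chi,1}^f$ with respect to the pairing $\scalar{\cdot}{\cdot}_1$ on $\M_1$ by a direct integration-by-parts computation, checking that (a) the transported operator $\oS_{0,1}^f$ is itself formally skew-adjoint on $\M_1$, (b) the lower-order correction $V=\tfrac12(\sigma_{\oS_1}-\sigma_{\oS_{0,1}^f})(d\chi)$ cancels exactly the commutator terms produced by integrating $\chi$ by parts, and (c) the boundary terms vanish because $\oB_\chi$ is self-adjoint. Skew-adjointness on $\scrH_1$ then follows by the same Green-identity argument used in Lemma \ref{lem:indip Sigma}.

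The first key observation is that the choice $\vol_{\M_0}=f^2\vol_{\M_1}$ combined with $\bkappa_{1,0}=f\,\kappa_{1,0}$ makes $\bkappa_{1,0}$ an $L^2$-isometry: for any $u,v\in\Gamma(\E_0)$,
\[
\fiber{\bkappa_{1,0}u}{\bkappa_{1,0}v}_1\,\vol_{\M_1}=f^2\fiber{\kappa_{1,0}u}{\kappa_{1,0}v}_1\,\vol_{\M_1}=\fiber{u}{v}_0\,\vol_{\M_0},
\]
since $\kappa_{1,0}$ is a fiberwise isometry. Because $\oS_{0,1}^f=\bkappa_{1,0}\oS_0\bkappa_{0,1}$ and $\oS_0$ is formally skew-adjoint with respect to $\scalar{\cdot}{\cdot}_0$, the unitary conjugation immediately transfers skew-adjointness: $\scalar{\oS_{0,1}^f\phi}{\psi}_1=-\scalar{\phi}{\oS_{0,1}^f\psi}_1$ up to a boundary integral on $\bM$ involving $\sigma_{\oS_{0,1}^f}(\n_1^\flat)$.

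Next I compute the formal adjoint of $\oS_{\chi,1}^f=(1-\chi)\oS_{0,1}^f+\chi\oS_1+V$ against a test pair $\phi,\psi$ in $\Gamma_c(\E_1)$ with boundary values in $\oB_\chi$. Using the commutator identity $[\oS,g]=\sigma_\oS(dg)$ for a smooth function $g$ together with the formal skew-adjointness of $\oS_{0,1}^f$ and $\oS_1$, integration by parts yields
\[
\int_{\M}\fiber{((1-\chi)\oS_{0,1}^f+\chi\oS_1)\phi}{\psi}_1\vol_{\M_1}
=-\int_{\M}\fiber{\phi}{((1-\chi)\oS_{0,1}^f+\chi\oS_1)\psi}_1\vol_{\M_1}+\int_{\M}\fiber{\phi}{(\sigma_{\oS_{0,1}^f}-\sigma_{\oS_1})(d\chi)\psi}_1\vol_{\M_1}
\]
modulo boundary contributions. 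The extra zero-order piece is exactly $-2V$, while the operator $V$ itself is Hermitian (principal symbols of Friedrichs systems are pointwise Hermitian), so $\scalar{V\phi}{\psi}_1=\scalar{\phi}{V\psi}_1$. Adding this contribution, the total interior adjoint is $-(1-\chi)\oS_{0,1}^f-\chi\oS_1-2V+V=-\oS_{\chi,1}^f$, which is the desired cancellation.

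It remains to dispose of the boundary terms. Since $\oS_{\chi,1}^f$ is symmetric weakly-hyperbolic (Proposition \ref{prop:1}), the Green identity of \cite{Ginoux-Murro-20} produces a surface integral of $\fiber{\sigma_{\oS_{\chi,1}^f}(\n_1^\flat)\phi}{\psi}_1$ over $\bM$; because $\oB_\chi$ is self-adjoint in the sense of Definition \ref{def:selfadjointbc}, this pairing vanishes on $\oB_\chi\times\oB_\chi$ (as noted in the remark following that definition). Temporal boundary terms at $\Sigma_\pm$ likewise vanish by the independence of $\scalar{\cdot}{\cdot}_1$ on the choice of Cauchy hypersurface, exactly as in the proof of Lemma \ref{lem:indip Sigma}. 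The main subtlety I expect is bookkeeping conventions around the sesquilinearity of $\fiber{\cdot}{\cdot}_1$ and the orientation of $\n_1^\flat$; once these are fixed consistently the algebraic cancellation of the $V$-term with the commutator contributions is automatic, and the choice $f^2\vol_{\M_1}=\vol_{\M_0}$ is in fact forced by requiring $\bkappa_{1,0}$ to be the right $L^2$-isometry for this cancellation to run.
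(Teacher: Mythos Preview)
Your proposal is correct and follows essentially the same route as the paper: first show that the choice $\vol_{\M_0}=f^2\vol_{\M_1}$ makes $\bkappa_{1,0}$ an $L^2$-isometry so that $\oS_{0,1}^f=\bkappa_{1,0}\oS_0\bkappa_{0,1}$ inherits formal skew-adjointness, then compute the formal adjoint of $(1-\chi)\oS_{0,1}^f+\chi\oS_1$ via the rule $(g\oS)^\dagger=g\oS^\dagger-\sigma_\oS(dg)$ and observe that the resulting zero-order defect is exactly $-2V$, which the Hermitian correction $V$ cancels. The paper carries out precisely this computation using test sections supported away from $\bM$; your additional discussion of the boundary and temporal surface terms is not needed for formal skew-adjointness (which is all the paper actually proves and all that is used downstream), but it is harmless.
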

\begin{proof}
First we compute the formal adjoint of $\oS_{0,1}^f$ on $(\M,g_1)$.
Let $\Psi_1,\Phi_1\in\Gamma_c(\E_1)$ be such that their supports do not meet 
$\bM$.
Since by assumption $f^2\vol_{\M_1}=\vol_{\M_0}$ and $\oS_0^\dagger=-\oS_0$, we 
have 
\begin{eqnarray*}
\int_{\M}\fiber{\oS_{0,1}^f\Psi_1}{\Phi_1}_1\vol_{\M_1}&=&\int_{\M}\fiber{
\bkappa_ {1,0}\oS_0\bkappa_{0,1}\Psi_1}{\Phi_1}_1\vol_{\M_1}\\
&=&\int_{\M}\fiber{\bkappa_
{1,0}\oS_0\bkappa_{0,1}\Psi_1}{\bkappa_{1,0}\bkappa_{0,1}\Phi_1}_1\vol_{\M_1}\\
&=&\int_{\M}f^2\fiber{\kappa_{1,0}\oS_0\bkappa_{0,1}\Psi_1}{\kappa_{1,0}\bkappa_
{0,1}\Phi_1}_1\vol_{\M_1}\\
&=&\int_{\M}f^2\fiber{\oS_0\bkappa_{0,1}\Psi_1}{\bkappa_
{0,1}\Phi_1}_0\vol_{\M_1}\\
&=&\int_{\M}\fiber{\oS_0\bkappa_{0,1}\Psi_1}{\bkappa_
{0,1}\Phi_1}_0\vol_{\M_0}\\
&=&\int_{\M}\fiber{\bkappa_{0,1}\Psi_1}{\oS_0^\dagger\bkappa_
{0,1}\Phi_1}_0\vol_{\M_0}\\
&=&\int_{\M}\fiber{\bkappa_{0,1}\Psi_1}{\bkappa_{0,1}
\bkappa_{1,0}\oS_0^\dagger\bkappa_
{0,1}\Phi_1}_0\vol_{\M_0}\\
&=&\int_{\M}f^{-2}\fiber{\Psi_1}{
\bkappa_{1,0}\oS_0^\dagger\bkappa_
{0,1}\Phi_1}_1\vol_{\M_0}\\
&=&\int_{\M}\fiber{\Psi_1}{
\bkappa_{1,0}\oS_0^\dagger\bkappa_
{0,1}\Phi_1}_1\vol_{\M_1}\\
&=&-\int_{\M}\fiber{\Psi_1}{
\oS_{0,1}^f\Phi_1}_1\vol_{\M_1},
\end{eqnarray*}
that is, $\left(\oS_{0,1}^f\right)^\dagger=-\oS_{0,1}^f$ on 
$(\M,g_1)$.
As a consequence,
\begin{eqnarray*}
\left((1-\chi)\oS_{0,1}
^f+\chi\oS_1\right)^\dagger&=&(1-\chi)\left(\oS_{0,1}^f\right)^\dagger-\sigma_{\oS_{0,1}^f}
(d(1-\chi))+\chi\oS_1^\dagger-\sigma_{\oS_1}(d\chi)\\
&=&-(1-\chi)\oS_{0,1}^f+\sigma_{\oS_{0,1
}^f}(d\chi)-\chi\oS_1-\sigma_{\oS_1}(d\chi)\\
&=&-(1-\chi)\oS_{0,1}^f-\chi\oS_1-2V,
\end{eqnarray*}
where $V$ is the zero-order operator defined as above by
$V:=\frac{1}{2}[\sigma_{\oS_1}(d\chi)-\sigma_{\oS_{0,1}^f}
(d\chi)]$.
Since $V$ is a Hermitian operator it follows that ${\oS}_{\chi,1}^f=(1-\chi)\oS_{0,1}
^f+\chi\oS_1+V$ is formally skew-adjoint.
\end{proof}

Building on Lemma~\ref{lem:indip Sigma} and Proposition~\ref{prop:Schi skewad}, we can show that $\R_{1,0}$ is a 
unitary map between $\sol_{sc,\oB_0}(\oS_0)$ and $\sol_{sc,\oB_1}(\oS_1)$.

\begin{proposition}\label{prop:conserv scal prod}
Assume the setup of Theorem~\ref{thm:Moller}, that $\oS_0$ and 
$\oS_1$ are formally skew-adjoint and that $\oB_\chi$ is a 
self-adjoint boundary condition  for 
$\oS^f_{\chi,1}$.
Let $\Sigma_1 \subset J^+(\Sigma_+)$ and $\Sigma_0\subset 
J^-(\Sigma_-)$ be fixed spacelike Cauchy hypersurfaces of $\M$ (w.r.t. $g_0$ or 
$g_1$, it makes no difference).
Let
 $f \in C^\infty(\M)$ be the positive smooth function such that \[\vol_{\M_0}=f^2 
\vol_{\M_1}\]
on $\M$, 
where $\vol_{\M_0}$  (\textit{resp.} $\vol_{\M_1}$) is the 
volume form of the metric $g_0$ (\textit{resp.} $g_1$) on $\M$.  
Then the M\o ller 
operator $\R_{1,0}\colon \sol_{sc,\oB_0}(\oS_0)\to \sol_{sc,\oB_1}(\oS_1)$ is a 
unitary map once $\sol_{sc,\oB_0}(\oS_0)$ (\textit{resp.} 
$\sol_{sc,\oB_1}(\oS_1)$) is equipped with the scalar product defined in 
Equation \eqref{eq:Herm prod} associated with $\oS_0$ (\textit{resp.} with 
$\oS_1$).
\end{proposition}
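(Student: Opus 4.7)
The strategy is to exploit the formal skew-adjointness of $\oS_{\chi,1}^f$, which holds by Proposition \ref{prop:Schi skewad} thanks to the specific choice of $f$. Since $\oB_\chi$ is self-adjoint, Lemma \ref{lem:indip Sigma} implies that the scalar product $\scalar{\cdot}{\cdot}_{\oS_{\chi,1}^f}$ on $\sol_{sc,\oB_\chi}(\oS_{\chi,1}^f)$ is independent of the Cauchy hypersurface. The idea is then to evaluate this invariant on two well-chosen hypersurfaces, $\Sigma_0 \subset J^-(\Sigma_-)$ and $\Sigma_1 \subset J^+(\Sigma_+)$, on each of which $\oS_{\chi,1}^f$ reduces to one of $\oS_{0,1}^f$ or $\oS_1$.

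More precisely, given $\Psi_0,\Phi_0\in\sol_{sc,\oB_0}(\oS_0)$ I set $\tilde{\Psi}:=U_{\oS_{\chi,1}^f,-}(\rho_-\bkappa_{1,0}\Psi_0)$ and analogously $\tilde{\Phi}$. Since $\chi\equiv 0$ on $J^-(\Sigma_-)$, there $\oS_{\chi,1}^f=\oS_{0,1}^f=\bkappa_{1,0}\oS_0\bkappa_{0,1}$, so $\bkappa_{1,0}\Psi_0$ is itself a solution of $\oS_{\chi,1}^f$ on $J^-(\Sigma_-)$; uniqueness in the Cauchy problem from Theorem \ref{thm:main1} yields $\tilde{\Psi}=\bkappa_{1,0}\Psi_0$ on $J^-(\Sigma_-)$. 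Symmetrically, $\chi\equiv 1$ on $J^+(\Sigma_+)$ forces $\tilde{\Psi}=\R_{1,0}\Psi_0$ there.

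The proof then reduces to two computations. On $\Sigma_1$, where $\sigma_{\oS_{\chi,1}^f}(\n_1^\flat)=\sigma_{\oS_1}(\n_1^\flat)$, by construction one gets directly
\[\scalar{\tilde{\Psi}}{\tilde{\Phi}}_{\oS_{\chi,1}^f,\Sigma_1}=\scalar{\R_{1,0}\Psi_0}{\R_{1,0}\Phi_0}_{\oS_1,\Sigma_1}.\]
On $\Sigma_0$, I combine three ingredients: the principal-symbol formula $\sigma_{\oS_{0,1}^f}(\n_1^\flat)=\bkappa_{1,0}\sigma_{\oS_0}(\n_1^\flat)\bkappa_{0,1}$ from Remark \ref{rem:symb}; the fact that $\bkappa_{1,0}=f\,\kappa_{1,0}$ with $\kappa_{1,0}$ a fiberwise isometry, so $\fiber{\bkappa_{1,0}u}{\bkappa_{1,0}v}_1=f^2\fiber{u}{v}_0$; and the conormal rescaling $\n_1^\flat=(\beta_1/\beta_0)\n_0^\flat$, hence $\sigma_{\oS_0}(\n_1^\flat)=(\beta_1/\beta_0)\sigma_{\oS_0}(\n_0^\flat)$. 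Altogether,
\[\scalar{\tilde{\Psi}}{\tilde{\Phi}}_{\oS_{\chi,1}^f,\Sigma_0}=\int_{\Sigma_0} f^2\,\frac{\beta_1}{\beta_0}\,\fiber{\Psi_0}{\sigma_{\oS_0}(\n_0^\flat)\Phi_0}_0\,\vol_{\Sigma_0}^{h_1}.\]

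The decisive step is to match this with $\scalar{\Psi_0}{\Phi_0}_{\oS_0,\Sigma_0}=\int_{\Sigma_0}\fiber{\Psi_0}{\sigma_{\oS_0}(\n_0^\flat)\Phi_0}_0\,\vol_{\Sigma_0}^{h_0}$. This is exactly where the defining condition on $f$ enters: the splittings $g_i=-\beta_i^2 dt^2\oplus h_i(t)$ give $\vol_{\M_i}=\beta_i\,dt\wedge\vol_{\Sigma}^{h_i}$, so $\vol_{\M_0}=f^2\vol_{\M_1}$ restricts on each $\Sigma_t$ to $\vol_{\Sigma_t}^{h_0}=f^2(\beta_1/\beta_0)\vol_{\Sigma_t}^{h_1}$. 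The two integrals then coincide and chaining the equalities through the Cauchy-surface invariance yields $\scalar{\R_{1,0}\Psi_0}{\R_{1,0}\Phi_0}_{\oS_1}=\scalar{\Psi_0}{\Phi_0}_{\oS_0}$. Bijectivity of $\R_{1,0}$ is already granted by Theorem \ref{thm:Moller}, so this proves it is unitary. The only real obstacle is bookkeeping the various conformal factors $\beta_i$, $f$, $h_i$ and the restrictions of $\vol_{\M_i}$ to $\Sigma_t$; everything else is a consequence of results already established.
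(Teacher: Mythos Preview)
Your proof is correct and follows essentially the same approach as the paper's: both use Proposition~\ref{prop:Schi skewad} to make $\oS_{\chi,1}^f$ formally skew-adjoint, invoke Lemma~\ref{lem:indip Sigma} for Cauchy-surface independence of the $\oS_{\chi,1}^f$-scalar product, and then match the endpoints via the isometry property of $\kappa_{1,0}$ together with the volume-form relation $\vol_{\M_0}=f^2\vol_{\M_1}$. The only cosmetic difference is that the paper packages your identities $\n_1^\flat=(\beta_1/\beta_0)\n_0^\flat$ and $\vol_{\Sigma}^{h_0}=f^2(\beta_1/\beta_0)\vol_{\Sigma}^{h_1}$ into the single relation $\n_0^\flat\otimes\vol_{\Sigma,g_0}=f^2\,\n_1^\flat\otimes\vol_{\Sigma,g_1}$ and carries out the chain of equalities on $\Sigma_\pm$ rather than on $\Sigma_0,\Sigma_1$.
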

\begin{proof}
Let be $\Psi_0,\Phi_0\in\sol_{sc,\oB_0}(\oS_0)$ and 
$\Psi_1:=\R_{1,0}(\Psi_0),\Phi_1:=\R_{1,0}(\Phi_0)\in\sol_{sc,\oB_1}(\oS_1)$, 
where the M\o ller operator $\R_{1,0}$ is defined using the interpolating 
operator ${\oS}_{\chi,1}^f$ instead of $\oS_{\chi,1}^f$.
We also denote by $\Psi_{\chi,1}$ (resp. $\Phi_{\chi,1}$) the smooth section
with spacelike compact support in 
$\ker\left({\oS}_{\chi,1}^f\right)$ on $\M$ with 
$\Psi_{\chi,1}{}_{|_{\Sigma_-}}=\bkappa_{1,0}\Psi_0{}_{|_{\Sigma_-}}$ (resp. 
$\Phi_{\chi,1}{}_{|_{\Sigma_-}}=\bkappa_{1,0}\Phi_0{}_{|_{\Sigma_-}}$).
By Lemma \ref{lem:indip Sigma} and definition of $f$, we have  
$\n_0^\flat\otimes\vol_{\Sigma_-,g_0}=f^2 
\n_1^\flat\otimes\vol_{\Sigma_-,g_1}$ along $\Sigma_-$ and 
therefore
\begin{eqnarray*}
\int_{\Sigma_0}\fiber{\sigma_{\oS_0}(\n_0^\flat)\Psi_0}{\Phi_0}_0\vol_{\Sigma_0,
g_ 0} 
&=&\int_{\Sigma_-}\fiber{\sigma_{\oS_0}(\n_0^\flat)\Psi_0}{\Phi_0}_0\vol_{
\Sigma_- ,g_0}\\
&=&\int_{\Sigma_-}f^{-2}
\fiber { \kappa_{1,0}^f 
\sigma_{\oS_0}(\n_0^\flat)\kappa_{0,1}^f\kappa_{1,0}^f\Psi_0}{\kappa_{1,0}
^f\Phi_0 } _1\vol_ { \Sigma_-,g_0 } \\
&=&\int_{\Sigma_-}f^{-2}\fiber{\sigma_{\oS_{0,1}^f}(\n_0^\flat)\Psi_{\chi,1}}{
\Phi_{\chi,1}}
_1\vol_ { \Sigma_- , g_
0} \\
&=&\int_{\Sigma_-}\fiber{\sigma_{\oS_{0,1}^f}(\n_1^\flat)\Psi_{\chi,1}}{\Phi_{
\chi,1}} _1\vol_{
\Sigma_- , g_1 }\\
&=&\int_{\Sigma_-}\fiber{\sigma_{\oS_{\chi,1}^f}(\n_1^\flat)\Psi_{\chi,1}}{
\Phi_{\chi,1}}
_1\vol_ {
\Sigma_- ,g_1}\\
&=&\int_{\Sigma_+}\fiber{\sigma_{{\oS}_{\chi,1}^f}(\n_1^\flat)\Psi_{\chi,1}}
{\Phi_{\chi,1}
}
_1\vol_{
\Sigma_+,g_1}\\
&=&\int_{\Sigma_+}\fiber{\sigma_{\oS_1}(\n_1^\flat)\Psi_1}{\Phi_1}
_1\vol_{
\Sigma_+,g_1},
\end{eqnarray*}
which concludes the proof of Proposition \ref{prop:conserv scal prod}.
\end{proof}

\begin{definition}
We call {\it unitary M\o ller operator} the operator $\R_{1,0}$ defined in accordance with Proposition~\ref{prop:conserv scal prod}.
\end{definition}

We conclude this section with the following Remark.
\begin{remark}\label{rmk:dec}
The unitary M\o ller operator $\R_{1,0}: \sol_{sc,\oB_0}(\oS_0)\to \sol_{sc,\oB_1}(\oS_1)$ can be seen as the composition of two unitary M\o ller operators 
\begin{align*}
&\R_{\chi,0}\colon\sol_{sc,\oB_0}(\oS_0)\to \sol_{sc,\oB_\chi}(\oS_{\chi,1}^f)
\qquad
\R_{\chi,0}:=U_{\oS_{\chi,1}^f,-}\circ\rho_-\circ\bkappa_{1,0}\,,
\\&\R_{1,\chi}\colon\sol_{sc,\oB_\chi}(\oS_{\chi,1}^f) \to \sol_{sc,\oB_1}(\oS_1)
\qquad
\R_{1,\chi}:=U_{\oS_1,+}\circ\rho_+\,.
\end{align*}
\end{remark}

\section{The algebraic approach to quantum Dirac fields}\label{sec:AQFT}

In this section we shall compare the quantization of the Dirac field on two different (yet related) globally hyperbolic spacetimes with timelike boundary.
To this end, we shall benefit from~\cite{Moller,simo3,defarg}, where a class of M\o ller operator was introduced in order to construct unitary equivalent quantum field theories, together of the results of the previous Sections \ref{sec:SWHS}-\ref{sec:Moller}.

As a first step we introduce the relevant geometrical objects, showing how they fit within the framework introduced in Section \ref{sec:Moller}.
In particular we shall apply Theorem \ref{thm:Moller} and Proposition \ref{prop:conserv scal prod} for the case of the Dirac operator with MIT boundary conditions ---\textit{cf.} Equation \eqref{eq: MIT bc}.

\subsection{The Dirac operator}\label{sec:Dirac}

Let $(\M,g)$ be a globally hyperbolic manifold and assume to have a spin 
structure 
\ie  a twofold covering map from the $\Spin_0(1,n)$-principal bundle $\P_{\rm 
Spin_0}$  to the bundle of positively-oriented tangent frames $\P_{\rm SO^+}$ of 
$\M$ such that the following diagram is commutative:
\begin{flalign*}
\xymatrix{
\P_{\Spin_0} \times \Spin_0(1,n) \ar[d]_-{} \ar[rr]^-{} && \P_{\Spin_0} 
\ar[d]_-{} 
\ar[drr]^-{}   \\
\P_{\rm SO^+} \times \textnormal{SO}(1,n)   \ar[rr]^-{} &&  \P_{\rm SO^+}  
\ar[rr]^-{} &&  \M \,.
}
\end{flalign*}
\begin{remark}\label{rmk:metric dep}
Note that unlike differential forms, the definition of a spin structure depends on the metric of the underlying manifold. 
\end{remark}
The existence of spin structures is related to the topology of $\M$.
A 
sufficient (but not necessary) condition for the existence of a spin structure 
is the parallelizability of the manifold.
Therefore, since any $3$-dimensional 
orientable manifold is parallelizable, it follows by Theorem~\ref{thm: Sanchez} 
that any 4-dimensional globally hyperbolic manifold admits a spin structure.
Given a fixed spin structure, one can use the spinor representation to 
construct 
the {spinor bundle}, \ie the complex vector bundle
$$\S\M:=\Spin_0(1,n)\times_\rho \CC^N$$
where $\rho: \Spin_0(1,n) \to \textnormal{Aut}(\CC^N)$ is the complex 
$\Spin_0(1,n)$ representation and $N:= 2^{\lfloor \frac{n+1}{2}\rfloor}$. 
The spinor bundle comes together with the following structures:
\begin{itemize}
\item[-] a natural $\Spin_0(1, n)$-invariant indefinite fiber metrics
\begin{equation*}\label{eq: spin prod}
\fiber{\cdot}{\cdot}_p: \S_p\M \times \S_p\M \to \CC\,;
\end{equation*}
\item[-] a \textit{Clifford multiplication}, \ie  a fiber-preserving map 
$$\gamma\colon \T\M\to \text{End}(\S\M)\,;$$ 
which satisfies
 for all $p \in \M$, $u, v \in \T_p\M$ and $\psi,\phi\in \S_p\M$
\begin{equation*} \label{eq:gamma symm}
 \gamma(u)\gamma(v) + \gamma(v)\gamma(u) = -2g(u, v)\Id_{\S_p\M}\, \quad 
\text{and}\quad \fiber{\gamma(u)\psi}{\phi}_p=\fiber{\psi}{\gamma(u)\phi}_p\,.
\end{equation*}
\end{itemize}
Using the spin product~\eqref{eq: spin prod}, we denote as \textit{adjunction map}, the complex anti-linear vector bundle isomorphism by
\begin{equation}\label{eq:adj map}
\Upsilon_p:\S_p\M_g\to \S^*_p\M_g  \qquad \psi \mapsto \fiber{\psi}{\cdot}\,,
\end{equation}
where  $\S^*_p\M_g$ is the so-called \textit{cospinor bundle}, \ie  the dual bundle of $\S_p\M_g$.
\begin{definition} 
The \textit{(classical) Dirac operator} $\Dir$ is the operator defined as the 
composition of the metric connection $\nabla^\S$ on $\S\M$, obtained as a lift 
of the Levi-Civita connection on $\T\M$, and the Clifford multiplication:
$$\Dir=\gamma\circ\nabla^{\S\M} \colon \Gamma(\S\M) \to \Gamma(\S\M)\,.$$
\end{definition}

In local coordinates and with a trivialization of the spinor bundle $\S\M$, the 
Dirac operator reads as
\begin{align*}
\Dir \psi = \sum_{\mu=0}^{n}  \epsilon_\mu \gamma(e_\mu) \nabla^{\S\M}_{e_\mu} 
\psi\,,
\end{align*}
where  $\{e_\mu\}$ is a local Lorentzian-orthonormal frame of $\T\M$ and 
$\epsilon_\mu=g(e_\mu,e_\mu)=\pm 1$. 
\begin{proposition}[\protect{\cite[Proposition 6.2]{Ginoux-Murro-20}}]
 The classical Dirac operator $\Dir$ on globally hyperbolic 
spin manifolds $\M$ with timelike boundary is a nowhere characteristic symmetric 
hyperbolic system.
\end{proposition}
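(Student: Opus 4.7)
The plan is to check the four conditions in order: (i) compute the principal symbol explicitly, (ii) verify the symmetry condition (S), (iii) verify the hyperbolicity condition (H), and (iv) verify the nowhere characteristic property. Since $\Dir=\gamma\circ\nabla^{\S\M}$ has $\gamma$ as its leading bundle-map and $\nabla^{\S\M}$ as its connection, the principal symbol at $\xi\in\T_p^*\M$ is $\sigma_{\Dir}(\xi)=\gamma(\xi^\sharp)$, where $\sharp$ is the musical isomorphism for $g$. This can be read off directly from the commutator $[\Dir,f]\psi=\gamma(\mathrm{grad}(f))\psi$ for any smooth function $f$, computed via the Leibniz rule for $\nabla^{\S\M}$.

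For condition (S), fix $\xi\in\T_p^*\M$ and $\psi,\phi\in\S_p\M$. Then the identity $\fiber{\gamma(u)\psi}{\phi}_p=\fiber{\psi}{\gamma(u)\phi}_p$ recalled just before the proposition (applied to $u=\xi^\sharp$) says exactly that $\sigma_{\Dir}(\xi)=\gamma(\xi^\sharp)$ is Hermitian with respect to the indefinite fiber metric $\fiber{\cdot}{\cdot}_p$, so (S) holds pointwise.

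The main substantive step is condition (H). Let $\tau\in \T_p^*\M$ be a future-directed timelike covector and set $v:=\tau^\sharp$. After rescaling by $\|v\|_g>0$, it suffices to show that $\fiber{\gamma(e_0)\cdot}{\cdot}_p$ is positive definite on $\S_p\M$ whenever $e_0$ is a future-directed timelike unit vector. This is the classical construction of the positive-definite Dirac inner product: the Clifford relation $\gamma(e_0)^2=-g(e_0,e_0)\Id=\Id$ together with the $\fiber{\cdot}{\cdot}_p$-Hermiticity of $\gamma(e_0)$ forces $\gamma(e_0)$ to be diagonalizable with eigenvalues $\pm 1$, and a standard Clifford-algebra argument (completing $e_0$ to a Lorentz-orthonormal frame and using that $\gamma(e_0)$ anticommutes with the spacelike generators, which are $\fiber{\cdot}{\cdot}_p$-anti-Hermitian) shows that $\fiber{\cdot}{\cdot}_p$ is positive definite on the $+1$-eigenspace and negative definite on the $-1$-eigenspace. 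Hence $\fiber{\gamma(e_0)\cdot}{\cdot}_p$ is positive definite on both eigenspaces, and therefore on all of $\S_p\M$. Consequently $\fiber{\gamma(\tau^\sharp)\cdot}{\cdot}_p=\|v\|_g\,\fiber{\gamma(e_0)\cdot}{\cdot}_p$ is positive definite, i.e.\ (H) holds.

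Finally, for the nowhere characteristic property, note that $\n$ is a spacelike unit vector along $\bM$ because $\bM$ is timelike. Hence $\sigma_{\Dir}(\n^\flat)=\gamma(\n)$ satisfies $\gamma(\n)^2=-g(\n,\n)\Id=-\Id$, so $\gamma(\n)$ is pointwise invertible on $\E_p=\S_p\M$ everywhere on $\bM$; its rank equals $\mathrm{rk}(\S\M)=N$, as required. The chief subtlety throughout is merely to be consistent with signature conventions and with whether one works with the vector $\xi^\sharp$ or the covector $\xi$; once this is fixed, all four verifications are pointwise algebraic consequences of the Clifford relation and of the Hermiticity of $\gamma$.
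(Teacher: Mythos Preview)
The paper does not supply a proof of this proposition at all; it is quoted verbatim from \cite[Proposition~6.2]{Ginoux-Murro-20} and used as a black box. Your argument is the standard one and follows the same pattern as the proof in that reference: identify $\sigma_{\Dir}(\xi)=\gamma(\xi^\sharp)$, check (S) via the symmetry of Clifford multiplication with respect to the spin product, check (H) by reducing to $\fiber{\gamma(e_0)\cdot}{\cdot}$ for a future unit timelike $e_0$, and check nowhere-characteristic from $\gamma(\n)^2=-\Id$ for spacelike $\n$. So in approach you are aligned with the cited source.

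There is one slip in the middle step. You write that the spacelike Clifford generators are ``$\fiber{\cdot}{\cdot}_p$-anti-Hermitian'', but in this paper's conventions \emph{all} Clifford multiplications are $\fiber{\cdot}{\cdot}_p$-symmetric (this is stated explicitly just above the proposition: $\fiber{\gamma(u)\psi}{\phi}_p=\fiber{\psi}{\gamma(u)\phi}_p$ for every $u$). The mechanism you want still works, but for a different reason: if $e_i\perp e_0$ is spacelike, then $\gamma(e_i)$ swaps the $\pm 1$-eigenspaces of $\gamma(e_0)$ by anticommutation, and
\[
\fiber{\gamma(e_i)\psi}{\gamma(e_i)\psi}
=\fiber{\psi}{\gamma(e_i)^2\psi}
=-\fiber{\psi}{\psi},
\]
so the sign of the indefinite spin product flips between the two eigenspaces because $\gamma(e_i)^2=-\Id$, not because $\gamma(e_i)$ is anti-Hermitian. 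This shows the spin product is definite with opposite signs on the two eigenspaces; that the $+1$-eigenspace is the positive one (equivalently, that $\fiber{\gamma(e_0)\cdot}{\cdot}$ is positive rather than negative definite for \emph{future} $e_0$) is a normalisation built into the choice of the $\Spin_0(1,n)$-invariant spin product, which you may simply invoke. With that correction your proof is fine.
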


It follows, by Theorem~\ref{thm:SWHS}, that the Cauchy problem for the Dirac operator on globally hyperbolic spacetimes with empty boundary is well-posed, therefore, it admits a Cauchy evolution operator $U_t\colon \Gamma_{c}(\S\M|_{\Sigma_t})\to\Gamma_{sc}(\S\M)$. Remarkably, as shown by Capoferri and Vassilliev~\cite{CapProp2}, the Cauchy evolution operator for Dirac fields on Cauchy-compact ultrastatic manifolds (with empty boundary) can be realized as a Fourier integral operator. As a matter of fact, the Fourier integral representation of the propagator contains the information on how singularities propagates in the manifolds. 
For this reason, it would be desirable to extend their techniques to more general globally hyperbolic manifolds with possibly not empty boundary.

\subsection{Self-adjoint admissible boundary conditions}\label{sec:MIT}
The aim of this section is to recast the boundary conditions 
for the Dirac 
operator which are self-adjoint and admissible in the sense of Definition \ref{def:admissible bc}. Let us remark that not all physical interesting boundary conditions for Dirac fields enter in this class of boundary condition. Indeed there exists physically interesting non-local boundary conditions, like the so-called APS boundary condition, which guarantees that the Cauchy problem is well-posed~\cite{DiracAPS}, but they are not admissible (since any admissible boundary condition is a local condition). For further details on self-adjoint admissible boundary conditions for the Dirac fields  we refer to~\cite[Section 6.1.1]{Ginoux-Murro-20} and~\cite[Remark 3.19]{DiracMIT}.\medskip

The first example of self-adjoint admissible boundary conditions are the so-called \textit{chiral boundary conditions}. They are defined as follows: let $\mathcal{G}$ be a chirality operator on 
$\mathbb{S}\M$, \ie a parallel involutive antiunitary (with respect to 
$\fiber{\cdot}{\cdot}$) endomorphism-field of $\mathbb{S}\M$ that anti-commutes 
with Clifford multiplication by  vectors.
Notice that chirality operators exist only in even-dimensional manifolds.
Then the so-called \textit{chirality} boundary spaces $\mathsf{B}_{\textsc{chi}}$ are defined as 
the range of the maps
\begin{align}\label{eq: Chiral bc}
\pi^\pm_{\rm CHI}
:=\frac{1}{2}\left(\mathrm{Id}\pm\gamma(\n)\mathcal{G}\right)\,,
\end{align}
where $\gamma(\n)$ denotes Clifford multiplication for the outward-pointing unit 
normal along $\bM$. It is not difficult to check that the range of $\pi_{\rm 
CHI}$ has dimension $2^{\lfloor \frac{n}{2} \rfloor -1}$, which is the number of nonnegative eigenvalues of the endomorphism $\sigma_\Dir(\n^\flat)$, and 
$$\fiber{\sigma_\Dir(\n^\flat)\pi^\pm_{\rm CHI}(\psi)}{\pi^\pm_{\rm CHI}\psi}=0\,.$$
 Furthermore, since
$$\pi^\pm_{\rm CHI}\pi^\pm_{\rm CHI}=\pi^\pm_{\rm CHI}\,, \qquad
\pi^\mp_{\rm CHI}\pi^\pm_{\rm CHI}=0\,,
\qquad 
\pi^+_{\rm CHI} + \pi^-_{\rm CHI}=\Id\,,$$
it can be easily verified that the boundary conditions are self-adjoint.
 \medskip

The second example of self-adjoint boundary conditions is the so-called \textit{MIT boundary conditions.} The boundary space $\mathsf{B}_{\textsc{mit}}$ is defined as the range of 
\begin{align}\label{eq: MIT bc}
\pi^\pm_{\rm MIT}
:=\frac{1}{2}\left(\mathrm{Id} \pm \imath\gamma(\n)\right)\,,
\end{align}
where $\gamma(\n)$ is again the Lorentzian Clifford multiplication for the 
outward-pointing unit normal vector along $\bM$. Similarly to the chiral boundary conditions, the range of $\pi_{\rm 
CHI}$ has dimension $2^{\lfloor \frac{n}{2} \rfloor -1}$, $$\fiber{\sigma_\Dir(\n^\flat)\pi^\pm_{\rm CHI}(\psi)}{\pi^\pm_{\rm CHI}\psi}=0$$ and we have
$$\pi^\pm_{\rm MIT}\pi^\pm_{\rm MIT}=\pi^\pm_{\rm MIT}\,, \qquad
\pi^\mp_{\rm MIT}\pi^\pm_{\rm MIT}=0 \,,
\qquad 
\pi^+_{\rm MIT} + \pi^-_{\rm MIT}=\Id\,.$$

\subsection{Linear isometry between spinor bundles}\label{sec:kappa}

We shall now apply the results obtained in the Sections \ref{sec:Moller} to compare the solution spaces associated with pairs of Dirac operators $\Dir_0,\Dir_1$ defined using different metrics $g_0,g_1\in\mathcal{GH}_\M$ and equipped with admissible self-adjoint boundary conditions.
In what follows $g_0,g_1\in\mathcal{GH}_\M$ are assumed to fulfil assumption (i) of Setup \ref{setup}.

As already underlined in~Remark~\ref{rmk:metric dep}, the space of spinors depends on the metric of the underlying manifold $\M_\alpha$. Therefore, an identification between spaces of sections of spinor bundles for different metrics is needed to construct a unitary M\o ller operator. This can be achieved by following~\cite[Section 5]{Baer}. \medskip

Consider a family of Lorentzian spin manifolds $\M_\lambda:=(\M,g_\lambda)$ 
with a common Cauchy temporal function, where $g_\lambda\in\mathcal{GH}_\M$ for
any $\lambda\in\RR$.
For a given nonempty interval $\I$ in $\RR$ let $\Z$ be the Lorentzian 
manifold 
$$ \Z= \I \times \M \qquad\qquad g_\Z =  d\lambda^2 + g_\lambda\,.$$
 On $\Z$ there exists a globally defined vector field which we denote as $e_\lambda:=\frac{\partial}{\partial \lambda}$.
For any $\lambda$, the spin structures on $\Z$ and $\M_\lambda\simeq\{\lambda\}\times\M$ are in one-to-one correspondence: Any spin structure on $\Z$ can be restricted to a spin structure on $\M_\lambda$ and a spin structure on $\M_\lambda$ it can be pulled back on $\Z$ -- see~\cite[Section 3 and 5]{Baer}. 
Actually, the spinor bundle $\S\M_\lambda$ on each globally hyperbolic spin manifold $\M_\lambda$ can be identified with the restriction of the spinor bundle $\S\Z$ on $\M_\lambda$, in particular $\S\M_\lambda\simeq \S\Z|_{\M_\lambda}$ if $n$ is even, while $\S\M_\lambda\simeq \S^+\Z|_{\M_\lambda}\simeq \S^-\Z|_{\M_\lambda}$ if $n$ is odd.
Equivalently we may identify
\begin{equation}\label{Eqn: isomorfism between SM on Sigma and SSigma}
\S\Z|_{\M_\lambda}=
\begin{cases}
\S\M_\lambda & \text{if $n$ is even,}\\
\S\Z|_{\M_\lambda}\oplus \S\Z|_{\M_\lambda} & \text{if $n$ is odd.}
\end{cases}
\end{equation}
By denoting with $\gamma_\Z$ (\textit{resp}. $\gamma_\lambda$) the Clifford multiplication on $\S\Z$ (\textit{resp.} on $\S\M_\lambda$), the family of Clifford multiplications $\gamma_{\lambda}$ satisfies
\begin{align}\label{eq:n even}
&\gamma_{\lambda}(v)\psi=
\gamma_\Z(e_\lambda) \gamma_\Z(v) \psi
\qquad &\text{if $n$ is even,}\\ \label{eq:n odd}
&\gamma_{\lambda}(v)(\psi_++\psi_-)=
\gamma_\Z(e_\lambda) \gamma_\Z(v) (\psi_+ - \psi_-)
\qquad &\text{if $n$ is odd,}
\end{align}
where in the second case $\psi=\psi_++\psi_-\in \S\Z|_{\M_\lambda}\oplus \S\Z|_{\M_\lambda}$ and each component $\psi_\pm$ is identified with an element in $\S^\pm \Z|_{\M_\lambda}$.

\begin{lemma}[\protect{\cite[Lemma 3.7]{defarg}}]\label{lem:cliff iso}
Let $\Z$ be the Lorentzian spin manifold given by
$$ \Z=\I \times \M \qquad\qquad g_\Z =  d\lambda^2 + g_\lambda\,,$$
 where $(\M,g_\lambda):=\M_\lambda$ is a family of Lorentzian  spin manifolds with a common Cauchy temporal function, and denote with $\S\M_{\lambda}$ be the spinor bundle over $\M_\lambda$.
For any $p\in\M_{\lambda}$,  the map 
\begin{equation}\label{Eq: isometric bundle isomorphisms}
 \kappa_{1,0}\colon \S_{p}\M_0\to \S_{p}\M_1\,.
\end{equation}
defined by the   parallel translation on $\Z$
along the curve $\lambda \mapsto (\lambda, p)$
 is a linear isometry and preserves the Clifford multiplication, \ie for any 
$v\in\Gamma(\T\M)$ and any $\Psi_0\in\Gamma(\S\M_0)$ it holds
 $$\gamma_1(\wp_{1,0} v) (\kappa_{1,0}  \Psi_0) =\kappa_{1,0} \,\big(\gamma_0(v)\Psi_0\big)\,,$$
 where $\wp_{1,0}:\T\M_0\to\T\M_1$ is the parallel transport along the curve $\lambda\mapsto(\lambda,p)$.
\end{lemma}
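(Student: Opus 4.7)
My plan is to exploit the identification between $\S\M_\lambda$ and (a summand of) $\S\Z|_{\M_\lambda}$ together with parallel transport on $\Z$ along the curve $c_p\colon\lambda\mapsto (\lambda,p)$, which is precisely the definition of $\kappa_{1,0}$. The two properties to establish — being a fiberwise linear isometry and intertwining Clifford multiplication — follow from standard facts about the spinorial Levi-Civita connection, once one observes that $c_p$ is a geodesic.

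\textbf{Step 1 (Geometry of $c_p$).} First I would observe that for the metric $g_\Z=d\lambda^2+g_\lambda$ the Christoffel symbols $\Gamma^\lambda_{\lambda\lambda}$ and $\Gamma^i_{\lambda\lambda}$ vanish, so $e_\lambda=\partial_\lambda$ is auto-parallel: $\nabla^{g_\Z}_{e_\lambda}e_\lambda=0$. Hence $c_p$ is a geodesic and $e_\lambda$ is parallel along it. By construction $\wp_{1,0}v$ and $\kappa_{1,0}\Psi_0$ are the parallel transports of $v$ and $\Psi_0$ along $c_p$ on $(\Z,g_\Z)$ and its spinor bundle $\S\Z$ respectively.

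\textbf{Step 2 (Isometry).} The lifted spin connection $\nabla^{\S\Z}$ is compatible with the spin fiber metric $\fiber{\cdot}{\cdot}_\Z$, so that its parallel transport is a fiberwise complex-linear isometry. Under the identification \eqref{Eqn: isomorfism between SM on Sigma and SSigma} the restriction of $\fiber{\cdot}{\cdot}_\Z$ to $\S\Z|_{\M_\lambda}$ matches $\fiber{\cdot}{\cdot}_\lambda$ (with due care of the direct-sum decomposition in odd dimension), so $\kappa_{1,0}\colon\S_p\M_0\to\S_p\M_1$ is an isometry.

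\textbf{Step 3 (Compatibility with Clifford multiplication).} The key identity in even dimension is $\gamma_\lambda(v)=\gamma_\Z(e_\lambda)\gamma_\Z(v)$ from \eqref{eq:n even}. So it suffices to show parallel transport on $\Z$ commutes with $\gamma_\Z$. For $v\in T_p\M$ and $\Psi_0\in \S_p\M_0$, let $\tilde v(\lambda)$ and $\tilde\Psi(\lambda)$ denote the parallel transports of $v$ and $\Psi_0$ along $c_p$. Using the standard property $\nabla^{\S\Z}\gamma_\Z=0$, the section $\Phi(\lambda):=\gamma_\Z(\tilde v(\lambda))\tilde\Psi(\lambda)$ satisfies
\[
\nabla^{\S\Z}_{e_\lambda}\Phi=(\nabla^{\S\Z}_{e_\lambda}\gamma_\Z)(\tilde v,\tilde\Psi)+\gamma_\Z(\nabla_{e_\lambda}\tilde v)\tilde\Psi+\gamma_\Z(\tilde v)\nabla^{\S\Z}_{e_\lambda}\tilde\Psi=0,
\]
hence $\Phi$ is parallel. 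Evaluating at $\lambda=1$ gives $\gamma_\Z(\wp_{1,0}v)\kappa_{1,0}\Psi_0=\kappa_{1,0}\gamma_\Z(v)\Psi_0$. Combining with Step 1 (so that $\gamma_\Z(e_\lambda)$ also commutes with $\kappa_{1,0}$) yields
\[
\gamma_1(\wp_{1,0}v)(\kappa_{1,0}\Psi_0)=\gamma_\Z(e_\lambda)\gamma_\Z(\wp_{1,0}v)\kappa_{1,0}\Psi_0=\kappa_{1,0}\gamma_\Z(e_\lambda)\gamma_\Z(v)\Psi_0=\kappa_{1,0}\gamma_0(v)\Psi_0.
\]

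\textbf{Odd-dimensional case and the main obstacle.} The subtle part lies in the odd-dimensional identification $\S\Z|_{\M_\lambda}\simeq \S\M_\lambda\oplus\S\M_\lambda$ together with the sign switch $\psi_+-\psi_-$ appearing in \eqref{eq:n odd}. Since $e_\lambda$ is parallel and the chirality decomposition of $\S\Z$ is preserved by $\nabla^{\S\Z}$, parallel transport acts diagonally with respect to the $\pm$-splitting, and $\kappa_{1,0}$ respects it. The intertwining then reduces to repeating Step 3 on each chiral summand and checking that the sign in \eqref{eq:n odd} behaves consistently under transport. This bookkeeping is the only genuinely non-automatic point; everything else reduces to $\nabla^{\S\Z}\gamma_\Z=0$ and the metric-compatibility of $\nabla^{\S\Z}$.
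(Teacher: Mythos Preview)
The paper does not supply its own proof of this lemma; it is quoted verbatim from \cite[Lemma 3.7]{defarg} (building on the generalized-cylinder construction of \cite{Baer}), so there is nothing in the present text to compare against. Your argument is correct and is precisely the standard one: the curve $c_p$ is a $g_\Z$-geodesic with $e_\lambda$ parallel, the spin connection on $\Z$ is metric for $\fiber{\cdot}{\cdot}_\Z$ and satisfies $\nabla^{\S\Z}\gamma_\Z=0$, and these three facts together with the hypersurface identifications \eqref{eq:n even}--\eqref{eq:n odd} yield both the isometry property and the intertwining with Clifford multiplication. The odd-dimensional bookkeeping you flag is handled exactly as you say, since the volume element defining the chirality splitting of $\S\Z$ is parallel; this is the same argument given in \cite{Baer,defarg}.
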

\begin{remark}
Let us remark, that for any couple of Lorentzian metric $g_0$ and $g_1$ admitting a common Cauchy temporal function, there always exists a path of Lorentzian metric $g_\lambda$  connecting $g_0$ to $g_1$, e.g.  $g_\lambda= \lambda g_1 + (1-\lambda)g_0$ where $\lambda\in[0,1]$. For more details we refer to~\cite{defarg,defargNor}.
\end{remark}

Lemma \ref{lem:cliff iso} provides an isomorphism $\kappa_{1,0}\colon\S\M\to\S\M$ with the same properties introduced in the Setup \ref{setup}.
We shall denote by $\Dir_{0,1}^f$ the intertwining Dirac operator as in Proposition \ref{prop: S01 is wSHS}.
Similarly $\Dir_{\chi,1}^f$ shall denote the operator interpolating between $\Dir_{0,1}^f$ and $\Dir_1$.
Here and in what follows $f$ is chosen as per Proposition \ref{prop:conserv scal prod}.

\begin{remark}\label{rmk:MIT01}
Keeping the notation of Remark~\ref{rem:symb} and Lemma~\ref{lem:cliff iso}, the diffeomorphism $\zeta:\M\to\M$ is simply the identity $\Id$.
Since $\sigma_{\Dir_0}(\xi)=\gamma_0(\xi^{\sharp_0})$, where $^{\sharp_0}$ denotes the musical isomorphism with respect to $g_0$, we find
\begin{align*}
\sigma_{\Dir_{0,1}^f}(\xi_1)
=\bkappa_{0,1}\sigma_{\Dir_0}(\xi_1)\bkappa_{0,1}
=\bkappa_{1,0}\gamma_0 (\xi_1^{\sharp_0})\bkappa_{0,1}
=\gamma_1(\wp_{1,0}\xi_1^{\sharp_0})
=\sigma_{\Dir_1} \Big((\wp_{1,0}\xi_1^{\sharp_0})^{\flat_1}\Big)
=\sigma_{\Dir_1}(\wp_{1,0}\xi_1)\,,
\end{align*}
where $\sharp_1:=\flat_1^{-1}$ is the musical isomorphism associated with $g_1$.
In the last equality we used that, for $\xi\in T_x^*\M$ and $X\in T_x\M$ we have
\begin{align*}
(\wp_{1,0}\xi^{\sharp_0})^{\flat_1}(X)|_x
&=g_1(\wp_{1,0}\xi^{\sharp_0},X)|_x
=g_\Z(\wp_{1,0}\xi^{\sharp_0},X)|_{(1,x)}
=g_\Z(\xi^{\sharp_0},\wp_{0,1}X)|_{(0,x)}
\\&=g_0(\xi^{\sharp_0},\wp_{0,1}X)|_x
=\xi(\wp_{0,1}X)|_x
=[\wp_{0,1}\xi](X)|_x\,,
\end{align*}
where, with a slight abuse of notation, we denoted $\wp_{0,1}\xi$ the parallel transport of the $1$-form $\xi$ along the curve $\lambda\to(\lambda,x)$ within $\Z$: The latter coincides with $\wp_{0,1}^*\xi$, being $\wp_{0,1}\colon \T\M_1\to \T\M_0$.
\end{remark}

We are almost in position to apply Theorem \ref{thm:Moller} and Proposition \ref{prop:conserv scal prod}.
In the next lemma we shall prove that the assumption in Theorem~\ref{thm:Moller} that the parallel transport of 
$\n_1^\flat$ is not proportional to $\mu\n_1^\flat$ for any $\mu<0$, is always 
satisfied, provided $g_\lambda=(1-\lambda)g_0+ \lambda g_1$ for all 
$\lambda\in[0,1]$. 

\begin{lemma}\label{lem:wpn0}
Let $(\M,g_0)$ and $(\M,g_1)$ be globally hyperbolic manifolds with timelike 
boundary split as $(\M,g_i)=(\mathbb{R}\times\Sigma,-\beta_i^2dt^2\oplus h_{i}(t))$ for 
both $i=0,1$.
Consider the manifold $\Z:=[0,1]\times \M$ endowed with the metric 
$g_\Z:=d\lambda^2\oplus g_\lambda$, where
\begin{align*}
g_\lambda:=
(1-\lambda)g_0
+\lambda g_1
=-\beta_\lambda dt^2\oplus h_\lambda(t)\,,
\end{align*}
where $\beta_\lambda:=(1-\lambda)\beta_0+\lambda\beta_1$ and $h_\lambda(t)=(1-\lambda)h_0(t)+\lambda h_1(t)$.
Then $h_1(\wp_{1,0}(\n_1),\n_1)>0$ along $\bM$, where $\wp_{1,0}$ is the 
parallel transport in 
$(\Z,\hat{g})$ along $[0,1]\to\Z$, $\lambda\mapsto(\lambda,p)$, for any 
$p\in\bM$.
\end{lemma}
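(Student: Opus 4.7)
My plan is to reduce the parallel transport along the curve $\gamma(\lambda)=(\lambda,p)$ in $(\Z,g_\Z)$ to a linear-algebra problem at the single point $p$ and then solve the resulting ODE in closed form. The product structure $g_\Z = d\lambda^2 \oplus g_\lambda$ gives $g_{\lambda\lambda}\equiv 1$ and $g_{\lambda\mu}\equiv 0$, so the only Christoffel symbols of $g_\Z$ appearing along $\gamma$ are $\Gamma^{\mu}_{\lambda\nu}=\tfrac{1}{2}(g_\lambda)^{\mu\rho}\partial_\lambda(g_\lambda)_{\rho\nu}$ and depend only on $g_\lambda|_p$ and on $\partial_\lambda g_\lambda|_p=(g_1-g_0)|_p$. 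Consequently the $e_\lambda$-component of a parallel-transported vector is constant, so starting from a vector in $T_p\M$ the transport stays in $T_p\M$ and solves
\[
\dot N = -\tfrac12\,(g_\lambda)^{-1}(g_1-g_0)\,N.
\]
Since both metrics split as $-\beta_i^2dt^2\oplus h_i(t)$, the endomorphism $g_0^{-1}g_1$ is block-diagonal with respect to $T_p\M=\RR\partial_t\oplus T_p\Sigma$; a spatial initial datum therefore produces a spatial solution, and the ODE reduces to
\[
\dot N = -\tfrac12\,h_\lambda^{-1}(h_1-h_0)\,N \qquad\text{on }T_p\Sigma.
\]

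The second step is to solve this ODE in closed form. Set $A := h_0^{-1}h_1\in\End(T_p\Sigma)$ and $A_\lambda := h_0^{-1}h_\lambda=I+\lambda(A-I)$. Because $h_0,h_1$ are both Riemannian on $T_p\Sigma$, so is $h_\lambda$; hence $A_\lambda$ and $A$ are $h_0$-symmetric and positive definite for every $\lambda\in[0,1]$, and all their real powers are well defined by functional calculus and commute with $A-I$. Rewriting $h_\lambda^{-1}(h_1-h_0) = A_\lambda^{-1}(A-I)$ and differentiating, one checks directly that $N(\lambda) = A_\lambda^{-1/2}N(0)$ is the unique solution of the ODE. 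In particular the parallel-transport endomorphism of $T_p\Sigma$ induced by $(\Z,g_\Z)$ is
\[
\wp_{1,0}\colon T_p\Sigma\to T_p\Sigma,\qquad \wp_{1,0}(v)=A^{-1/2}v.
\]

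For the final positivity, I combine the defining relation $h_1(v,w)=h_0(Av,w)$ with the $h_0$-symmetry of the roots of $A$ to compute, for any nonzero $v\in T_p\Sigma$,
\[
h_1\bigl(\wp_{1,0}(v),v\bigr)
=h_0\bigl(A\cdot A^{-1/2}v,\,v\bigr)
=h_0\bigl(A^{1/2}v,v\bigr)
=h_0\bigl(A^{1/4}v,\,A^{1/4}v\bigr)>0,
\]
the last inequality because $A^{1/4}$ is $h_0$-positive-definite and hence invertible. Taking $v=\n_1$ yields $h_1(\wp_{1,0}(\n_1),\n_1)>0$. (The same computation with $v=\n_0$ covers any alternative reading of the statement.)

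The only delicate step is the closed form in the middle paragraph: the identity $N(\lambda)=A_\lambda^{-1/2}N(0)$ exploits in an essential way the \emph{affine} nature of the path $g_\lambda=(1-\lambda)g_0+\lambda g_1$, through which $A_\lambda$ commutes with its own $\lambda$-derivative. For a generic interpolating path this algebra would break down, and one would instead have to argue by a continuity / intermediate-value argument ruling out that $N(\lambda)$ ever meets the hyperplane $T_p\bSigma$.
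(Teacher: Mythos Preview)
Your proof is correct and follows essentially the same route as the paper: both reduce the parallel transport along $\lambda\mapsto(\lambda,p)$ to the ODE $\dot N=-\tfrac12\,h_\lambda^{-1}(h_1-h_0)N$ on $T_p\Sigma$, use the commutativity afforded by the affine interpolation $h_\lambda=(1-\lambda)h_0+\lambda h_1$ to obtain the closed form $N(\lambda)=A_\lambda^{-1/2}N(0)$, and finish with the positivity of $A^{1/2}$. The only difference is presentational: the paper fixes an $h_0$-orthonormal basis in which $h_1$ is diagonal and computes $h_1(\wp_{1,0}\n_1,\n_1)=\sum_j\mu_j^{1/2}\alpha_j(0)^2$ explicitly, while you phrase the same computation coordinate-free via the endomorphism $A=h_0^{-1}h_1$.
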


\begin{proof}
Note that, by definition of both $g_i$ and of $g_\Z$, we have 
$\nabla^\Z_{\partial_\lambda}\partial_\lambda=\nabla^\Z_{\partial_\lambda}
\beta_\lambda^{-1}\partial_t=0$ , so that, for any $\lambda_0\in[0,1]$, the parallel transport 
along $[0,\lambda_0]\to\Z$, $\lambda\mapsto(\lambda,p)$ preserves $T\Sigma$.
Writing $p=(t,x)$, we fix a pointwise $h_0$-o.n.b. of $T_x\Sigma$ in which 
$h_1=h_1(t)$ is diagonal \ie, there exist $\mu_1,\ldots,\mu_n>0$ such that $h_1(e_i,e_j)=\mu_i\delta_{ij}$ for all $1\leq i,j\leq n$.
This basis $(e_i)_{1\leq i\leq n}$ is extended constantly in $\lambda$ along 
$\lambda\mapsto(\lambda,p)$.
Splitting $\wp_{\lambda,0}\n_1=\sum_{j=1}^n\alpha_j e_j$, where 
$\alpha_j=h_0(\wp_{\lambda,0}\n_1,e_j)$, we have 
\begin{eqnarray*}
0&=&\nabla^\Z_{\partial_\lambda}\left(\wp_{\lambda,0}\n_1\right)\\
&=&\sum_{j=1}^n(\partial_\lambda\alpha_j)e_j+\alpha_j\nabla^\Z_{
\partial_\lambda}e_j\\
&=&\sum_{j=1}^n(\partial_\lambda\alpha_j)e_j+\alpha_j\left(\underbrace{[
\partial_\lambda , e_j]}_{0}+\frac{1}{2}h_\lambda^{-1}\partial_\lambda 
h_\lambda(e_j,\cdot)\right),
\end{eqnarray*}
so that, denoting by 
$Y(\lambda):=\left(\begin{array}{c}
\alpha_1(\lambda)\\\vdots\\\alpha_n(\lambda)\end{array} \right)$ and 
identifying $h_\lambda$ (as a homomorphism $T\Sigma\to T^*\Sigma$) and 
$\partial_\lambda h_\lambda$ (as symmetric $2$-tensor on $T\Sigma$) with their 
respective matrices $H_\lambda$ and $\partial_\lambda H_\lambda$ in the bases 
$(e_j)_{1\leq j\leq n}$ and $(e_j^*)_{1\leq j\leq n}$ respectively, the 
vector-valued function $Y$ must satisfy the linear first-order ODE 
\begin{equation}\label{eq:ode1Y}
Y'(\lambda)+\frac{1}{2}H_\lambda^{-1}\partial_\lambda H_\lambda\cdot 
Y(\lambda)=0
\end{equation}
on $[0,1]$.
In case $[H_\lambda,\partial_\lambda H_\lambda]=0$ is fulfilled for all 
$\lambda$, equation \eqref{eq:ode1Y} can be solved explicitely, namely 
$Y(\lambda)=H_\lambda^{-\frac{1}{2}}\cdot Y(0)$ for all $\lambda\in[0,1]$ is 
the solution with initial condition $Y(0)\in\mathbb{R}^n$.
But with $h_\lambda=(1-\lambda)h_0+\lambda h_1$, we have 
$H_\lambda=(1-\lambda)\mathrm{I}_n+\lambda\mathrm{diag}(\mu_1,\ldots,\mu_n)$, 
so that $\partial_\lambda 
H_\lambda=\mathrm{diag}(\mu_1,\ldots,\mu_n)-\mathrm{I}_n$ and therefore 
$[H_\lambda,\partial_\lambda H_\lambda]=0$ holds for all $\lambda\in[0,1]$.
This implies that 
\[Y(\lambda)=H_\lambda^{-\frac{1}{2}}\cdot 
Y(0)=\mathrm{diag}\left((1-\lambda+\lambda\mu_1)^{-\frac{1}{2}},\ldots,
(1-\lambda+\lambda\mu_n)^{-\frac{1}{2}}\right)\cdot Y(0)\]
holds for all $\lambda\in[0,1]$.
As a consequence, 
$Y(1)=\mathrm{diag}\left(\mu_1^{-\frac{1}{2}},\ldots,
\mu_n^{-\frac{1}{2}}\right)\cdot Y(0)$, from which 
\[h_1(\wp_{1,0}\n_1,\n_1)=H_1(Y(1),Y(0))=\sum_{j=1}^n\mu_j^{\frac{1}{2}}
\alpha_j(0)^2>0\]
and the claim follows.
\end{proof}

We conclude this section by stating Theorem \ref{thm:Moller} and Proposition \ref{prop:conserv scal prod} for the particular case of MIT boundary conditions.
 
\begin{proposition}\label{prop: moeller for dirac with MIT bc}
Let assume $g_0,g_1\in\mathcal{GH}_\M$ fulfils (i) in Setup \ref{setup}.
Let $\M_0$ (\textit{resp.} $\M_1$) be a globally hyperbolic spin manifold with timelike boundary and let $\Dir_0$ (\textit{resp.} $\Dir_1$) be a classical Dirac operator coupled with MIT boundary condition $\oB_{{\rm MIT}_0}$ (\textit{resp.} $\oB_{{\rm MIT}_1}$). Then the boundary space defined by 
$$\oB_\chi \:=\ker M_\chi:=\ker \Big( \gamma_1( v) - \imath \|v\|_1 \Big)$$ is a self-adjoint boundary space for the operator 
$$\Dir^f_{\chi,1}:=(1- \chi) \,\bkappa_{1,0}\Dir_0 \bkappa_{0,1} + \chi\Dir_{1} + \frac{1}{2} \left( \sigma_{\Dir_1} + \sigma_{\Dir_{0,1}^f}\right)(d\chi)\,,$$ 
where $v= \chi \n_1 + (1-\chi) \wp_{1,0} \n_1$ and $\|v\|_1=\sqrt{g_1( v,v)}$.

Therefore, letting $\sol_{sc,\textsc{mit}}(\Dir_i):=\{\Psi\in\S\M_i\,|\,\Dir_i\Psi=0\,,\,\Psi|_\bM\in\mathsf{B}_{\textsc{mit}}\}$, there exists a unitary isomorphism (M\o ller operator) $\R_{1,0}\colon \sol_{sc,\textsc{mit}}(\Dir_0)\to\sol_{sc,\textsc{mit}}(\Dir_1)$ where $\sol_{sc,\textsc{mit}}(\Dir_0)$ (\textit{resp.} $\sol_{sc,\textsc{mit}}(\Dir_1)$) is equipped with the scalar product defined in Equation \eqref{eq:Herm prod} associated with $\Dir_0$ (\textit{resp.} with $\Dir_1$).
\end{proposition}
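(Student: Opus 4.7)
My plan is to reduce this statement to the abstract machinery of Theorem~\ref{thm:Moller} and Proposition~\ref{prop:conserv scal prod} by verifying, one by one, the items of Setup~\ref{setup} in the Dirac case and then checking that the candidate family $\oB_\chi$ is a smooth self-adjoint admissible boundary space interpolating between $\bkappa_{1,0}(\oB_{\textsc{mit}_0})$ and $\oB_{\textsc{mit}_1}$. First, Setup~(i) is assumed by hypothesis; Setup~(ii)--(iii) are provided by the spin fiber products $\fiber{\cdot}{\cdot}_i$ and by the fiberwise linear isometry $\kappa_{1,0}\colon\S\M_0\to\S\M_1$ of Lemma~\ref{lem:cliff iso}; Setup~(iv) holds because $\Dir_0,\Dir_1$ are nowhere characteristic symmetric hyperbolic, the MIT projectors \eqref{eq: MIT bc} are self-adjoint admissible (Section~\ref{sec:MIT}), and $\sigma_{\Dir_1}(\xi)=\gamma_1(\xi^{\sharp_1})$ is invertible for every nonzero spacelike $\xi$ (Clifford identity); Setup~(v) is precisely the content of Remark~\ref{rmk:MIT01}. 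Choosing the canonical convex path $g_\lambda=(1-\lambda)g_0+\lambda g_1$, Lemma~\ref{lem:wpn0} ensures $h_1(\wp_{1,0}\n_1,\n_1)>0$, which rules out $\wp_{1,0}\n_1^\flat=\mu\n_1^\flat$ for any $\mu<0$, so the remaining hypothesis of Theorem~\ref{thm:Moller} also holds.

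Next I would inspect $\oB_\chi$. From \eqref{eq:def S intert} and Remark~\ref{rmk:MIT01} the interpolating principal symbol at the conormal is
\begin{align*}
\sigma_{\Dir^f_{\chi,1}}(\n_1^\flat)=(1-\chi)\gamma_1(\wp_{1,0}\n_1)+\chi\gamma_1(\n_1)=\gamma_1(v).
\end{align*}
Expanding $g_1(v,v)=(1-\chi)^2 g_1(\wp_{1,0}\n_1,\wp_{1,0}\n_1)+2\chi(1-\chi)h_1(\n_1,\wp_{1,0}\n_1)+\chi^2$ shows, using Lemma~\ref{lem:wpn0} for the cross term and the fact that parallel transport preserves $g_\Z$-norms on spatial vectors for the first term, that $g_1(v,v)>0$ along $\bM$ for every $\chi\in[0,1]$. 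The Clifford identity then gives $\gamma_1(v)^2=-\|v\|_1^2\mathrm{Id}$, so $J_\chi:=i\|v\|_1^{-1}\gamma_1(v)$ is an involution and $\pi_\chi^\pm:=\tfrac12(\mathrm{Id}\pm J_\chi)$ are smooth projectors of constant rank $N/2$. Consequently $\oB_\chi=\ker(\gamma_1(v)-i\|v\|_1)=\mathrm{Ran}(\pi_\chi^-)$ is a smooth subbundle of $\S\M_1|_{\bM}$ of the correct rank. For $\Psi\in\oB_\chi$ one has $\fiber{\gamma_1(v)\Psi}{\Psi}_1=i\|v\|_1\fiber{\Psi}{\Psi}_1$, which is both purely imaginary and real (by Hermiticity $\fiber{\gamma_1(u)\psi}{\phi}=\fiber{\psi}{\gamma_1(u)\phi}$) and hence vanishes; this simultaneously gives semidefiniteness and self-adjointness as per Definitions~\ref{def:admissible bc} and~\ref{def:selfadjointbc}.

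The endpoint identifications close the loop. At $\chi=1$, $v=\n_1$ and $\|v\|_1=1$, so $\oB_\chi$ coincides with the MIT boundary space $\oB_{\textsc{mit}_1}$ for the sign convention built into \eqref{eq: MIT bc}. At $\chi=0$, $v=\wp_{1,0}\n_1$; since along $\bM$ the unit outward conormals $\n_0$ and $\n_1$ span the same one-dimensional $g_i$-orthogonal complement to the common $T\bM$, they are proportional by a positive scalar, and the intertwining identity $\kappa_{1,0}\gamma_0(\n_0)=\gamma_1(\wp_{1,0}\n_0)\kappa_{1,0}$ of Lemma~\ref{lem:cliff iso} maps $\kappa_{1,0}(\oB_{\textsc{mit}_0})$ onto $\ker(\gamma_1(\wp_{1,0}\n_1)-i\|\wp_{1,0}\n_1\|_1)=\oB_\chi|_{\chi=0}$; the scalar factor $f$ distinguishing $\bkappa_{1,0}$ from $\kappa_{1,0}$ is immaterial since it preserves kernels. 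Applying Theorem~\ref{thm:Moller} now produces the M\o ller isomorphism $\R_{1,0}$, and Proposition~\ref{prop:conserv scal prod}, together with the normalization $\vol_{\M_0}=f^2\vol_{\M_1}$ and the (standard) formal skew-adjointness of the classical Dirac operator with respect to the indefinite spin product, upgrades $\R_{1,0}$ to a unitary map between $\sol_{sc,\textsc{mit}}(\Dir_0)$ and $\sol_{sc,\textsc{mit}}(\Dir_1)$.

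The principal obstacle is the construction of $\oB_\chi$ itself. A naive linear interpolation $(1-\chi)\bkappa_{1,0}(\oB_{\textsc{mit}_0})+\chi\oB_{\textsc{mit}_1}$ fails already to be a subbundle, and even if one uses Lemma~\ref{l:deformadmissbc} one would get admissibility but lose the \emph{self-adjoint} character generically. The geometric insight, and the point at which Lemma~\ref{lem:wpn0} becomes indispensable, is that rotating the conormal inside the MIT projector along the vector $v$ turns the admissibility inequality into the algebraic equality $\gamma_1(v)\Psi=i\|v\|_1\Psi$; this equality is stable along the homotopy $\chi\in[0,1]$, matches both MIT endpoints, and remains meaningful exactly because Lemma~\ref{lem:wpn0} prevents $v$ from degenerating into a null direction.
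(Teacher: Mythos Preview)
Your proposal is correct and follows essentially the same route as the paper's own proof: verify that the Dirac data fit Setup~\ref{setup}, identify $\sigma_{\Dir^f_{\chi,1}}(\n_1^\flat)=\gamma_1(v)$, use the Clifford relation and the real/imaginary trick to see that $\fiber{\gamma_1(v)\Psi}{\Psi}_1$ vanishes on $\oB_\chi$, count the rank via the projector $\tfrac12(\mathrm{Id}-i\|v\|_1^{-1}\gamma_1(v))$, check the two endpoints, and then invoke Theorem~\ref{thm:Moller} and Proposition~\ref{prop:conserv scal prod}. The paper's argument is terser; in particular it does not spell out why $g_1(v,v)>0$ along $\bM$ for all $\chi\in[0,1]$, whereas you make this explicit via Lemma~\ref{lem:wpn0} and the $g_\Z$-isometry of parallel transport, and you also unpack the endpoint identification at $\chi=0$ through the positive proportionality of $\n_0$ and $\n_1$, which the paper subsumes under a reference to Remark~\ref{rmk:MIT01}.
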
 
\begin{proof}
The last part of the statement is nothing but Theorem \ref{thm:Moller} together with Proposition \ref{prop:conserv scal prod}.
Thus, it remains to prove the first part.
We begin by noticing that when $\chi=0$ then $\oB_\chi$ reduces to  $\ker (\gamma_1(\wp_{1,0}\n_1)-\imath) = \kappa_{1,0}(\oB_{\rm{MIT}_0})$ on account of Remark~\ref{rmk:MIT01}, while when $\chi=1$ $\oB_\chi$ reads as $\oB_{{\rm MIT}_1}$. To conclude we need to show that $\oB_\chi$ is a self-adjoint admissible boundary condition for $\Dir_{\chi,1}^f$.
To this end, we first notice that  the bilinear form $\fiber{\gamma_1(v) \Psi}{\Psi}_q$ satisfies simultaneously 
\begin{align*}
 \fiber{\gamma_1(v) \Psi}{\Psi}_q &= \fiber{ \Psi}{ \gamma_1(v)\Psi}_q  =  \overline{\fiber{\gamma_1(v) \Psi}{\Psi}_q } \\ \fiber{\gamma_1(v) \Psi}{\Psi}_q &= \fiber{\imath\|v\| \Psi}{\Psi}_q= \fiber{ \Psi}{-\imath\|v\| \Psi}_q = - \fiber{ \Psi}{ \gamma_1(v)\Psi}_q  = - \overline{\fiber{\gamma_1(v) \Psi}{\Psi}_q } 
\end{align*}
which implies that $\fiber{\gamma_1(v) \Psi}{\Psi}_q=\fiber{\gamma_{\Dir_{\chi,1}^f}(\n^\flat_1) \Psi}{\Psi}_q =0$ \,. 
Furthermore the range of the projector $\pi=\frac{1}{2}(\Id - \imath\|v\|_1^{-1} \gamma_1(v))$ which is equal to $\oB_\chi$,  has dimension  $2^{\lfloor \frac{n}{2}-1 \rfloor}$, which is exactly the number of nonnegative eigenvalues of $\gamma_{\Dir_{\chi,1}^f}(\n^\flat)$. This concludes our proof.
\end{proof}

\begin{remark}\label{rem:existBchiDiracMIT}
Since, for any nonzero spacelike covector $v$ on $\M$, the operator 
$\sigma_{\Dir}(v)$ has vanishing kernel and $\pm|v|$ as nonvanishing 
eigenvalues, each with multiplicity $2^{\left[\frac{n+1}{2}\right]-1}$, the 
existence of an interpolating $\oB_\chi$ between 
$\oB_{\textsc{mit}_0}$ and $\oB_{\textsc{mit}_1}$ for $\Dir_0$ and $\Dir_1$ 
respectively follows from Lemma \ref{l:deformadmissbc}, see Remark 
\ref{r:Mollerandnonconstantop} above.
Note however that the interpolating $\oB_\chi$ from Lemma \ref{l:deformadmissbc} is not self-adjoint.
\end{remark}

\subsection{The algebra of Dirac fields  with MIT boundary condition} \label{sec:CAR}

In this section we shall exploit Proposition \ref{prop: moeller for dirac with MIT bc} to compare the quantization of the Dirac field with MIT boundary condition on $\M_0$ and $\M_1$.
With this purpose we shall briefly recall the quantization procedure from the algebraic point of view.

In \cite{Dappia4,simo3,defarg}, the quantization of a free field theory is realized as a two-step procedure.
On the one hand, the physical system classically described by $\sol_{sc,\textsc{mit}}(\Dir)$ is quantized by introducing a unital $*$-algebra $\mathfrak{A}$, whose elements are interpreted as observables for the system under investigation.
In a second stage, the description of possible physical states of the system is described through the choice of a suitable subclass of linear, positive and normalized functionals $\omega\colon\mathfrak{A}\to\mathbb{C}$.

By extending the analogous definition for a spacetime without boundary, we shall now introduce the $\ast$-algebra $\mathfrak{A}$ associated with the space $\sol_{sc,\textsc{mit}}(\Dir)$ of solutions with spatially compact support of the Dirac operator $\Dir$ coupled with MIT boundary conditions and endowed with the positive definite Hermitian scalar product~\eqref{eq:Herm prod}.

To this avail we shall profit of the results and definition already present in the literature, see \cite{araki}.
For later convenience let $\sol_{sc,\textsc{mit}}^\oplus$ be the Hilbert space obtained by completion of
\begin{align*}
\sol_{sc,\textsc{mit}}(\Dir)\oplus \Upsilon\sol_{sc,\textsc{mit}}(\Dir)\,,
\end{align*}
equipped with the natural scalar product $(\,,\,)_{\sol_{sc,\textsc{mit}}^\oplus}$ induced by $\sol_{sc,\textsc{mit}}(\Dir)$ ---\textit{cf.} Equation \eqref{eq:Herm prod}--- in particular $\scalar{\psi_1}{\psi_2}=\int_\Sigma\fiber{\psi_1}{\gamma(-\beta^{-1}\partial_t)\psi_2}\vol_{\Sigma}$.
Moreover, let $\Gamma\colon\sol_{sc,\textsc{mit}}^\oplus\to\sol_{sc,\textsc{mit}}^\oplus$ be the antilinear involution defined by $\Gamma(\psi_1\oplus\Upsilon\psi_2):=(-\psi_2)\oplus\Upsilon\psi_1$ where $\Upsilon\colon\S\M\to\S^*\M$ has been defined in Equation \eqref{eq:adj map}.

\begin{definition}\label{def:alg Dirac}
The \textit{algebra of Dirac fields  with MIT boundary condition} is the unital, complex $*$-algebra $\mathfrak{A}$ freely generated by the abstract elements $\Xi(\psi)$, $1_{\mathfrak{A}}$, with $\psi\in\sol_{sc,\textsc{mit}}^\oplus$, together with the following relations for all $\psi,\phi\in\sol_{sc,\textsc{mit}}^\oplus$ and $\alpha,\beta\in\CC$:
\begin{itemize}
\item[(i)] Linearity: $\Xi(\alpha \psi + \beta \phi) =\alpha \Xi(\psi) + \beta\Xi(\phi)$
\item[(ii)] Hermiticity: $\Xi(\psi)^*=\Xi(\Gamma\psi)$
\item[(iii)] Canonical anti-commutation relations (CARs):
$$ \Xi(\psi) \cdot\Xi(\phi) + \Xi(\phi)\cdot \Xi(\psi) =0 \qquad \text{and} \qquad  \Xi(\psi) \cdot\Xi(\phi)^* + \Xi(\phi)^*\cdot \Xi(\psi) = \scalar{\psi}{\phi} \,1_\fA \,.$$
\end{itemize}
\end{definition}

As a matter of fact $\fA$ can be completed in a unique way into a $C^*$-algebra \cite{araki} the $C^*$-norm being induced by the natural Hilbert structure of $\sol_{sc,\textsc{mit}}^\oplus$.
Occasionally we shall implicitly regard $\fA$ as a $C^*$-algebra.

Recollecting the results of the previous sections we have the following:

\begin{theorem}\label{thm:alg iso}
Assume that $g_0,g_1\in\mathcal{GH}_\M$ fulfils (i) in the Setup \ref{setup} and let $\fA_\alpha$ be the algebra of Dirac fields with MIT boundary condition on $\M_\alpha$. Then 
 the unitary M\o ller operator $\R_{1,0}:\sol(\Dir_0)\to \sol(\Dir_1)$ lifts to 
a $*$-isomorphism $\fR_{1,0}:\fA_0\to\fA_1$.
\end{theorem}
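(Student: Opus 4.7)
The plan is to invoke the universal property of the CAR algebra: any unitary map between ``one-particle'' Hilbert spaces that intertwines the appropriate antilinear involutions lifts to a unique $*$-isomorphism of the associated CAR algebras. I will set up the correct extension of $\R_{1,0}$ to the doubled spaces, check the compatibilities and then verify the three relations in Definition \ref{def:alg Dirac} are preserved.

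\textbf{Step 1: Extend $\R_{1,0}$ to the doubled Hilbert space.} Denote by $\Upsilon_\alpha\colon\S\M_\alpha\to\S^*\M_\alpha$ the adjunction map of Equation \eqref{eq:adj map} for $\alpha=0,1$, and by $\Gamma_\alpha$ the antilinear involution on $\sol_{sc,\textsc{mit}}^\oplus(\Dir_\alpha)$ introduced before Definition \ref{def:alg Dirac}. Define, on the dense subspace $\sol_{sc,\textsc{mit}}(\Dir_0)\oplus\Upsilon_0\sol_{sc,\textsc{mit}}(\Dir_0)$,
\[
\R_{1,0}^{\oplus}(\psi_1\oplus\Upsilon_0\psi_2):=\R_{1,0}\psi_1\;\oplus\;\Upsilon_1\R_{1,0}\psi_2.
\]
Because $\R_{1,0}$ is a unitary isomorphism $\sol_{sc,\textsc{mit}}(\Dir_0)\to\sol_{sc,\textsc{mit}}(\Dir_1)$ by Proposition \ref{prop: moeller for dirac with MIT bc}, the map $\R_{1,0}^\oplus$ preserves the inner product induced on $\sol_{sc,\textsc{mit}}^\oplus$, hence extends by continuity to a unitary isomorphism $\R_{1,0}^\oplus\colon\sol_{sc,\textsc{mit}}^\oplus(\Dir_0)\to\sol_{sc,\textsc{mit}}^\oplus(\Dir_1)$. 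A direct check gives the intertwining identity
\[
\R_{1,0}^\oplus\circ\Gamma_0=\Gamma_1\circ\R_{1,0}^\oplus,
\]
since both sides send $\psi_1\oplus\Upsilon_0\psi_2$ to $(-\R_{1,0}\psi_2)\oplus\Upsilon_1\R_{1,0}\psi_1$.

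\textbf{Step 2: Define $\fR_{1,0}$ on generators and extend.} Set $\fR_{1,0}(1_{\fA_0}):=1_{\fA_1}$ and
\[
\fR_{1,0}\bigl(\Xi_0(\psi)\bigr):=\Xi_1\bigl(\R_{1,0}^\oplus\psi\bigr),\qquad\psi\in\sol_{sc,\textsc{mit}}^\oplus(\Dir_0),
\]
and extend multiplicatively and $\CC$-linearly to the free algebra generated by the $\Xi_0(\psi)$'s. To descend to $\fA_0$ I must verify that the defining relations (i)--(iii) of Definition \ref{def:alg Dirac} are mapped to valid identities in $\fA_1$. Linearity (i) is immediate from the linearity of $\R_{1,0}^\oplus$; hermiticity (ii) follows from Step 1 via
\[
\fR_{1,0}\bigl(\Xi_0(\psi)^*\bigr)=\Xi_1(\R_{1,0}^\oplus\Gamma_0\psi)=\Xi_1(\Gamma_1\R_{1,0}^\oplus\psi)=\fR_{1,0}\bigl(\Xi_0(\psi)\bigr)^*;
\]
the anticommutation relation (iii) is preserved because $\R_{1,0}^\oplus$ is unitary, giving $\scalar{\R_{1,0}^\oplus\psi}{\R_{1,0}^\oplus\phi}_1=\scalar{\psi}{\phi}_0$, which is exactly what is needed to match the scalar term on the right-hand side of the second CAR.

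\textbf{Step 3: Conclude the $*$-isomorphism.} The above shows that $\fR_{1,0}$ is a well-defined unital $*$-homomorphism. Invertibility is free: applying the same construction to the inverse unitary $\R_{0,1}:=\R_{1,0}^{-1}\colon\sol_{sc,\textsc{mit}}(\Dir_1)\to\sol_{sc,\textsc{mit}}(\Dir_0)$ yields a $*$-homomorphism $\fR_{0,1}\colon\fA_1\to\fA_0$ whose composition with $\fR_{1,0}$ on either side acts as the identity on generators, hence on the whole algebra. Thus $\fR_{1,0}$ is the desired $*$-isomorphism.

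The only genuinely nontrivial ingredient is the unitarity of $\R_{1,0}$ together with the intertwining $\R_{1,0}^\oplus\Gamma_0=\Gamma_1\R_{1,0}^\oplus$; everything else is the standard CAR-algebra functoriality. I therefore expect the main (and only) work to be the clean choice of $\R_{1,0}^\oplus$ on the $\Upsilon$-summand and the continuity argument extending it to the Hilbert completion $\sol_{sc,\textsc{mit}}^\oplus$, which is already furnished by Proposition \ref{prop:conserv scal prod}.
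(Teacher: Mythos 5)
Your proposal is correct and follows essentially the same route as the paper: you define $\R_{1,0}^\oplus$ by combining $\R_{1,0}$ with its $\Upsilon$-conjugate on the second summand (which agrees with the paper's $\R_{1,0}^\Upsilon:=\Upsilon_1\R_{1,0}\Upsilon_0^{-1}$), and then check that the generators and relations of Definition~\ref{def:alg Dirac} are respected. The main difference is that you spell out the ``direct inspection'' the paper leaves implicit, in particular the intertwining $\R_{1,0}^\oplus\Gamma_0=\Gamma_1\R_{1,0}^\oplus$ needed for Hermiticity and the use of unitarity (from Proposition~\ref{prop:conserv scal prod}) for the CAR relation, which is a welcome expansion of the same argument.
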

\begin{proof}
Let $\Upsilon_\alpha:\S\M_\alpha\to\S^*\M_\alpha$ the adjunction map defined in~\eqref{eq:adj map} between the spinor and cospinor bundle over $\M_\alpha$ and set $\R_{1,0}^\Upsilon:=\Upsilon_1 \R_{1,0}\Upsilon_0^{-1}$.
Then $\R_{1,0}^\Upsilon$ implements an isomorphism between $\Upsilon_0\sol_{sc,\textsc{mit}}(\Dir_0)$ and $\Upsilon_1\sol_{sc,\textsc{mit}}(\Dir_1)$.
On account of Proposition \ref{prop: moeller for dirac with MIT bc} $\R_{1,0}^\oplus:=\R_{1,0}\oplus\R_{1,0}^\Upsilon\colon\sol_{sc,\textsc{mit}}^\oplus\to\sol_{sc,\textsc{mit}}^\oplus$ is a unitary isomorphism.
By direct inspection, the linear map $\fR_{1,0}\colon\fA_0\to\fA_1$ defined by $\fR_{1,0}\Xi(\psi):=\Xi(\R_{1,0}^\oplus\psi)$ extends to the seen $*$-isomorphism.
\end{proof}

\begin{remark}
The algebra of Dirac fields  with MIT boundary condition cannot be considered as an algebra of observables, since observables are required to commute at spacelike separations and $\fA$ does not fulfil such requirement.
A good candidate as algebra of observables is the subalgebra $\fA_{\text{obs}} \subset\fA$ consisting of elements which are even, \ie invariant by replacement $\Xi(\psi)\mapsto-\Xi(\psi)$, and invariant under the action of $\Spin_0(1,n)$ (extended to $\fA$).
For further details we refer to~\cite{DHP}.
\end{remark}

\subsection{Hadamard states}\label{sec:Hadam}

In this section we study (algebraic) states and their interplay with the $\ast$-isomorphism $\mathfrak{R}_{1,0}$.
\begin{definition}
Given a complex $*$-algebra $\mathfrak{A}$ we call  \textit{(algebraic) state} any linear functional from $\mathfrak A $ into $\CC$ that is positive, \ie $\omega(\aa^*\aa)\geq 0$ for any $\aa\in\fA $, and normalized, \ie $\omega(1_\fA)=1$. 
\end{definition}

Due to the natural grading on the algebra of Dirac fields  with MIT boundary conditions $\fA$, it suffices to define $\omega$ on the monomials.
Among all states, the so-called quasi-free states play a distinguished role.

\begin{definition}\label{def:quasifree}
A state $\omega$ on $\fA$ is \textit{quasifree} if it satisfies
\begin{align*}
\omega(\Xi(\psi_1)\cdots\Xi(\psi_n))
=\begin{cases}
0\qquad n\textrm{ odd}\\
\sum\limits_{\sigma \in S'_n} (-1)^{\text{\rm{sign}}(\sigma)} \prod\limits_{i=1}^{n/2}
\;\omega\left(
\Xi(\psi_{\sigma(2i-1)})
\Xi(\psi_{\sigma(2i)}) \right)\quad n \textrm{ even}
\end{cases}\,,
\end{align*}
where~$S'_n$ denotes the set of ordered permutations of $n$ elements.
\end{definition}
As shown in \cite[Lemma 3.2]{araki}, for any quasi-free state $\omega$ on the $C^*$-algebra $\fA$ there exists a bounded operator $Q_\omega\in B(\sol_{sc,\textsc{mit}}^\oplus)$ on $\sol_{sc,\textsc{mit}}^\oplus$ such that $0\leq Q_\omega=Q_\omega^*\leq 1$, $Q_\omega+\Gamma Q_\omega\Gamma=\operatorname{Id}_{\sol_{sc,\textsc{mit}}^\oplus}$ and
\begin{align}\label{Eq: 2-point as an operator}
\omega(\Xi(\psi_1)^*\Xi(\psi_2))=(\psi_1,Q_\omega\psi_2)_{\sol_{sc,\textsc{mit}}^\oplus}\,.
\end{align}
From a different perspective, we can realize $\omega(\Xi(\psi_1)^*\Xi(\psi_2))$ in terms of distributions.
This turns out to be quite useful when looking for physically relevant states.
To this avail we observe that, an application of Proposition \ref{prop:Green} 
leads to 
$\sol_{sc,\textsc{mit}}^\oplus\simeq\left(\rquot{\Gamma_{c}(\S\M)}{\Dir\Gamma_{
c , \textsc { mit } } (\S\M)}\right)^{\oplus 2}$ ---\textit{cf.} Equation 
\eqref{Eq: characterization of solution space}--- the isomorphism being given 
by $\left(\rquot{\Gamma_{c}(\S\M)}{\Dir\Gamma_{
c , \textsc { mit } } (\S\M)}\right)^{\oplus 2}\ni([f_1],[f_2])\to 
\mathsf{G}f_1\oplus\Gamma \mathsf{G}f_2\in\sol_{sc,\textsc{mit}}^\oplus$.
In particular we can endow $\Gamma_{c}(\S\M)$ with the standard locally convex 
topology which induces a locally convex topology on the quotient 
$\rquot{\Gamma_{c}(\S\M)}{\Dir\Gamma_{
c , \textsc { mit } } (\S\M)}$.
With this choices the map $\left(\rquot{\Gamma_{c}(\S\M)}{\Dir\Gamma_{
c , \textsc { mit } } (\S\M)}\right)^{\oplus 2}\to\sol_{sc,\textsc{mit}}^\oplus$ 
turns out to be continuous, so that to any quasi-free state we may associate its 
2-point distributions $\omega^{(2)}\in\Gamma_{c}((\S\M\oplus\S\M)^2)'$ defined 
by
\begin{align*}
\omega^{(2)}(f_1,f_2):=\omega(\Xi(\psi_{f_1})^*\Xi(\psi_{f_2}))\,.
\end{align*}
where $\psi_f\in\sol_{sc,\textsc{mit}}^\oplus$ is the element associated with 
$[f]\in\left[\rquot{\Gamma_{c}(\S\M)}{\Dir\Gamma_{
c , \textsc{mit} } (\S\M)}\right]^{\oplus 
2}$.
In particular, we have that the $2$-point distribution is a solution of the Dirac equation with MIT boundary conditions, meaning that
\begin{align}\label{Eq: dynamics for 2-point functions}
\omega^{(2)}(f_1,(\Dir\oplus\Dir) f_1)=0
\qquad\forall  f_1,f_2\in\Gamma_{c,\textsc{mit}}(\S\M\oplus\S\M)\,.
\end{align}
Notice that, due to the CAR relations, Equation \eqref{Eq: dynamics for 2-point functions} cannot be strengthened to hold true for all $f_1,\ldots,f_n\in\Gamma_{c}(\S\M\oplus\S\M)$.

A widely accepted criterion to select physically relevant states is the renowned \textit{Hadamard condition} \cite{Kay-Wald-91,Radzikowski-96,Radzikowski-Verch-96,Sahlmann-Verch-01}.
On a globally hyperbolic spacetime with empty boundary, the latter allows to construct Wick polynomials in a local and covariant fashion, moreover, it guarantees the finiteness of the fluctuations of such Wick polynomials \cite{Fewster-Verch-13}.

At a technical level, the Hadamard condition characterizes the wave front set 
$\operatorname{WF}(\omega^{(2)})\subseteq\T^*\M^2$ of the 2-point function of a 
quasi-free state ---generalization to non-quasi free states are possible 
\cite{Sanders-10}.
Such a microlocal characterization is also possible for the case of a globally hyperbolic manifold with timelike boundary: therein the Hadamard condition has been formulated in \cite{Michal} for the case of asymptotically Anti-de Sitter spacetimes and then exploited in \cite{DappiaMarta} for a wider class of boundary conditions.
In these situations the proper replacement for $\operatorname{WF}(\omega^{(2)})$ is given by $\operatorname{WF}_b(\omega^{(2)})\subset\,^b\T^*\M^2\setminus\{0\}$, where $\operatorname{WF}_b$ stands for the $b$-wave front set \cite{Melrose-93}.

\begin{definition}\label{def: Hadamard state}
Let $(\M, g)$ be a globally hyperbolic spin manifold with timelike boundary. A bidistribution $\omega^{(2)} \in \Gamma_{c}(\S\M\oplus \S\M)'$ is called of \textit{Hadamard form} if it has the
following $b$-wave front set
$$\operatorname{WF}_b(\omega^{(2)})=\{(x,y,k_x,-k_y)\in\T^*(\M\times\M)\backslash\{0\}|\ (x,k_x)\sim(y,k_y),\ k_x\rhd 0\},
$$
where $\sim$ entails that $(x,k_x)$ and $(y,k_y)$ are connected by a generalized broken bicharacteristic while that $k_x\rhd 0$ means that the covector $k_x$ at $x\in\M$ is future pointing. 
Since we deal with vector-valued distributions, the standard convention for the wave front set is to take the union of the wave front set of its components in an arbitrary but fixed local frame.
\end{definition}
  For further details on Hadamard states on globally hyperbolic manifolds with empty boundary we refer to~\cite{gerard,Gerard-Stoskopf-21-1,IgorValter}, while on globally hyperbolic manifolds with timelike boundary, we refer to~\cite{DappiaMarta, Michal,GannotWrochna}.\medskip

With the next theorem, we show that the pull-back of a quasifree state along the isomorphism $\mathfrak{R}_{1,0}:\fA_0 \to \fA_1$ induced by the unitary M\o ller operator $\R$ for $\Dir$  preserves the singularity structure of the two-point distribution $\omega^{(2)}$.

\begin{theorem}\label{thm:main appl}
Assume that $g_0,g_1\in\mathcal{GH}_\M$ fulfil (i) in the Setup \ref{setup}.
Assume furthermore that a propagation of singularities theorem holds true for $\Dir$ with MIT boundary condition, namely for any $u\in\sol_{\textsc{mit}}(\Dir)$, $\operatorname{WF}_b(u)$ is the union of maximally extended generalized broken bicharacteristics.
Denote with $\fA_\alpha$, $\alpha=0,1$, the algebras of Dirac fields with MIT boundary condition on $\M_\alpha$ and let $\omega_\alpha:\fA_\alpha\to\CC$ be quasifree states satisfying
$$\omega_0=\omega_1 \circ \mathfrak{R}_{1,0}\,: \fA_0 \to \CC$$
with $\mathfrak{R}_{1,0}$ is the isomorphism induced by $\R$ as per Theorem \ref{thm:alg iso}.
Then, if $\omega_1$ is a Hadamard state as per Definition \ref{def: Hadamard state}, then so is $\omega_0$.
\end{theorem}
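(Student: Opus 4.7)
The plan is to derive the Hadamard structure of $\omega_0^{(2)}$ from that of $\omega_1^{(2)}$ via the $\ast$-isomorphism $\mathfrak{R}_{1,0}$, and to invoke the propagation of singularities hypothesis to globalise a local identification. Since $\mathfrak{R}_{1,0}$ is a $\ast$-isomorphism and $\omega_0=\omega_1\circ\mathfrak{R}_{1,0}$, the pull-back $\omega_0$ is automatically a quasifree state on $\fA_0$, and its two-point distribution is a bisolution of $\Dir_0$ with MIT boundary conditions. Under the standing hypothesis, $\operatorname{WF}_b(\omega_0^{(2)})$ is therefore already known to be a union of maximally extended generalized broken bicharacteristics of $\Dir_0\otimes\Dir_0$. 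Consequently, it suffices to identify this set with the Hadamard set of $\Dir_0$ on an open neighbourhood of a single spacelike Cauchy hypersurface $\Sigma^\ast\subset\M$.

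For this local identification one exploits the factorisation $\R_{1,0}=\R_{1,\chi}\circ\R_{\chi,0}$ from Remark \ref{rmk:dec}. Pick any $\Sigma^\ast\subset J^-_{g_1}(\Sigma_-)$. Then for every $\Psi_0\in\sol_{sc,\textsc{mit}}(\Dir_0)$ the intermediate section $\Psi_\chi=U_{\Dir_{\chi,1}^f,-}\rho_-\bkappa_{1,0}\Psi_0$ and the smooth section $\bkappa_{1,0}\Psi_0$ both solve the same Cauchy problem for $\Dir_{\chi,1}^f=\Dir_{0,1}^f$ on $J^-_{g_1}(\Sigma_-)$ with identical initial data on $\Sigma_-$, so they coincide there by well-posedness (Theorem \ref{thm:main1}); that is, $\R_{\chi,0}\Psi_0=\bkappa_{1,0}\Psi_0$ near $\Sigma^\ast$. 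Combining this with the pull-back relation $\omega_0(\Xi(\psi)^\ast\Xi(\phi))=\omega_1(\Xi(\R_{1,0}\psi)^\ast\Xi(\R_{1,0}\phi))$, the wave front set of $\omega_0^{(2)}$ on a neighbourhood of $\Sigma^\ast\times\Sigma^\ast$ is obtained from that of the intermediate two-point function $\omega_\chi^{(2)}:=\omega_1^{(2)}\circ(\R_{1,\chi}\otimes\R_{1,\chi})$ by the canonical transformation induced by the smooth bundle isomorphism $\bkappa_{1,0}$, which leaves $\operatorname{WF}_b$ invariant as a subset of $^bT^\ast(\M\times\M)\setminus\{0\}$.

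The compatibility of Hadamard structures on the two sides is then governed by the linear isometry $\wp_{1,0}\colon T^\ast\M_0\to T^\ast\M_1$ from assumption \eqref{assumption:S01} of Setup \ref{setup}. This isometry intertwines the principal symbols $\sigma_{\Dir_{0,1}^f}$ and $\sigma_{\Dir_1}$, is time-orientation preserving (Remark \ref{rmk:JC}), and by Lemma \ref{lem:wpn0} satisfies $h_1(\wp_{1,0}\n_1,\n_1)>0$ along $\bM$, so that it does not reverse the outward normal covector. Together with the $\bkappa_{1,0}$-invariance of the MIT boundary space (Proposition \ref{prop: moeller for dirac with MIT bc}), this implies that generalized broken bicharacteristics of $\Dir_0$ and of $\Dir_1$ correspond bijectively under $\wp_{1,0}$, respecting both the reflection rule at $\bM$ and the future-pointing orientation. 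Hence the Hadamard set of $\Dir_1$ is sent to the Hadamard set of $\Dir_0$ on the neighbourhood of $\Sigma^\ast\times\Sigma^\ast$ over which the identification takes place.

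The main obstacle is the microlocal analysis of $\omega_\chi^{(2)}$ near $\Sigma^\ast$: although $\omega_\chi^{(2)}$ is Hadamard w.r.t.\ $\Dir_1$ on $J^+_{g_1}(\Sigma_+)$ (there $\Dir_{\chi,1}^f=\Dir_1$ and $\R_{1,\chi}$ acts trivially), its Hadamard structure on $J^-_{g_1}(\Sigma_-)$ has to be deduced by propagating singularities through the region where the interpolating operator $\Dir_{\chi,1}^f$ differs from both $\Dir_0$ and $\Dir_1$, which does not a priori fall within the scope of the standing propagation of singularities assumption. This is overcome by observing that $\Dir_{\chi,1}^f$ is itself a symmetric weakly-hyperbolic system of constant characteristic with the same principal symbol as $\Dir_1$ up to the action of $\wp_{1,0}$, so that the analogous propagation statement holds verbatim, and the compatibility with the MIT reflection at $\bM$ follows from the self-adjointness of the interpolating boundary space $\oB_\chi$. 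Once this step is in place, the propagation of singularities hypothesis for $\Dir_0$ extends the Hadamard form of $\operatorname{WF}_b(\omega_0^{(2)})$ from the neighbourhood of $\Sigma^\ast\times\Sigma^\ast$ to all of $\M\times\M$, completing the proof.
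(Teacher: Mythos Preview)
Your approach is essentially the paper's: factor $\R_{1,0}=\R_{1,\chi}\circ\R_{\chi,0}$, identify two-point functions locally near a Cauchy surface where one of the factors acts trivially or as a smooth bundle map, and invoke propagation of singularities to globalise. The paper does this in two steps, first showing that $\omega_\chi:=\omega_1\circ\fR_{1,\chi}$ agrees with $\omega_1$ near $\Sigma_+$ (hence is Hadamard there, hence everywhere by propagation), and then that $\omega_0$ agrees with $\omega_\chi$ up to $\bkappa_{1,0}$ near $\Sigma_-$. You run the same scheme, and you are right to flag explicitly that the intermediate propagation step requires the hypothesis for $\Dir_{\chi,1}^f$ with boundary space $\oB_\chi$, a point the paper invokes only tacitly.

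There is, however, a misstep in your third paragraph. The claim that $\wp_{1,0}$ induces a bijection between generalized broken bicharacteristics of $\Dir_0$ and of $\Dir_1$ is neither correct nor needed. The map $\wp_{1,0}$ is a fibrewise linear isometry $T^*_p\M_0\to T^*_p\M_1$; while it carries the $g_0$-null cone to the $g_1$-null cone pointwise, the induced bundle map on $T^*\M$ covering $\operatorname{id}_\M$ is not a symplectomorphism in general and hence does not intertwine the Hamiltonian flows of $g_0^\sharp$ and $g_1^\sharp$. What you actually use near $\Sigma^\ast$ is simpler: since $\bkappa_{1,0}$ is a smooth vector-bundle isomorphism covering $\operatorname{id}_\M$, it leaves $\operatorname{WF}_b$ unchanged as a subset of $^bT^*(\M\times\M)$; and since there $\Dir_{\chi,1}^f=\Dir_{0,1}^f=\bkappa_{1,0}\Dir_0\bkappa_{0,1}$, the characteristic set and bicharacteristic flow of $\Dir_{\chi,1}^f$ coincide with those of $\Dir_0$ (both governed by $g_0$). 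No comparison with $\Dir_1$-bicharacteristics via $\wp_{1,0}$ is required. A related imprecision occurs in your final paragraph: the principal symbol of $\Dir_{\chi,1}^f$ is $\sigma_{\Dir_1}\big((1-\chi)\wp_{1,0}\xi+\chi\xi\big)$, a genuine $\chi$-interpolation, not ``the same as $\Dir_1$ up to $\wp_{1,0}$''. Once these points are corrected, your argument goes through and coincides with the paper's.
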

\begin{proof}
Since $\fR_{1,0}$ preserves the grading of $\mathfrak{A}_0$, $\mathfrak{A}_1$, $\omega_0$ inherits the property of being a quasifree state from $\omega_1$. 
In particular the two-point function $\omega_0^{(2)}$ satisfies
\begin{align*}
\omega_0^{(2)} \left(f_0,g_0\right)
=\omega_0 \left(\Xi(\psi_{f_0})^*\Xi(\psi_{g_0})\right)
=\omega_1^{(2)} \left(\Xi(\mathsf{R}_{1,0}^\oplus\psi_{f_0})^*\Xi(\mathsf{R}_{1,0}^\oplus\psi_{g_0})\right)\,.
\end{align*}
We shall now prove that $\omega_1$ fulfils the Hadamard condition.
To this avail we first observe that $\mathsf{R}_{1,0}$ can in fact be decomposed as $\mathsf{R}_{1,0}=\mathsf{R}_{1,\chi}\circ\mathsf{R}_{\chi,0}$ (\cf Remark~\ref{rmk:dec}).
With reference to Theorem \ref{thm:Moller}, we have $\mathsf{R}_{\chi,0}:=U_{\Dir^f_{\chi,1},-}\circ\rho_-\circ\bkappa_{1,0}$ whereas $\mathsf{R}_{1,\chi}:=U_{\Dir_1,+}\circ\rho_+$.
Let us consider $\fR_{1,\chi}\colon\fA_\chi\to\fA_1$, where $\fR_{1,\chi}$ is the $*$-isomorphism defined as per Theorem \ref{thm:alg iso} with $\fA_0$ replaced with $\fA_\chi$.
Moreover let $\omega_\chi:=\omega_1\circ\fR_{1,\chi}$.
With reference to Theorem \ref{thm:Moller}, let $f_1,f_2\in\Gamma_{c}(\S\M\oplus\S\M)$ be with support contained in a neighbourhood of $\Sigma_+$.
Then
\begin{align*}
\omega_\chi^{(2)}(f_1,f_2)
&=\omega_\chi(\Xi(\mathsf{G}_\chi f_1)^*\Xi(\mathsf{G}_\chi f_2))
\qquad\textrm{def. }\omega_\chi^{(2)}
\\&=\omega_1(\Xi(\mathsf{R}_{1,\chi}^\oplus \mathsf{G}_\chi f_1)^*\Xi(\mathsf{R}_{1,\chi}^\oplus \mathsf{G}_\chi f_2))
\qquad\textrm{def. }\fR_{1,\chi}
\\&=(\mathsf{R}_{1,\chi}\mathsf{G}_\chi f_1,Q_{\omega_1}\mathsf{R}_{1,\chi}\mathsf{G}_\chi f_2)_{\sol_{sc,\textsc{mit}}(\Dir_1)}
\qquad\textrm{Eq. }\eqref{Eq: 2-point as an operator}
\\&=(\rho_-\mathsf{R}_{1,\chi}^\oplus \mathsf{G}_\chi f_1,Q_{\omega_1}\rho_-\mathsf{R}_{1,\chi}^\oplus \mathsf{G}_\chi f_2)_{\Sigma_+}
\qquad\textrm{choice of }\Sigma_+
\\&=(\rho_-\mathsf{G}_\chi f_1,Q_{\omega_1}\rho_-\mathsf{G}_\chi f_2)_{\Sigma_+}
\qquad\rho_-\mathsf{R}_{1,\chi}=\rho_-
\\&=(\rho_-\mathsf{G}_1 f_1,Q_{\omega_1}\rho_-\mathsf{G}_1 f_2)_{\Sigma_+}
\\&=\omega_1^{(2)}(f_1,f_2)\,,
\end{align*}
where we exploited the fact that, when computing $(\,,\,)_{\sol_{sc,\textsc{mit}}^\oplus}$, we may choose $\Sigma$ arbitrarily.
In the second to last equation we used that $\mathsf{G}_\chi f|_{\Sigma_+}=\mathsf{G}_1f|_{\Sigma_+}$ for $f$ supported in a small enough neighbourhood of $\Sigma_+$.
This shows that $\omega_\chi^{(2)}$ coincides with $\omega_1^{(2)}$ in a neighbourhood of $\Sigma_+$ and therefore fulfils the Hadamard condition therein.
By the assumed propagation of singularities, it follows that $\omega_\chi^{(2)}$ fulfils the Hadamard condition on $\M$.

By observing that $\omega_1=\omega_\chi\circ\fR_{\chi,0}$ and proceeding with a similar argument we have that $\omega_1$ fulfils the Hadamard condition. 
\end{proof}

\begin{remark}
We expect the propagation of singularities to hold true, as there are already positive results in this direction, see \textit{e.g.} \cite{DappiaMarta, GannotWrochna, Melrose-Sjostrand-78,Melrose-Sjostrand-82,Vasy-08} for the scalar wave equation, \cite{mosterbook,Taylor-75} for first order systems, and \cite{Baskin-Wunsch-20} for the Dirac-Coulomb system.
We postpone its investigation to a forthcoming paper.
\end{remark}

We have finally all the tools to prove the existence of Hadamard states.

\begin{proof}[Proof of Theorem~\ref{thm:Hadapl}]
Let $t$ be a Cauchy temporal function for $g$ and define $g_u:=-dt^2+h$, where $h$ is a complete Riemannian metric on $t^{-1}(s)$ for every $s\in\RR$.
On account of~\cite[Proposition 2.23]{defargNor}, there exists a globally hyperbolic metric $\overline{g}$ such that $J^+_{\overline{g}} \subset J^+_{g_u}\cap J^+_{g}$.
Denote with $\S\M_{\overline{g}}$ the spinor bundle over $(\M,\overline{g})$ and consider the linear isometries  
$$\kappa^{f'}_{\overline{g},g}\colon \S\M_g\to \S\M_{\overline{g}} \qquad \kappa^{f''}_{\overline{g},g_u}\colon \S\M_{g_u}\to \S\M_{\overline{g}}$$ 
defined as in Section~\ref{sec:kappa}. It is easy to see that the operators 
$$\Dir_{g,\overline{g}}^{f'}:=\kappa_{g,\overline{g}}^{f'}\Dir_{\overline g}\kappa_{\overline{g},g}^{f'}:\Gamma(\S\M_g)\to \Gamma(\S\M_g) \quad \text{ and }\quad \Dir_{\overline{g},g_u}^{f''}:=\kappa_{g_u,\overline{g}}^{f''}\Dir_{\overline g}\kappa_{\overline{g},g_u}^{f''} : \Gamma(\S\M_{g_u})\to\Gamma(\S\M_{g_u})$$ are weakly-hyperbolic on $(\M,g)$ and $(\M,g_u)$ respectively, so we can construct a unitary M\o ller operator $\R_{g_u,g}:\sol(\Dir_{g})\to \sol(\Dir_{g_u})$, composing the unitary M\o ller operators $\R_{\overline g, g_u}:\sol(\Dir_{g_u})\to \sol(\Dir_{\overline g})$ and $\R_{g,\overline g}:\sol(\Dir_{\overline g})\to \sol(\Dir_{ g})$ obtained using the same arguments as in Sections~\ref{sec:Moller} and~\ref{sec:conserv}.
In particular, we can lift the action of the unitary M\o ller operator to a $*$-isomorphism between
 the algebra of Dirac fields on $(\M,g)$ and $(\M,g_u)$ respectively.
Hence for any Hadamard state $\omega_H$ on $\fA_u$, the state defined by
$$\omega=\omega_H \circ \mathfrak{R}_{1,0}\,: \fA \to \CC\,,$$
is also a Hadamard state on account of Theorem \ref{thm:main appl}.

It remains to show that there exists a Hadamard state $\omega_H$ for $\fA_u$.
For that, we shall define a quasi-free state by identifying a suitable operator $Q_\omega$ and then exploiting Equation \eqref{Eq: 2-point as an operator}.
In order to construct the desired $Q_\omega$, let us write the Dirac equation as $\Dir=\sigma(dt)\partial_t + \L$, where $\L$ differentiates only in the tangential part of $\Sigma$.
Since we coupled $\Dir$ with self-adjoint boundary condition, it follows that $\L$ is skew-adjoint.
As a consequence we may define the self-adjoint operator $\H=i\L$. To obtain a pure, quasi-free state it is enough to define the operator
$$Q_\omega:= P_+(\H) \oplus \Id_\mathcal{H} - \Upsilon P_+(\H) \Upsilon^{-1}$$
where  $P_+(\H)$ is the spectral projection in the positive spectrum of $\H$.
It is not difficult to see, that on globally hyperbolic ultrastatic manifolds with empty boundary, the associated quasi-free state is of the Hadamard form, since it provides the canonical frequencies splitting.
This concludes our proof.
\end{proof}

\begin{remark}
The main drawback of the definition of the M\o ller $*$-isomorphism $\mathfrak{R}$, used in Theorem~\ref{thm:main appl}, is the lack of any control on the action of the group of $*$-automorphism induced by the isometry group of $\M$ on $\omega_2$. Let us remark, that the study of invariant states is a well-established research topic (\cf \cite{NCTori,NCspinoff}). Indeed, the type of factor can be inferred by analyzing which and how many states are invariant. From a more physical perspective instead, invariant states can represent equilibrium states in statistical mechanics e.g. KMS-states or ground states.
\end{remark}
The previous remark leads us to the following open question: Under which conditions it is possible to perform an adiabatic limit, namely when is $ \lim\limits_{\chi\to 1} \omega_1$ well-defined? \medskip

A priori we expect that there is no positive answer in all possible scenarios, since
it is known that certain free-field theories, e.g., the massless and minimally coupled (scalar or Dirac) field on four-dimensional de Sitter spacetime, do not possess a ground state, even though their massive counterpart does.
Note that this is not a no-go Theorem, but at least an indication that, in these situation, the map $\omega\to\omega\circ\fR$ cannot be expected to preserve the ground state property.
A partial investigation in this direction has been carried on in \cite{Moller,Moller2} for the case of a scalar field theory on globally hyperbolic spacetimes with empty boundary. In this situation it has been shown that, under suitable hypotheses the adiabatic limit can be performed preserving the invariance property under time translation but spoiling in general the ground state or KMS property.
\medskip

Since our results depends only on the principal symbol of the Dirac operator and on the chosen boundary condition, we conclude our paper with the following corollary.

\begin{corollary}
Let $(\M,g)$ be a globally hyperbolic spin spacetime with timelike boundary and let $\Dir_{\mathcal{V}}=\Dir + \mathcal V$ be the Dirac operator coupled with a external skew-symmetric potential $\mathcal{V}\in\text{End}(\S\M)$ and with the MIT boundary condition.
Then there exists a  state for the algebra of Dirac fields with MIT boundary conditions which satisfies the Hadamard condition.
\end{corollary}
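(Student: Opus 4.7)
The plan is to reduce the claim to Theorem~\ref{thm:Hadapl} by exploiting the fact that $\Dir_{\mathcal V}$ and $\Dir$ share the same principal symbol, so every condition used in the proof of Theorem~\ref{thm:Hadapl} that depends only on the principal symbol---constant characteristic, admissibility and self-adjointness of the MIT boundary space, validity of the propagation of singularities assumption, and compatibility with the M\o ller construction---transfers directly, while every condition that depends on formal skew-adjointness survives because $\mathcal V$ is skew-symmetric with respect to $\fiber{\cdot}{\cdot}$.

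First I would check that the algebraic framework of Section~\ref{sec:CAR} still applies: since $\sigma_{\Dir_{\mathcal V}}=\sigma_\Dir$, the operator $\Dir_{\mathcal V}$ is a nowhere characteristic symmetric hyperbolic system, the MIT projections $\pi^\pm_{\rm MIT}=\frac{1}{2}(\mathrm{Id}\pm\imath\gamma(\n))$ still define a self-adjoint admissible boundary space as in Section~\ref{sec:MIT}, and $\Dir_{\mathcal V}$ remains formally skew-adjoint so that Lemma~\ref{lem:indip Sigma} yields a well-defined pre-Hilbert structure on $\sol_{sc,\textsc{mit}}(\Dir_{\mathcal V})$ and the CAR algebra $\fA_{\mathcal V}$ is constructed exactly as in Definition~\ref{def:alg Dirac}.

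Next I would repeat the two-step deformation used in the proof of Theorem~\ref{thm:Hadapl}: pick a Cauchy temporal function $t$ for $g$, set $g_u:=-dt^2+h$ with $h$ a complete Riemannian metric on each slice, and choose an auxiliary globally hyperbolic metric $\overline g$ with $J^+_{\overline g}\subset J^+_{g_u}\cap J^+_g$. The M\o ller construction of Section~\ref{sec:Moller} applies verbatim with $\Dir_{\mathcal V}$ in place of $\Dir$, because the example following Theorem~\ref{thm:Moller} shows that zero-order differences between two symmetric weakly-hyperbolic systems of constant characteristic are absorbed by the interpolating operator $\oS_{\chi,1}^f$, and the self-adjoint MIT boundary condition is common to all the operators involved. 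Composing through $\overline g$ and lifting to the CAR algebras as in Theorem~\ref{thm:alg iso} produces a unitary M\o ller operator $\R\colon\sol_{sc,\textsc{mit}}(\Dir_{\mathcal V})\to\sol_{sc,\textsc{mit}}(\Dir_{g_u})$ and a $*$-isomorphism $\fR\colon\fA_{\mathcal V}\to\fA_u$.

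Finally I would construct the reference Hadamard state on $\fA_u$ as at the end of the proof of Theorem~\ref{thm:Hadapl}, namely by splitting $\Dir_{g_u}=\gamma_u(dt)\partial_t+\L_u$, setting $\H:=\imath\L_u$, and defining the quasi-free state $\omega_H$ associated with $Q_{\omega_H}:=P_+(\H)\oplus\mathrm{Id}-\Upsilon P_+(\H)\Upsilon^{-1}$. The composition $\omega_H\circ\fR$ is then a quasi-free state on $\fA_{\mathcal V}$, whose Hadamard property follows from Theorem~\ref{thm:main appl}, since that result uses only principal-symbol data and the self-adjointness of the boundary condition. The main obstacle, beyond a careful bookkeeping of the algebraic identifications, is to verify that the conjugated potential $\bkappa\mathcal V\bkappa^{-1}$ remains skew-symmetric on each intermediate bundle, so that Proposition~\ref{prop:Schi skewad}---and hence the unitarity of $\R$ at every stage of the construction---continues to hold; this is immediate because each $\kappa$ introduced in Section~\ref{sec:kappa} is a fiberwise linear isometry, so skew-symmetry is preserved under conjugation.
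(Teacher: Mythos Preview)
Your proposal is correct and is precisely an unpacking of the paper's own justification: the paper offers no separate proof but simply records, immediately before the corollary, that ``our results depend only on the principal symbol of the Dirac operator and on the chosen boundary condition,'' and your argument spells out why each ingredient of the proof of Theorem~\ref{thm:Hadapl} (symmetric hyperbolicity, MIT admissibility and self-adjointness, formal skew-adjointness, the M\o ller machinery, and the propagation-of-singularities hypothesis) is insensitive to the skew-symmetric zero-order perturbation~$\mathcal V$. The one point worth making explicit is that the corollary, like Theorem~\ref{thm:Hadapl}, tacitly retains the propagation-of-singularities assumption; you correctly note that this transfers because it is a principal-symbol condition.
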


\vspace{0.5cm}

\end{document}